\documentclass[twocolumn,pra,nofootinbib]{revtex4-1}

\usepackage{amsthm}
\usepackage{amsmath}
\usepackage{bbold}

\usepackage{tikz}
\usepackage{pgfplots}
\usepackage{pgf}
\usetikzlibrary{patterns}
\usepackage{float}
\usepackage{graphicx}

 \theoremstyle{plain}
  \newtheorem{thm}{Theorem}
  
  \newtheorem{lem}[thm]{Lemma}
  \newtheorem{cor}[thm]{Corollary}
  
  \newtheorem*{thm*}{Theorem}
  \newtheorem*{prop*}{Proposition}
  \newtheorem*{lem*}{Lemma}
  \newtheorem*{cor*}{Corollary}
  \newtheorem*{remark*}{Remark}

\theoremstyle{definition}
\newtheorem{defn}[thm]{Definition}
 \newtheorem*{conj*}{Conjecture}

\begin{document}

\title{de Finetti reductions for correlations}

\author{Rotem Arnon-Friedman}
\author{Renato Renner}
\affiliation {Institute for Theoretical Physics, ETH-Z\"urich, CH-8093, Z\"urich, Switzerland}

\begin{abstract}

When analysing quantum information processing protocols one has to deal with large entangled systems, each consisting of many subsystems. To make this analysis feasible, it is often necessary to identify some additional structure. de Finetti theorems provide such a structure for the case where certain symmetries hold. More precisely, they relate states that are invariant under permutations of subsystems to states in which the subsystems are independent of each other. This relation plays an important role in various areas, e.g., in quantum cryptography or state tomography, where permutation invariant systems are ubiquitous. 
The known de Finetti theorems usually refer to the internal quantum state of a system and depend on its dimension. Here we prove a different de Finetti theorem where systems are modelled in terms of their statistics under measurements. This is necessary for a large class of applications widely considered today, such as device independent protocols, where the underlying systems and the dimensions are unknown and the entire analysis is based on the observed correlations.

\end{abstract}

\maketitle

\section{Introduction}

The analysis of quantum information processing protocols is a challenging task. Let it be a quantum tomography process, transmission of quantum information over a noisy channel or a cryptographic protocol -- all need to be analysed under general conditions. Since one usually has limited information about the actual quantum state given as input, the analysis should be valid for any given quantum state. For example, a cryptographic protocol should be proven secure independently of the input state, which can be chosen by a malicious adversary. As the space of all possible states can be very large and the structure of the states therein might be complicated due to entanglement, this task can be tedious in the good case, and infeasible in the worst. 

The quantum de Finetti theorems \cite{hudson1976locally, raggio1989quantum, caves2002unknown, renner2007symmetry} and the post selection theorem \cite{christandl2009postselection} address the above problem, by exploiting the symmetry of the considered states, namely permutation invariance. These mathematical tools allow us to simplify the analysis of quantum information processing tasks by reducing permutation invariant quantum states to a more structured state, called the quantum de Finetti state.
In general, we say that a state is of de Finetti-type if it is a convex combination of independent identically distributed (i.i.d.\@) states. 

de Finetti states are usually much easier to handle than general states due to their simple structure. Moreover, most established information-theoretic techniques can be applied only to i.i.d\ states, and therefore while not applicable when considering a general state, they can be used when considering de Finetti states. Therefore, a reduction to such states can simplify calculations and proofs of various quantum information processing tasks. Indeed, one of the famous applications of reductions to de Finetti states is a proof which states that in order to establish security of quantum key distribution against general attacks it is sufficient to consider attacks on individual signals~\cite{christandl2009postselection}. Other applications include quantum tomography \cite{christandl2012reliable} or quantum reverse Shannon coding \cite{berta2011reverse}. 

Unfortunately, the known variants of the quantum de Finetti theorems are not always applicable. A big class of protocols, commonly used in the past several years, to which those theorems are not applicable is the class of protocols in which the dimension of the states is unknown or cannot be bounded, and in particular, the class of device independent protocols (for a review on the topic, see for example \cite{scarani2012device, brunner2013bell}). The above mentioned theorems cannot be used in such cases for they depend on the dimension of the quantum state. 

In device independent cryptography \cite{mayers1998quantum,pironio2009device}, for example, one considers the devices as black boxes, about which we know nothing. The security of such protocols can therefore rely only on the observed statistics and not on the specific quantum states and measurements used in the protocol (in some protocols one does not even assume that the underlying physical system is restricted to be quantum! \cite{barrett2005no,hanggi2009quantum}). In these cases, one possible framework to work with is the framework of conditional probability distributions.  

Conditional probability distributions describe the operational behaviour of physical systems under measurements. That is, if we are only interested in modelling the measurement-outcome behaviour of our physical system, then the system can be described by a conditional probability distribution $\mathrm{P}_{A|X}$ where $X$ is the input, or the measurement performed on the system, and $A$ is the output. $\mathrm{P}_{A|X}(a|x)$ is the probability for outcome $a$ given that a measurement $x$ was made. We then say that $\mathrm{P}_{A|X}$ is the state of the system. Note that the state may have as many inputs and outputs as required and therefore we do not restrict the structure of the underlying system by describing it as a conditional probability distribution.

In quantum physics, for example, $\mathrm{P}_{A|X}$ is given by Born's rule. However, conditional probability distributions can also be used to describe states that might not conform with the theory of quantum physics, such as non-signalling states. Consider for example a state $\mathrm{P}_{AB|XY}$ shared by two space-like separated parties, Alice and Bob, each holding a subsystem of the state. $X$ and $A$ are then, respectively, the input and output of Alice, and $Y$ and $B$ of Bob. We then say that the state is non-signalling if it cannot be used to communicate, i.e., the output of one party is independent of the input of the other. The PR-box \cite{PR-box} is an example for a non-quantum bipartite state which can be written as a (non-signalling) conditional probability distribution. 

Given all the above, it is thus necessary to see whether de Finetti theorems are unique for quantum states or can be also proven on the level of the correlations in the framework of conditional probability distributions. More specifically, we are interested in a theorem that will allow us to reduce permutation invariant conditional probability distributions to a simple de Finetti-type conditional probability distribution, in a way that will be applicable in device independent protocols and, more generally, when the dimension of the underlying quantum states is unknown. Several different non-signalling de Finetti theorems have been established recently \cite{barrett2009finetti,christandl2009finite,brandao2012quantum}, but it is yet unknown how these can be applied to device independent cryptography\footnote{In most of these variants of de Finetti theorems, for example, it is assumed that the subsystems cannot signal each other. For current applications this is a too restrictive condition, since it is equivallent to assuming there is no memory in the devices.}. 

In this letter we prove a general de Finetti reduction theorem, from which we can derive several more specialised statements that are of interest for applications. The different reductions differ from one another in two main aspects -- the set of states to which they can be applied and the specific structure of the de Finetti state. Different de Finetti reductions can therefore be useful in different scenarios and under different assumptions. 

The simplest and most straightforward variant is a de Finetti reduction which can be applied to any permutation invariant conditional probability distribution. The second variant is a reduction which can be applied to a family of states which is relevant for cryptographic protocols based on the CHSH inequality \cite{CHSH} or the chained Bell inequalities \cite{braunstein1990wringing,Barrett2006Maximally}. There we connect any state $\mathrm{P}_{AB|XY}$ out of this family of states to a special \emph{non-signalling} de Finetti state $\tau^\mathcal{CHSH}_{AB|XY}$. We do not assume any non-signalling conditions between the subsystems of $\mathrm{P}_{AB|XY}$ and therefore the use of the de Finetti reduction is not restricted only to scenarios where each of the subsystems cannot signal each other. 

Up to date, almost all known device independent cryptographic protocols are based on the CHSH inequality or the more general chained Bell inequalities. For this reason we pay specific attention to states which are relevant for such protocols. However, our theorem can be applied also to other families of states which might be useful in future protocols. As an example of an application of our theorem we prove that for protocols which are based on the violation of the CHSH and chained Bell inequalities it is sufficient to consider the case where Alice and Bob share the de Finetti state $\tau^\mathcal{CHSH}_{AB|XY}$. We do this by bounding the distance between two channels which act on conditional probability distributions. 

In the following we start by describing and explaining the different de Finetti reductions. We then illustrate how the reductions can be used in applications. All the proofs are given in the Appendix.

\section{Results}

For stating the different de Finetti reductions we will need some basic definitions. $A$ and $X$ denote discrete random variables over $a \in \{0,1, ... , l-1\}^n$ and $x \in \{0,1, ... , m-1\}^n$ respectively. We use $[n]$ to denote the set $\{1,\dotsc,n\}$. An $n$-partite state $\mathrm{P}_{A|X}$ is a conditional probability distribution if for every $x$, $\sum_a \mathrm{P}_{A|X}(a|x)=1$ and for every $a, x$, $\mathrm{P}_{A|X}(a|x)\geq 0$. When we consider two different states $\mathrm{P}_{A|X}$ and $\mathrm{Q}_{A|X}$ it is understood that both states are over the same random variables $X$ and $A$. The de Finetti reductions deal with permutation invariant states and de Finetti states. Formally we define these as follows.
\begin{defn}\label{def:permutation}
	Given a state $\mathrm{P}_{A|X}$ and a permutation $\pi$ of its subsystems\footnote{Since we permute $a$ and $x$ together this is exactly as permuting the subsystems.} we denote by $\mathrm{P}_{A|X}\circ\pi$ the state which is defined by 
	\[
		\forall a,x \quad \left(\mathrm{P}_{A|X}\circ\pi \right) (a|x)=\mathrm{P}_{A|X}(\pi(a)|\pi(x)) \;.
	\]
	An $n$-partite state $\mathrm{P}_{A|X}$ is permutation invariant if for any permutation $\pi$, $\mathrm{P}_{A|X} = \mathrm{P}_{A|X}\circ\pi$.  
\end{defn}
As mentioned above, we say that a state is a de Finetti state if it is a convex combination of i.i.d.\ states. Formally, 
\begin{defn}
	A de Finetti state is a state of the form
	\[
		\tau_{A|X} = \int Q_{A_1|X_1}^{\otimes n} \mathrm{d}Q_{A_1|X_1}
	\]
	where $x_1\in \{0,1, ... , m-1\}$, $a_1 \in \{0,1, ... , l-1\}$, $\mathrm{d}Q_{A_1|X_1}$ is some measure on the space of 1-party states and $Q_{A_1|X_1}^{\otimes n}$ is a product of $n$ identical 1-party states $Q_{A_1|X_1}$, i.e., it is defined according to 
	\[
		Q_{A_1|X_1}^{\otimes n}(a|x) = \prod_{i\in[n]} Q_{A_1|X_1}(a_i|x_i) \;.
	\]
\end{defn}
As seen from the above definition, by choosing different measures $\mathrm{d}Q_{A_1|X_1}$ we define different de Finetti states. 

We are now ready to state the de Finetti reductions. For simplicity we start by giving the first corollary of the more general theorem (Theorem \ref{thm:post-selection}). This corollary is a reduction for conditional probability distributions, which connects general permutation invariant states to a specific de Finetti state. 

\begin{cor}[de Finetti reduction for conditional probability distributions]\label{cor:conditional}
	There exists a de Finetti state $\tau_{A|X}$ where $x \in \{ 0,1, ... ,m-1 \}^n$ and $a \in \{ 0,1, ... ,l-1 \} ^n$ such that for every permutation invariant state $\mathrm{P}_{A|X}$  
	\[
		\forall a,x \quad \mathrm{P}_{A|X} (a|x) \leq (n+1)^{m(l-1)} \; \tau_{A|X} (a|x)  \;.
	\]
\end{cor}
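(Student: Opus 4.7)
The plan is to construct the de Finetti state $\tau_{A|X}$ explicitly as a product of uniform simplex averages, one per input symbol. I will identify a $1$-party state $Q_{A_1|X_1}$ with the tuple $(Q^{(0)},\ldots,Q^{(m-1)})$ of probability vectors on $\{0,\ldots,l-1\}$ defined by $Q^{(x_1)}(a_1) := Q_{A_1|X_1}(a_1|x_1)$, and take $\mathrm{d}Q_{A_1|X_1}$ to be the product of the normalised Lebesgue (uniform) measures on the $m$ copies of the simplex $\Delta_l$. Setting $n_{x',a'}:=|\{i:x_i=x',\,a_i=a'\}|$ and $n_{x'}:=\sum_{a'} n_{x',a'}$, the integrand rewrites as $\prod_i Q^{(x_i)}(a_i) = \prod_{x'}\prod_{a'} Q^{(x')}(a')^{n_{x',a'}}$, which factorises across $x'$. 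Each factor is a classical Dirichlet integral on $\Delta_l$, yielding the closed form
\[
\tau_{A|X}(a|x) \;=\; \prod_{x'=0}^{m-1} (l-1)!\,\frac{\prod_{a'} n_{x',a'}!}{(n_{x'}+l-1)!}\;.
\]

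Next I will bound any permutation invariant $\mathrm{P}_{A|X}$ by a type / orbit argument. The permutations of $[n]$ that fix the input string $x$ are exactly those reshuffling positions within each group $\{i : x_i = x'\}$; these act transitively on the set of outputs $a'$ that share the joint type $\{n_{x',a'}\}$ with $a$, and by permutation invariance $\mathrm{P}_{A|X}(\,\cdot\,|x)$ is constant on this orbit. The orbit size is the product of multinomial coefficients $\prod_{x'} n_{x'}!/\prod_{a'} n_{x',a'}!$, and since $\sum_a \mathrm{P}_{A|X}(a|x) = 1$ I immediately get
\[
\mathrm{P}_{A|X}(a|x) \;\leq\; \prod_{x'=0}^{m-1} \frac{\prod_{a'} n_{x',a'}!}{n_{x'}!}\;.
\]

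Dividing this bound by the closed form for $\tau$, the factorials cancel and I obtain
\[
\frac{\mathrm{P}_{A|X}(a|x)}{\tau_{A|X}(a|x)} \;\leq\; \prod_{x'=0}^{m-1} \binom{n_{x'}+l-1}{l-1} \;\leq\; (n+1)^{m(l-1)}\;,
\]
using $n_{x'}\leq n$ in the last step, which is the claim. There is no serious obstacle; everything reduces to a Dirichlet integral and a multinomial counting. The only point warranting care is the orbit argument, since permutation invariance is stated for joint permutations of $a$ and $x$ and one has to restrict to the stabiliser of $x$ to conclude that $\mathrm{P}_{A|X}(a|x)$ depends on $a$ only through the joint type $\{n_{x',a'}\}_{a'}$. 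In hindsight the exponent $m(l-1)$ is structurally transparent: $m$ is the number of independent input channels, and $l-1$ is the dimension of the output simplex per channel.
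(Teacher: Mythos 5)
Your proof is correct and follows essentially the same strategy as the paper's: construct an explicit de Finetti state by integrating $Q_{A_1|X_1}^{\otimes n}$ over a product measure, lower-bound its entries via a Dirichlet/Beta integral, upper-bound $\mathrm{P}_{A|X}(a|x)$ by counting the orbit of $a$ under permutations stabilising $x$, and divide. The only (immaterial) difference is your choice of the uniform measure on each output simplex, which gives an exact closed form $\prod_{x'}(l-1)!\,\prod_{a'}n_{x',a'}!/(n_{x'}+l-1)!$ in place of the paper's nested measure $\prod_i \mathrm{d}p_i/c_i$ and its inductive lower bound; both yield the same pre-factor $(n+1)^{m(l-1)}$.
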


The de Finetti state $\tau_{A|X}$ is an \emph{explicit} state that we construct in the proof of the general theorem in Appendix~\ref{sec:general-proof}. The proof uses mainly combinatoric arguments; we choose $\tau_{A|X}$ in a specific way, such that a lower bound on $\tau_{A|X}(a|x)$ for all $a,x$ can be proven. We then use the permutation invariance of $\mathrm{P}_{A|X}$  to prove an upper bound on $\mathrm{P}_{A|X}(a|x)$. The result is then derived by combining the two bounds.

Corollary \ref{cor:conditional} holds for every permutation invariant state $\mathrm{P}_{A|X}$, not necessarily quantum or non-signalling. At first sight, the generality of the above mathematical statement might seem as a drawback in applications where only a restricted set of correlations is considered (e.g., only non-signalling correlations). Nevertheless, in a following work \cite{arnon2014nonsignalling} we show that this is not the case and apply this general theorem to prove parallel repetition theorems for non-signalling games. 
Note that according to Definition~\ref{def:permutation} we consider permutations which permute the 1-party subsystems of $\mathrm{P}_{A|X}$\footnote{This is in contrast to states $\mathrm{P}_{AB|XY}$ which can also be permuted as $\left(\mathrm{P}_{AB|XY}\circ\pi \right) (ab|xy)=\mathrm{P}_{AB|XY}\left(\pi(a)\pi(b)|\pi(x)\pi(y)\right)$, as is usually the case in cryptographic tasks. For dealing with such states we will consider a different reduction, stated as Corollary~\ref{cor:chsh-post-selection}.}. 

The multiplicative pre-factor of the de Finetti reduction, $(n+1)^{m(l-1)}$ in Corollary \ref{cor:conditional} for example, is relevant for applications. Intuitively, this is the ``cost'' for using $\tau_{A|X}$ instead of $\mathrm{P}_{A|X}$ in the analysis of the considered protocol. We therefore want it to be as small as possible. Nevertheless, as will be explained later, in many cases a pre-factor polynomial in $n$ suffices.  

Corollary \ref{cor:conditional} is relevant for scenarios in which one considers permutation invariant conditional probability distributions $\mathrm{P}_{A|X}$. However, if the states one considers have additional symmetries $\mathcal{S}$ then we can prove a better de Finetti reduction --- a reduction with a smaller pre-factor and a special de Finetti state with the same symmetries~$\mathcal{S}$.

In the following we consider a specific family of symmetries --- symmetries between different inputs and outputs of the subsystems of $\mathrm{P}_{A|X}$. Formally, the types of symmetries that we consider are described, among other things, by a number $d\leq m(l-1)$ which we call the degrees of freedom of the symmetry (see Appendix~\ref{sec:general-proof} for details and formal definition of the symmetries). More symmetry implies less degrees of freedom, i.e., smaller $d$, and as shown in the following theorem, this leads to a smaller pre-factor in the reduction. The general theorem then reads:

\begin{thm}[de Finetti reduction for conditional probability distributions with symmetries]\label{thm:post-selection}
	There exists a de Finetti state $\tau^\mathcal{S}_{A|X}$ where $x \in \{ 0,1, ... ,m-1 \}^n$ and $a \in \{ 0,1, ... ,l-1 \} ^n$ such that for every permutation invariant state $\mathrm{P}_{A|X}$ with symmetry $\mathcal{S}$ (with $d$ degrees of freedom) 
	\[
		\forall a,x \quad \mathrm{P}_{A|X} (a|x) \leq (n+1)^d \; \tau^\mathcal{S}_{A|X} (a|x) \;. 
	\]
\end{thm}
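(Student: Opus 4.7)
The plan is to define $\tau^\mathcal{S}_{A|X}$ as an integral of the i.i.d.\ states $Q^{\otimes n}_{A_1|X_1}$ against an $\mathcal{S}$-invariant probability measure supported on the $d$-dimensional set of $\mathcal{S}$-symmetric 1-party conditional distributions, to upper-bound $\mathrm{P}_{A|X}(a|x)$ directly using permutation invariance, and then to combine the two estimates via a Dirichlet-type moment identity in which the factor $(n+1)^d$ emerges as the ratio of multinomial coefficients to moment integrals.

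First I would extract a combinatorial upper bound on $\mathrm{P}_{A|X}(a|x)$. Writing
\[
N(a_1,x_1):=|\{i\in[n]:a_i=a_1,\,x_i=x_1\}|, \qquad N(x_1):=\sum_{a_1}N(a_1,x_1),
\]
the permutation invariance of $\mathrm{P}_{A|X}$ implies that, for fixed $x$, all $a$'s sharing the same joint type $\{N(a_1,x_1)\}$ have equal probability; since there are $\prod_{x_1}N(x_1)!/\prod_{a_1}N(a_1,x_1)!$ such strings and their total probability is at most $1$,
\[
\mathrm{P}_{A|X}(a|x)\ \leq\ \prod_{x_1}\frac{\prod_{a_1}N(a_1,x_1)!}{N(x_1)!}.
\]

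Second I would construct $\tau^\mathcal{S}_{A|X}$. In the symmetry-free case $d=m(l-1)$, let $d\mu$ be the product over $x_1\in\{0,\dots,m-1\}$ of the uniform probability measures on the output simplex $\Delta_{l-1}$, and set $\tau_{A|X}(a|x):=\int Q^{\otimes n}_{A_1|X_1}(a|x)\,d\mu(Q)$. Because $Q^{\otimes n}(a|x)=\prod_{x_1}\prod_{a_1}Q(a_1|x_1)^{N(a_1,x_1)}$, the Dirichlet identity $\int_{\Delta_{l-1}}\prod_i q_i^{k_i}\,d\mu(q)=(l-1)!\prod_i k_i!/(|k|+l-1)!$ applied to each $x_1$-factor gives the closed form
\[
\tau_{A|X}(a|x)=\prod_{x_1}\frac{(l-1)!\,\prod_{a_1}N(a_1,x_1)!}{(N(x_1)+l-1)!}.
\]
Dividing the two displayed expressions yields $\mathrm{P}/\tau\leq\prod_{x_1}\binom{N(x_1)+l-1}{l-1}\leq(n+1)^{m(l-1)}$, which already proves Corollary~\ref{cor:conditional}.

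For the general theorem I would replace $d\mu$ by an $\mathcal{S}$-invariant probability measure supported on the $d$-dimensional set $\Sigma_\mathcal{S}\subseteq\Delta_{l-1}^m$ of $\mathcal{S}$-symmetric 1-party states; the invariance guarantees that $\tau^\mathcal{S}_{A|X}$ itself inherits the symmetry $\mathcal{S}$. After an $\mathcal{S}$-adapted change of variables, the moment integral $\int Q^{\otimes n}(a|x)\,d\mu(Q)$ should reduce to a Dirichlet-type integral in exactly $d$ independent coordinates, and the corresponding identity then produces at most $d$ binomial factors, each individually bounded by $n+1$, yielding the desired estimate $(n+1)^d$. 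The main obstacle is precisely this step: one has to show that every symmetry with $d$ degrees of freedom admits an invariant measure whose moment integrals factor through exactly $d$ Dirichlet-type factors, neither more nor fewer. Once this $\mathcal{S}$-adapted moment identity is in place, the combination with the permutation-invariance bound is a routine transcription of the symmetry-free calculation above.
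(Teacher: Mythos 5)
Your treatment of the symmetry-free case is correct and takes a genuinely different route from the paper's: you bound $\mathrm{P}_{A|X}(a|x)$ by counting output strings of the same joint type under a fixed input, and you evaluate $\tau_{A|X}$ in closed form via Dirichlet moments over a product of uniform simplices, whereas the paper parametrises the single-party states by $d$ nested parameters $p_1,\dotsc,p_d$, evaluates the integral by induction on iterated Beta integrals (Lemma~\ref{lem:de-Finetti_bound}), and bounds $\mathrm{P}_{A|X}$ by a two-stage colouring-and-permutation counting argument (Lemma~\ref{lem:symmetry-bound-general}). For Corollary~\ref{cor:conditional} your version is clean and even yields the slightly sharper pre-factor $\prod_{x_1}\binom{N(x_1)+l-1}{l-1}\leq(n+1)^{m(l-1)}$.

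However, the statement to be proved is the general Theorem~\ref{thm:post-selection}, and there your proposal has a genuine gap beyond the one you acknowledge. You plan to adapt only the $\tau$ side to the symmetry $\mathcal{S}$ while reusing the pure permutation-invariance (joint-type) bound on $\mathrm{P}_{A|X}$ unchanged, calling the combination ``a routine transcription.'' It is not: the joint-type bound does not see the symmetry, while the $\mathcal{S}$-adapted de Finetti state is smaller entrywise by a factor of roughly $\prod_j t_j^{-N_j}$, where $t_j$ counts how often the $j$-th parameter occurs in an input vector. Concretely, for the CHSH symmetry take $x=y=a=b=0^n$: the joint-type bound gives only $\mathrm{P}_{AB|XY}(0^n0^n|0^n0^n)\leq 1$, whereas $(n+1)\,\tau^{\mathcal{CHSH}}_{AB|XY}(0^n0^n|0^n0^n)=2^{-n}$, so the inequality you need would fail by an exponential factor unless you also use the symmetry of $\mathrm{P}$. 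The missing ingredient is exactly the paper's Lemma~\ref{lem:symmetry-bound-general}: the symmetry $\mathcal{S}$ forces, for each of the $N_j$ positions of ``colour'' $j$, $t_j$ distinct output values to carry the same probability, which strengthens the upper bound to $\mathrm{P}_{A|X}(a|x)\leq\prod_{j}t_j^{-N_j}\binom{n}{N_1,\dotsc,N_{d+1}}^{-1}$; only this matches the symmetric de Finetti state up to $(n+1)^d$. Combined with your admitted uncertainty about whether every $\mathcal{S}$ admits an invariant measure whose moments factor into exactly $d$ Dirichlet-type factors (the paper sidesteps this by \emph{defining} the admissible symmetries as those encoded by an allowed state $Q_{A_1|X_1}$ and integrating over its $d$ parameters with the nested measure $\prod_i\mathrm{d}p_i/c_i$), the general theorem is not established by your proposal.
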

For the case of no symmetry we have $d=m(l-1)$ from which Corollary  \ref{cor:conditional} stated before follows. 

The symmetries $\mathcal{S}$ that we consider are of particular interest when considering cryptographic protocols which are based on non-signalling states. For example, the states which are relevant for protocols which are based on the violation of the CHSH inequality (such as \cite{masanes2009universally,hanggi2009quantum}) have a great amount of symmetry. The additional symmetry allows us to prove a corollary of Theorem \ref{thm:post-selection} which can be used to simplify such protocols. 

Before we state the corollary for the CHSH case, let us define what we mean when we say that a state has a CHSH-type symmetry. In cryptographic protocols based on the CHSH inequality the basic states that we consider are bipartite states $\mathrm{P}_{AB|XY}$ held by Alice and Bob where $a,b,x,y\in \{0,1\}^n$. 
\begin{defn}[CHSH-type symmetry]\label{def:chsh-symmetry}
	A state $\mathrm{P}_{AB|XY}$ where $a,b,x,y\in \{0,1\}^n$ has a CHSH-type symmetry if there exist $p_1,\dotsc,p_n\in [0,\frac{1}{2}]$ such that $\forall i\in\{1,\dotsc,n\}$,
	\begin{equation*}
	\begin{split}
		 & \forall a_i, b_i, x_i, y_i \\
		 & a_i\oplus\ b_i=x_i\cdot y_i \rightarrow \mathrm{P}_{AB|XY}(a_{\overline{i}}a_ib_{\overline{i}}b_i|x_{\overline{i}}x_iy_{\overline{i}}y_i) = \frac{1}{2}-p_i \\
		& a_i\oplus\ b_i \neq x_i\cdot y_i \rightarrow \mathrm{P}_{AB|XY}(a_{\overline{i}}a_ib_{\overline{i}}b_i|x_{\overline{i}}x_iy_{\overline{i}}y_i) = p_i \;.
	\end{split}
	\end{equation*}
	where $a_{\overline{i}}=a_1a_2\dotsc a_{i-1}a_{i+1} \dotsc a_n$ and $b_{\overline{i}}, x_{\overline{i}}, y_{\overline{i}}$ are defined in a similar way.
\end{defn}
A simple state $\mathrm{P}_{AB|XY}$ which has this symmetry for example is a product state of 2-partite states as in Figure~\ref{fig:CHSH_symmetry} with different values of $p$.

\begin{figure}
\begin{centering}
	\begin{tikzpicture}[scale=0.5]

		\draw[step=2] (-5,-4) grid (4,5);
		\draw[ultra thick] (-6,4)--(4,4);
		\draw[ultra thick] (-6,-4)--(4,-4);
		\draw[ultra thick] (-4,-4)--(-4,6);
		\draw[ultra thick] (4,-4)--(4,6);
		\draw[ultra thick] (-6,0)--(4,0);
		\draw[ultra thick] (0,-4)--(0,6);
		\draw (-4,4)--(-6,6);

		\draw (-3,3) node {$\frac{1}{2}-p$};
		\draw[red] (-1,3) node {$p$};
		\draw (1,3) node {$\frac{1}{2}-p$};
		\draw[red] (3,3) node {$p$};

		\draw[red] (-3,1) node {$p$};
		\draw (-1,1) node {$\frac{1}{2}-p$};
		\draw[red] (1,1) node {$p$};
		\draw (3,1) node {$\frac{1}{2}-p$};

		\draw (-3,-1) node {$\frac{1}{2}-p$};
		\draw[red] (-1,-1) node {$p$};
		\draw[red] (1,-1) node {$p$};
		\draw (3,-1) node {$\frac{1}{2}-p$};

		\draw[red] (-3,-3) node {$p$};
		\draw (-1,-3) node {$\frac{1}{2}-p$};
		\draw (1,-3) node {$\frac{1}{2}-p$};
		\draw[red] (3,-3) node {$p$};

		\draw (-5,3) node {0};
		\draw (-5,1) node {1};
		\draw (-5,-1) node {0};
		\draw (-5,-3) node {1};
		\draw (-6,2) node {0};
		\draw (-6,-2) node {1};

		\draw (-3,5) node {0};
		\draw (-1,5) node {1};
		\draw (1,5) node {0};
		\draw (3,5) node {1};
		\draw (-2,6) node {0};
		\draw (2,6) node {1};

		\draw (-5,4.4) node {$B_1$};
		\draw (-4.4,5) node {$A_1$};
		\draw (-6,5) node {$Y_1$};
		\draw (-5,6) node {$X_1$};
			
	\end{tikzpicture}
\end{centering}
\caption{A simple 2-partite state $P_{A_1B_1|X_1Y_1}$ with the CHSH symmetry.} \label{fig:CHSH_symmetry}
\end{figure} 

\begin{cor}[de Finetti reduction for states with the CHSH symmetry] \label{cor:chsh-post-selection}
	There exists a non-signalling de Finetti state $\tau^\mathcal{CHSH}_{AB|XY}$ where $a,b,x,y \in \{ 0,1 \}^n$ such that for every permutation invariant\footnote{Here a permutation acts on the bipartite state as $\left(\mathrm{P}_{AB|XY}\circ\pi \right) (ab|xy)=\mathrm{P}_{AB|XY}\left(\pi(a)\pi(b)|\pi(x)\pi(y)\right)$.} state $\mathrm{P}_{AB|XY}$ with the CHSH symmetry, for all $a,b,x,y$,
	\[
		\mathrm{P}_{AB|XY} (a,b|x,y) \leq (n+1) \; \tau^\mathcal{CHSH}_{AB|XY} (a,b|x,y)\;. 
	\]
\end{cor}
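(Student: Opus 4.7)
The plan is to deduce the corollary directly from Theorem~\ref{thm:post-selection} by recasting the bipartite setting as an $n$-partite one in which each site is the 4-tuple $(A_i,B_i,X_i,Y_i)$, with joint input alphabet $\{0,1\}^2$ and joint output alphabet $\{0,1\}^2$. Under this identification, the joint permutation action defined in the footnote of Corollary~\ref{cor:chsh-post-selection} is exactly the 1-party permutation action of Definition~\ref{def:permutation}, so the hypothesis of Theorem~\ref{thm:post-selection} is in force once the symmetry class is specified.

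The first step is to count the degrees of freedom $d$ of the CHSH-type symmetry. By Definition~\ref{def:chsh-symmetry}, a CHSH-symmetric single-site state is completely determined by a single real parameter $p\in[0,\tfrac12]$: for each of the four inputs $(x_1,y_1)$, two outputs carry probability $\tfrac12-p$ and the other two carry probability $p$. Hence the CHSH symmetry has exactly $d=1$ degree of freedom, which, via Theorem~\ref{thm:post-selection}, yields a de Finetti state $\tau^\mathcal{CHSH}_{AB|XY}$ with pre-factor $(n+1)^1=n+1$ as claimed. The explicit construction supplied by the proof of the theorem takes the form
\[
\tau^\mathcal{CHSH}_{AB|XY}=\int_0^{1/2}\bigl(\mathrm{P}^{(p)}_{A_1B_1|X_1Y_1}\bigr)^{\otimes n}\,\mathrm{d}\mu(p),
\]
where $\mathrm{P}^{(p)}_{A_1B_1|X_1Y_1}$ denotes the CHSH-symmetric 1-party state with parameter $p$ and $\mathrm{d}\mu$ is the measure on $[0,\tfrac12]$ produced in Appendix~\ref{sec:general-proof}.

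The remaining step is to verify that $\tau^\mathcal{CHSH}_{AB|XY}$ is non-signalling between Alice and Bob. Since tensor products and convex combinations of non-signalling states are again non-signalling, it suffices to check the single-site building block $\mathrm{P}^{(p)}_{A_1B_1|X_1Y_1}$. Summing Bob's output in Definition~\ref{def:chsh-symmetry}, for any fixed $(a_1,x_1,y_1)$ exactly one of the two values of $b_1$ satisfies $a_1\oplus b_1=x_1\cdot y_1$ and the other does not, hence $\sum_{b_1}\mathrm{P}^{(p)}(a_1b_1|x_1y_1)=(\tfrac12-p)+p=\tfrac12$, independent of $y_1$. The symmetric computation yields Alice's marginal, completing the non-signalling check.

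The main obstacle is really the bookkeeping in the first step: one has to verify that the CHSH-type symmetry of Definition~\ref{def:chsh-symmetry} fits into the formal framework of ``symmetries with $d$ degrees of freedom'' used in Theorem~\ref{thm:post-selection} (as defined in Appendix~\ref{sec:general-proof}), and in particular that the count $d=1$ correctly encodes all the constraints imposed by Definition~\ref{def:chsh-symmetry}. Once this identification is in place, the rest of the argument reduces to invoking Theorem~\ref{thm:post-selection} and performing the elementary non-signalling calculation above.
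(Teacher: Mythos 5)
Your proposal is correct and follows essentially the same route the paper itself takes in Appendix~\ref{sec:cor_derivation}: treat each 4-tuple $(A_i,B_i,X_i,Y_i)$ as a single subsystem, observe that the single-site CHSH state of Figure~\ref{fig:CHSH_symmetry} has $d=1$ degree of freedom in the sense of the framework of Appendix~\ref{sec:general-proof}, invoke Theorem~\ref{thm:post-selection}, and check non-signalling on the single-site building block (which then lifts to the de Finetti state by tensoring and mixing). The paper also gives a self-contained direct proof in Appendix~\ref{sec:chsh-proof} (an explicit Beta-function evaluation of $\tau^\mathcal{CHSH}_{AB|XY}$ combined with a counting argument), but that version is included only for pedagogical reasons, so no gap is introduced by your relying on the general theorem instead.
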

Note that we do not assume that the state $\mathrm{P}_{AB|XY}$ satisfies any non-signalling conditions. Our theorem holds even when there is signalling between the subsystems, and therefore can be used in a broad set of applications.

Corollary \ref{cor:chsh-post-selection} is derived from Theorem \ref{thm:post-selection} by showing that $d=1$ for the CHSH symmetry\footnote{Intuitivly, in the CHSH symmetry there is only one degree of freedom, i.e. $d=1$, since we are only free to choose one value $p$ when defining the basic CHSH state given in Figure \ref{fig:CHSH_symmetry}. Less symmetry implies more degrees of freedom. }. For pedagogical reasons, we also present a self-contained proof including an explicit construction of the state $\tau^\mathcal{CHSH}_{AB|XY}$ in Appendix~\ref{sec:chsh-proof}. 

Although the assumption about the symmetry of the states in Corollary \ref{cor:chsh-post-selection} appears to be rather restrictive, the statement turns out to be useful for applications. 

\section{Applications}

To illustrate the use of the de Finetti reductions, we start by considering the following abstract application. Let $\mathcal{T}$ be a test which interacts with a state $\mathrm{P}_{A|X}$ and outputs ``success'' or ``fail'' with some probabilities. One can think about this test, which can be chosen according to the application being considered, as a way to quantify the success probability of the protocol when the state $\mathrm{P}_{A|X}$ is given as input. For example, if one considers an estimation, or a tomography, protocol a test can be chosen to output ``success'' when the estimated state is close to the actual state \cite{christandl2009postselection}. 

We denote by $\mathrm{Pr}_{\text{fail}}(\mathrm{P}_{A|X})$ the probability that $\mathcal{T}$ outputs ``fail'' after interacting with $\mathrm{P}_{A|X}$. We consider permutation invariant tests, defined as follows. 
\begin{defn}\label{def:permutation-invariant-test}
	A test $\mathcal{T}$ is permutation invariant if for all states $\mathrm{P}_{A|X}$ and all permutations $\pi$ we have
	\[
	 	\mathrm{Pr}_{\text{fail}}(\mathrm{P}_{A|X}) = \mathrm{Pr}_{\text{fail}}(\mathrm{P}_{A|X}\circ\pi) \;.
	\]
\end{defn}

Using the de Finetti reduction in Corollary \ref{cor:conditional} we can prove upper bounds of the following type: 

\begin{lem}\label{lem:test-bound}
	Let $\mathcal{T}$ be a permutation invariant test. Then for every state $\mathrm{P}_{A|X}$ 
	\[
			\mathrm{Pr}_{\mathrm{fail}}(\mathrm{P}_{A|X}) \leq (n+1)^{m(l-1)} \mathrm{Pr}_{\mathrm{fail}}(\tau_{A|X}) \;.
	\]
\end{lem}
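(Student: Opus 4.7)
My plan is to combine linearity of the test's failure probability with the symmetrization trick to reduce to the permutation invariant case, and then apply Corollary~\ref{cor:conditional} pointwise. The statement does not assume $\mathrm{P}_{A|X}$ itself is permutation invariant, so the first move has to be to exploit the test's invariance to replace $\mathrm{P}_{A|X}$ by its symmetrization.

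Concretely, I would first argue that the failure probability is a non-negative linear functional of the state: the test chooses inputs $x$ according to some (possibly adaptive, but ultimately averaged) distribution, observes the output $a$, and declares ``fail'' with some conditional probability. Writing this out gives
\[
\mathrm{Pr}_{\mathrm{fail}}(\mathrm{Q}_{A|X}) \;=\; \sum_{a,x} T(a,x)\, \mathrm{Q}_{A|X}(a|x)
\]
for coefficients $T(a,x)\ge 0$ that depend only on $\mathcal{T}$. Then define the symmetrized state
\[
\bar{\mathrm{P}}_{A|X} \;:=\; \frac{1}{n!} \sum_{\pi} \mathrm{P}_{A|X}\circ\pi,
\]
which is manifestly permutation invariant and is again a conditional probability distribution. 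Using Definition~\ref{def:permutation-invariant-test} and the linearity above,
\[
\mathrm{Pr}_{\mathrm{fail}}(\mathrm{P}_{A|X}) \;=\; \frac{1}{n!}\sum_{\pi} \mathrm{Pr}_{\mathrm{fail}}(\mathrm{P}_{A|X}\circ\pi) \;=\; \mathrm{Pr}_{\mathrm{fail}}(\bar{\mathrm{P}}_{A|X}).
\]

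Now Corollary~\ref{cor:conditional} applies to $\bar{\mathrm{P}}_{A|X}$, giving the pointwise bound $\bar{\mathrm{P}}_{A|X}(a|x)\le (n+1)^{m(l-1)}\,\tau_{A|X}(a|x)$ for all $a,x$. Multiplying by the non-negative coefficients $T(a,x)$ and summing yields
\[
\mathrm{Pr}_{\mathrm{fail}}(\bar{\mathrm{P}}_{A|X}) \;\le\; (n+1)^{m(l-1)} \sum_{a,x} T(a,x)\, \tau_{A|X}(a|x) \;=\; (n+1)^{m(l-1)}\, \mathrm{Pr}_{\mathrm{fail}}(\tau_{A|X}),
\]
which is the desired inequality.

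The only subtlety, and the step I would treat most carefully, is justifying the linear, non-negative representation of $\mathrm{Pr}_{\mathrm{fail}}$. If the test is allowed to adaptively choose inputs based on previous outputs, one still gets a non-negative linear functional once the input strategy is viewed as a fixed stochastic map from histories to next inputs (the randomness of $\mathcal{T}$ can be conditioned on and then marginalised out). Everything else — symmetrization and the pointwise application of Corollary~\ref{cor:conditional} — is routine once that representation is in place.
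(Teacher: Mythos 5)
Your proof is correct, and for the key step it takes a genuinely different route from the paper. The symmetrization step is identical: the paper also uses linearity of $\mathrm{Pr}_{\text{fail}}$ to replace $\mathrm{P}_{A|X}$ by $\frac{1}{n!}\sum_\pi \mathrm{P}_{A|X}\circ\pi$ and thereby reduce to the permutation invariant case. Where you diverge is in how the pre-factor is extracted. The paper goes through its post-selection reformulation (Lemma~\ref{lem:conditional}): it builds a non-signalling extension $\tau_{AC|XZ}$ of the de Finetti state from which $\mathrm{P}_{A|X}$ can be post-selected with probability at least $(n+1)^{-m(l-1)}$, and then bounds $\mathrm{Pr}_{\text{fail}}(\mathrm{P}_{A|X}) = \mathrm{Pr}_{\text{fail}\mid c_z}(\tau_{AC|XZ}) \leq \mathrm{Pr}_{\text{fail}}(\tau_{A|X})/\tau_{C|Z}(c_z|z)$. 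You instead represent $\mathrm{Pr}_{\text{fail}}$ as a non-negative linear functional $\sum_{a,x}T(a,x)\,\mathrm{Q}_{A|X}(a|x)$ and apply the pointwise inequality of Corollary~\ref{cor:conditional} term by term. Your route is more elementary and bypasses the partition/extension machinery of Lemmas~\ref{lem:extension_condition} and~\ref{lem:extension} entirely; its only cost is that you must commit to an explicit operational model of the test (non-adaptive input choice followed by a stochastic accept/reject rule), an assumption the paper leaves implicit but also relies on, e.g.\ when it invokes linearity and $\mathrm{Pr}_{\text{fail}\land c_z}\leq \mathrm{Pr}_{\text{fail}}$. The paper's heavier post-selection route has the advantage that it is the one that carries over to the diamond-norm application in Appendix~\ref{sec:diamond_app}, where the extension system $C$ genuinely participates in the distinguishing task and a pointwise argument on $\mathrm{P}_{A|X}$ alone would not suffice.
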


The importance of the de Finetti reductions is obvious from this abstract example --- if one wishes to prove an upper bound on the failure probability of the test $\mathcal{T}$, instead of proving it for all states $\mathrm{P}_{A|X}$ it is sufficient to prove it for the de Finetti state $\tau_{A|X}$ and ``pay'' for it with the additional polynomial pre-factor of $(n+1)^{m(l-1)}$. Since the de Finetti state has an i.i.d.\ structure this can highly simplify the calculations of the bound. 

Moreover, in many cases one finds that the bound on $\mathrm{Pr}_{\text{fail}}(\tau_{A|X})$ is exponentially small in $n$. For an estimation protocol, the failure probability of the test, when interacting with an i.i.d.\ state, can be shown to be exponentially small in the number of subsystems used for the estimation, using Chernoff bounds. This is also the case when dealing with security proofs -- the failure probability of a protocol, when a de Finetti state is given as input, is usually exponentially small in the number of subsystems used in the protocol. If this is indeed the case then the polynomial pre-factor of  $(n+1)^{m(l-1)}$ will not affect the bound in the asymptotic limit of large $n$. That is, an exponentially small bound on $\mathrm{Pr}_{\text{fail}}(\tau_{A|X})$ implies an exponentially small bound on $\mathrm{Pr}_{\text{fail}}(\mathrm{P}_{A|X})$.

For an estimation protocol as mentioned above the notion of the test, combined with the de Finetti reductions, can be used to prove that an estimation procedure of permutation invariant states succeeds with high probability. 

For readers who are interested in cryptography, we show in Appendix~\ref{sec:diamond-proofs} how to derive a similar result when considering the diamond norm~\cite{kitaev1997quantum}, i.e., the distance between channels acting on conditional probability distributions, instead of the abstract test $\mathcal{T}$. The diamond norm is the relevant distance measure when considering cryptographic protocols, and therefore using de Finetti reductions to upper bound the diamond norm can simplify the analysis of device independent  protocols. 

\section{Concluding remarks}
In this letter we introduced a general de Finetti-type theorem from which various more specialised variants can be derived. Crucially, such theorems can be formulated even without relying on assumptions regarding the non-signalling conditions between the subsystems or the underlying dimension. In the general theorem, Theorem~\ref{thm:post-selection}, we can also see how additional symmetries of the states can affect the pre-factor in the de Finetti reduction. This suggests that the same relationship might also exist in the quantum post selection theorem \cite{christandl2009postselection}, which is the quantum variant of the de Finetti reductions presented here.

As an example for an application we showed how our theorems can be used to bound the failure probability of a test. In a following work \cite{arnon2014nonsignalling} we show how to use the concept of the test, together with the de Finetti reduction given in Corollary \ref{cor:conditional} to prove parallel repetition results for non-local games. Previous de Finetti theorems could have not been used in the setting of non-local games due to their dependency on the dimension of the systems or the strict non-signalling conditions they assume. The new de Finetti theorem presented here therefore opens new possibilities and therefore strictly extends the range of applications to which de Finetti reductions can be applied.

As an additional example, we explain how our theorem can be used in device independent protocols in which the parties are not assumed to be restricted by quantum theory in Appendix~\ref{sec:diamond-proofs}. We hope that this approach will also be useful for quantum device independent information processing protocols in the future. One possible direction can be to use a similar de Finetti reduction as in Corollary \ref{cor:chsh-post-selection}, but for a Bell inequality in which the maximal violation is achieved within quantum theory. This way, the resulting de Finetti state will be not only non-signalling but also quantum.
Due to the general structure of the de Finetti reductions and the increasing use of conditional probability distributions in quantum information theory, we also hope that the presented reductions will be useful in other applications apart from cryptography, such as quantum tomography, as was the case for the quantum post selection theorem~\cite{christandl2009postselection}.

The techniques used to prove our theorems (mainly combinatoric arguments) differ from the techniques used in previous papers to establish general de Finetti theorems. We therefore hope that they will shed new light on de Finetti reductions in general. For example, it might be possible to apply some ideas from the proof in (device dependent) quantum de Finetti reductions. 

\begin{acknowledgments}
The authors thank Roger Colbeck and Michael Walter for discussing a preliminary version of this work. This work was supported by the Swiss National Science Foundation (via the National Centre of Competence in Research ``QSIT'' and SNF project No. 200020-135048), by the European Research Council (via project No. 258932), by the CHIST-ERA project ``DIQIP'' and by the EC STREP project ``RAQUEL''. 
\end{acknowledgments}


\appendix
\section{A direct proof of the de Finetti reduction for states with CHSH symmetry \label{sec:chsh-proof}}

Corollary \ref{cor:chsh-post-selection} is a de Finetti reduction, specialised for states with the CHSH symmetry, as defined in Definition \ref{def:chsh-symmetry}, and therefore relevant for device independent cryptographic protocols that are based on the violation of the CHSH inequality \cite{CHSH}. In Section \ref{sec:diamond_app} we show how to apply this de Finetti reduction to simplify the analysis of such protocols. In this section we prove the corollary directly, without using the general de Finetti reduction, given in Theorem \ref{thm:post-selection}. In Section \ref{sec:cor_derivation} we also show how to derive the corollary from the general theorem.

In order to prove Corollary \ref{cor:chsh-post-selection} we construct a specific de Finetti state $\tau^\mathcal{CHSH}_{AB|XY}$ for which the corollary holds. As our de Finetti state we choose $\tau^\mathcal{CHSH}_{AB|XY}=\int Q_{A_1B_1|X_1Y_1}^{\otimes n} \mathrm{d}Q_{A_1B_1|X_1Y_1}$ to be a convex combination of states $Q_{A_1B_1|X_1Y_1}^{\otimes n}$ where $Q_{A_1B_1|X_1Y_1}$ is the basic state given in Figure \ref{fig:CHSH_symmetry}. As a density measure we choose $\mathrm{d}Q$ to be uniform over all states $Q_{A_1B_1|X_1Y_1}$ of this form, i.e., we integrate uniformly over different values of $p\in [0,\frac{1}{2}]$. 

We can write $\tau^\mathcal{CHSH}_{AB|XY}$ explicitly by using the following notation. For every $a,b,x,y\in \{0,1\}^n$ denote by $N_{CHSH}$ the number of indices $m\in[n]$ for which the foursome $(a_m,b_m,x_m,y_m)$ fulfils the CHSH condition, i.e. $a_m \oplus b_m = x_m \cdot y_m$. $n - N_{CHSH}$ is then the number of indices for which the foursome $(a_m,b_m,x_m,y_m)$ does not fulfil the CHSH condition.  We formally define $\tau^\mathcal{CHSH}_{AB|XY}$:

\begin{defn}
	$\tau^\mathcal{CHSH}_{AB|XY}$ is the non-signalling state
	\begin{align*}
		\tau^\mathcal{CHSH}_{AB|XY}(ab|xy) & =  \int Q_{A_1B_1|X_1Y_1}^{\otimes n}(ab|xy)\: \mathrm{dQ}_{A_1B_1|X_1Y_1}\\
		& =\int_0^{\frac{1}{2}} \left( \frac{1}{2}-p \right)^{N_{CHSH}} p^{\left( n-N_{CHSH} \right)} 2\mathrm{d}p \; .
	\end{align*}
\end{defn}
The de Finetti state is non-signalling since the states $Q_{A_1B_1|X_1Y_1}$ are non-signalling for every value of $p$ (see Figure~\ref{fig:CHSH_symmetry}).

\begin{lem}\label{lem:chsh-de-finetti-bound}
	For all $a,b,x,y\in\{0,1\}^n$, 
	\[
		\tau^\mathcal{CHSH}_{AB|XY}(ab|xy) = 2^{-n}{n \choose N_{CHSH}}^{-1} \frac{1}{(n+1)} \;.
	\]
\end{lem}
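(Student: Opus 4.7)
The claim is purely a computational identity, so my plan is to evaluate the integral in the definition of $\tau^\mathcal{CHSH}_{AB|XY}(ab|xy)$ directly and recognize the result as a Beta function.

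First I would substitute $q = 2p$ so that $dp = dq/2$ and $\tfrac{1}{2}-p = \tfrac{1-q}{2}$. With $k := N_{CHSH}$, this turns the integral into
\[
\int_0^{1/2}\left(\tfrac{1}{2}-p\right)^{k}p^{\,n-k}\,2\,dp = \frac{1}{2^{n}}\int_0^1 (1-q)^{k}\,q^{\,n-k}\,dq.
\]
The remaining integral is the Beta function $B(n-k+1,\,k+1)$.

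Next I would use the standard identity $B(\alpha,\beta)=\Gamma(\alpha)\Gamma(\beta)/\Gamma(\alpha+\beta)$, which for integer arguments gives
\[
B(n-k+1,\,k+1)=\frac{(n-k)!\,k!}{(n+1)!}=\frac{1}{n+1}\cdot\frac{k!\,(n-k)!}{n!}=\frac{1}{n+1}\binom{n}{k}^{-1}.
\]
Plugging this back and restoring $k=N_{CHSH}$ yields exactly
\[
\tau^\mathcal{CHSH}_{AB|XY}(ab|xy)=2^{-n}\binom{n}{N_{CHSH}}^{-1}\frac{1}{n+1},
\]
as claimed. The dependence on $a,b,x,y$ enters only through $N_{CHSH}$, so the formula holds uniformly for all inputs and outputs.

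There is no real obstacle here; the only thing worth double-checking is the factor of $2$ out front of $dp$ in the definition (which is the normalization of the uniform measure on $[0,\tfrac{1}{2}]$) and keeping track of the $2^{-n}$ that comes out of the substitution. In particular, since $\int_0^{1/2}2\,dp = 1$, the measure $\mathrm{d}Q$ is a probability measure, and the resulting $\tau^\mathcal{CHSH}_{AB|XY}$ is automatically a normalized conditional probability distribution, which can be used as a sanity check: summing the right-hand side over $a,b$ for fixed $x,y$ amounts to summing $\binom{n}{k}^{-1}/(n+1)$ times the number of $(a,b)$ with $N_{CHSH}=k$, and the standard count of such tuples is designed to make this equal $1$.
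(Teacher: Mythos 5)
Your computation is correct and follows the same route as the paper's own proof: the substitution $q=2p$ to pull out the factor $2^{-n}$, recognition of the remaining integral as $\mathrm{B}(n-N_{CHSH}+1,N_{CHSH}+1)$, and the standard evaluation of the Beta function at integer arguments. The extra normalization check at the end is a nice sanity check but adds nothing beyond what the paper does.
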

\begin{proof}
	The integral above can be solved explicitly: 
	\begin{align*}
		\tau^\mathcal{CHSH}_{AB|XY}(ab|xy) & =  \int_0^{\frac{1}{2}} \left( \frac{1}{2}-p \right)^{N_{CHSH}} p^{\left( n-N_{CHSH} \right)} 2\mathrm{d}p\\
		&=2^{-n} \int_0^1 (1-q)^{N_{CHSH}} q^{\left( n-N_{CHSH} \right)} \mathrm{d}q  \\
		&=2^{-n} \mathrm{B}(n-N_{CHSH}+1,N_{CHSH}+1) \\
		&=2^{-n} {n \choose N_{CHSH}}^{-1} \frac{1}{(n+1)} \;. 
	\end{align*} 
	where B is the Beta function. Recall that $N_{CHSH}$ is a functions of $a,b,x$ and $y$ although we do not write it explicitly.
\end{proof}

The following lemma gives an upper bound on any entry $\mathrm{P}_{AB|XY}(ab|xy)$ of every permutation invariant state $\mathrm{P}_{AB|XY}$ with the CHSH symmetry.
\begin{lem}\label{lem:upper-bound-permutation}
	For every permutation invariant state $\mathrm{P}_{AB|XY}$ with the CHSH symmetry and $a,b,x,y\in\{0,1\}^n$
	\[
		\mathrm{P}_{AB|XY}(ab|xy) \leq  2^{-n}{n \choose N_{CHSH}}^{-1}  \;. 
	\]
\end{lem}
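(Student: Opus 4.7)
The plan is to exploit the CHSH-type symmetry together with permutation invariance to reduce $\mathrm{P}_{AB|XY}$ to a one-parameter family, and then conclude with a short binomial estimate.

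First, I would unpack Definition \ref{def:chsh-symmetry}. Read directly, the CHSH symmetry says that the joint probability factorizes as
\[
	\mathrm{P}_{AB|XY}(ab|xy) \;=\; \prod_{i=1}^{n} f_i(a_i,b_i,x_i,y_i),
\]
where $f_i = \tfrac{1}{2}-p_i$ on a foursome satisfying $a_i\oplus b_i = x_i\cdot y_i$ and $f_i = p_i$ otherwise. In other words, $\mathrm{P}_{AB|XY}$ is a product of the $n$ basic CHSH cells of Figure \ref{fig:CHSH_symmetry} with biases $p_1,\dots,p_n \in [0,\tfrac{1}{2}]$. Second, I would invoke permutation invariance. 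Applying any transposition $\pi=(i\;j)$ to a foursome that satisfies the CHSH condition at index $i$ but not at $j$ forces the equality $(\tfrac{1}{2}-p_i)\,p_j = (\tfrac{1}{2}-p_j)\,p_i$, and therefore $p_i = p_j$. Hence all the $p_i$ collapse to a single $p\in[0,\tfrac{1}{2}]$, and for every $(a,b,x,y)$,
\[
	\mathrm{P}_{AB|XY}(ab|xy) \;=\; \left(\tfrac{1}{2}-p\right)^{N_{CHSH}} p^{\,n-N_{CHSH}}.
\]

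Finally, I would apply the binomial identity
\[
	\sum_{k=0}^{n}\binom{n}{k}\left(\tfrac{1}{2}-p\right)^{k}p^{\,n-k} \;=\; \bigl((\tfrac{1}{2}-p)+p\bigr)^{n} \;=\; 2^{-n},
\]
which shows that each summand is bounded by $2^{-n}$. Taking $k=N_{CHSH}$ and dividing both sides of the resulting inequality by $\binom{n}{N_{CHSH}}$ immediately yields the stated bound. The main obstacle is the first step: Definition \ref{def:chsh-symmetry} parameterizes the state through local biases $p_i$ at each index, and one has to be careful to justify that the full joint distribution is indeed the product of the corresponding cells rather than equal to just one of these factors. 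Once the product factorization is established, the collapse to a single $p$ via permutation invariance and the final binomial estimate are short and routine.
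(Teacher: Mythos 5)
Your first step is the fatal one, and you correctly identified it as the main obstacle --- but it cannot be repaired, because the factorization claim is false. The CHSH symmetry of Definition~\ref{def:chsh-symmetry} (read the way it is used throughout the paper, and as formalized in general in Definition~\ref{defn:symmetry_relation}) only asserts that replacing the $i$-th foursome $(a_i,b_i,x_i,y_i)$ by another foursome of the same ``type'' (CHSH-satisfying or CHSH-violating) leaves the value of the entry unchanged; it does \emph{not} assert that $\mathrm{P}_{AB|XY}$ is a tensor product of the single cells of Figure~\ref{fig:CHSH_symmetry}. Product states are offered in the text only as an \emph{example} of states with this symmetry, and the paper explicitly stresses that no non-signalling (let alone product) structure between the subsystems is assumed. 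A concrete counterexample to your steps 1--2: take $\mathrm{P}_{AB|XY}=\tfrac12 Q_{p}^{\otimes n}+\tfrac12 Q_{p'}^{\otimes n}$ with $p\neq p'$ (or the de Finetti state $\tau^{\mathcal{CHSH}}_{AB|XY}$ itself). This state is permutation invariant and has the CHSH symmetry, since its entries depend only on $N_{CHSH}$, yet
\[
	\mathrm{P}_{AB|XY}(ab|xy)=\tfrac12\left[\left(\tfrac12-p\right)^{N_{CHSH}}p^{\,n-N_{CHSH}}+\left(\tfrac12-p'\right)^{N_{CHSH}}p'^{\,n-N_{CHSH}}\right]
\]
is not of the form $(\tfrac12-q)^{N_{CHSH}}q^{\,n-N_{CHSH}}$ for any single $q$. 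Granting yourself the product form amounts to proving the lemma only for i.i.d.\ states, where it is essentially trivial; the entire content of the lemma is that it holds for arbitrary correlated (even signalling) permutation-invariant states with the symmetry.

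The paper's proof takes a genuinely different, purely combinatorial route that avoids any structural assumption: fix $(a,b,x,y)$ and lower-bound the number $\mathcal{N}(a,b,x,y)$ of pairs $(\tilde a,\tilde b)$ with $\mathrm{P}_{AB|XY}(\tilde a\tilde b|xy)=\mathrm{P}_{AB|XY}(ab|xy)$. There are $2^n$ such pairs obtained by swapping each foursome for one of the same type with the same inputs, and a further factor ${n \choose N_{CHSH}}$ obtained by rearranging the pattern of types, realised by combining permutation invariance with the symmetry. Since the $4^n$ entries in each column sum to $1$, this yields $\mathrm{P}_{AB|XY}(ab|xy)\leq 2^{-n}{n \choose N_{CHSH}}^{-1}$. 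Your closing binomial identity is correct and is, in effect, the i.i.d.\ shadow of this counting argument, but on its own it does not reach the general case.
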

The idea behind the proof of this lemma is to bound the value of a specific entry $\mathrm{P}_{AB|XY}(ab|xy)$ by counting how many entries $\mathrm{P}_{AB|XY}(\tilde{a}\tilde{b}|xy)$ must have the same value as  $\mathrm{P}_{AB|XY}(ab|xy)$ due to the symmetry of $\mathrm{P}_{AB|XY}$. Since the sum of all entries with particular inputs $x,y$ is 1 a bound on $\mathrm{P}_{AB|XY}(ab|xy)$ follows.

\begin{proof}
Given $a,b,x,y$ imagine that we are placing a colored ball above each foursome $(a_i,b_i,x_i,y_i)$ as in Figure~\ref{fig:balls}. If the foursome fulfils the CHSH condition we label it with a blue ball, otherwise with a red ball. With this picture in mind, the CHSH symmetry as in Definition~\ref{def:chsh-symmetry} says that by changing two balls of the same color we do not change the value according to the probability distribution $\mathrm{P}_{AB|XY}$.

\begin{figure}
\begin{centering}
	\begin{tikzpicture}[dot/.style={circle,inner sep =2pt}]
 	
	\draw (0,1.5) node {$a:$};
	\draw (0,1) node {$b:$};
	\draw (0,0.5) node {$x:$};
	\draw (0,0) node {$y:$};
	\draw[font=\scriptsize] (0, -0.55) node {$i:$};

	\node[dot, fill=blue] at (0.75,2) {};
	\draw (0.75,1.5) node {$0$};
	\draw (0.75,1) node {$0$};
	\draw (0.75,0.5) node {$0$};
	\draw (0.75,0) node {$0$};
	\draw[font=\scriptsize] (0.75, -0.55) node {$1$};
		
	\node[dot, fill=red] at (1.15,2) {};
	\draw (1.15,1.5) node {$1$};
	\draw (1.15,1) node {$0$};
	\draw (1.15,0.5) node {$0$};
	\draw (1.15,0) node {$1$};
	\draw[font=\scriptsize] (1.15, -0.55) node {$2$};
	
	\node[dot, fill=blue] at (1.55,2) {};
	\draw (1.55,1.5) node {$0$};
	\draw (1.55,1) node {$1$};
	\draw (1.55,0.5) node {$1$};
	\draw (1.55,0) node {$1$};
	\draw[font=\scriptsize] (1.55, -0.55) node {$3$};

	\node[dot, fill=red] at (1.95,2) {};
	\draw (1.95,1.5) node {$1$};
	\draw (1.95,1) node {$0$};
	\draw (1.95,0.5) node {$1$};
	\draw (1.95,0) node {$0$};
	\draw[font=\scriptsize] (1.95, -0.55) node {$4$};
	
	\node[dot, fill=red] at (2.35,2) {};
	\draw (2.35,1.5) node {$1$};
	\draw (2.35,1) node {$1$};
	\draw (2.35,0.5) node {$1$};
	\draw (2.35,0) node {$1$};
	\draw[font=\scriptsize] (2.35, -0.55) node {$5$};
	
	\node[dot, fill=blue] at (2.75,2) {};
	\draw (2.75,1.5) node {$0$};
	\draw (2.75,1) node {$0$};
	\draw (2.75,0.5) node {$0$};
	\draw (2.75,0) node {$1$};
	\draw[font=\scriptsize] (2.75, -0.55) node {$6$};
	
	\node[dot, fill=blue] at (3.15,2) {};
	\draw (3.15,1.5) node {$1$};
	\draw (3.15,1) node {$1$};
	\draw (3.15,0.5) node {$0$};
	\draw (3.15,0) node {$0$};
	\draw[font=\scriptsize] (3.15, -0.55) node {$7$};

	\node[dot, fill=red] at (3.55,2) {};
	\draw (3.55,1.5) node {$0$};
	\draw (3.55,1) node {$1$};
	\draw (3.55,0.5) node {$0$};
	\draw (3.55,0) node {$1$};
	\draw[font=\scriptsize] (3.55, -0.55) node {$8$};
	
	\end{tikzpicture}
\par\end{centering}
\caption{Partition to CHSH quartets. If the foursome $(a_i,b_i,x_i,y_i)$  fulfills the CHSH condition it is denoted by a blue ball, otherwise by a red ball. \label{fig:balls}}
\end{figure}

\begin{figure*}
\begin{centering}
	\begin{tikzpicture}[dot/.style={circle,inner sep =2pt}]
 	
	\begin{scope}
		\draw (0,1.5) node {$\pi(a):$};
		\draw (0,1) node {$\pi(b):$};
		\draw (0,0.5) node {$\pi(x):$};
		\draw (0,0) node {$\pi(y):$};
		\draw[font=\scriptsize] (0, -0.55) node {$\pi(i):$};
	
		\node[dot, fill=red] at (0.75,2) {};
		\draw (0.75,1.5) node {$$};
		\draw (0.75,1) node {$$};
		\draw (0.75,0.5) node {$$};
		\draw (0.75,0) node {$$};
		\draw[font=\scriptsize] (0.75, -0.55) node {$$};
			
		\node[dot, fill=blue] at (1.15,2) {};
		\draw (1.15,1.5) node {$$};
		\draw (1.15,1) node {$$};
		\draw (1.15,0.5) node {$$};
		\draw (1.15,0) node {$$};
		\draw[font=\scriptsize] (1.15, -0.55) node {$$};
		
		\node[dot, fill=blue] at (1.55,2) {};
		\draw (1.55,1.5) node {$$};
		\draw (1.55,1) node {$$};
		\draw (1.55,0.5) node {$$};
		\draw (1.55,0) node {$$};
		\draw[font=\scriptsize] (1.55, -0.55) node {$$};
	
		\node[dot, fill=blue] at (1.95,2) {};
		\draw (1.95,1.5) node {$$};
		\draw (1.95,1) node {$$};
		\draw (1.95,0.5) node {$$};
		\draw (1.95,0) node {$$};
		\draw[font=\scriptsize] (1.95, -0.55) node {$$};
		
		\node[dot, fill=red] at (2.35,2) {};
		\draw (2.35,1.5) node {$$};
		\draw (2.35,1) node {$$};
		\draw (2.35,0.5) node {$$};
		\draw (2.35,0) node {$$};
		\draw[font=\scriptsize] (2.35, -0.55) node {$$};
		
		\node[dot, fill=blue] at (2.75,2) {};
		\draw (2.75,1.5) node {$$};
		\draw (2.75,1) node {$$};
		\draw (2.75,0.5) node {$$};
		\draw (2.75,0) node {$$};
		\draw[font=\scriptsize] (2.75, -0.55) node {$$};
		
		\node[dot, fill=red] at (3.15,2) {};
		\draw (3.15,1.5) node {$$};
		\draw (3.15,1) node {$$};
		\draw (3.15,0.5) node {$$};
		\draw (3.15,0) node {$$};
		\draw[font=\scriptsize] (3.15, -0.55) node {$$};
	
		\node[dot, fill=red] at (3.55,2) {};
		\draw (3.55,1.5) node {$$};
		\draw (3.55,1) node {$$};
		\draw (3.55,0.5) node {$$};
		\draw (3.55,0) node {$$};
		\draw[font=\scriptsize] (3.55, -0.55) node {$$};
	\end{scope}
	
	\begin{scope}[shift={(5,0)}]
		\draw (0,1.5) node {$\pi(a):$};
		\draw (0,1) node {$\pi(b):$};
		\draw (0,0.5) node {$\pi(x):$};
		\draw (0,0) node {$\pi(y):$};
		\draw[font=\scriptsize] (0, -0.55) node {$\pi(i):$};
	
		\node[dot, fill=red] at (0.75,2) {};
		\draw (0.75,1.5) node {$$};
		\draw (0.75,1) node {$$};
		\draw (0.75,0.5) node {$$};
		\draw (0.75,0) node {$$};
		\draw[font=\scriptsize] (0.75, -0.55) node {$$};
			
		\node[dot, fill=blue] at (1.15,2) {};
		\draw (1.15,1.5) node {$$};
		\draw (1.15,1) node {$$};
		\draw (1.15,0.5) node {$$};
		\draw (1.15,0) node {$$};
		\draw[font=\scriptsize] (1.15, -0.55) node {$$};
		
		\node[dot, fill=blue] at (1.55,2) {};
		\draw (1.55,1.5) node {$0$};
		\draw (1.55,1) node {$1$};
		\draw (1.55,0.5) node {$1$};
		\draw (1.55,0) node {$1$};
		\draw[font=\scriptsize] (1.55, -0.55) node {$3$};
	
		\node[dot, fill=blue] at (1.95,2) {};
		\draw (1.95,1.5) node {$$};
		\draw (1.95,1) node {$$};
		\draw (1.95,0.5) node {$$};
		\draw (1.95,0) node {$$};
		\draw[font=\scriptsize] (1.95, -0.55) node {$$};
		
		\node[dot, fill=red] at (2.35,2) {};
		\draw (2.35,1.5) node {$1$};
		\draw (2.35,1) node {$1$};
		\draw (2.35,0.5) node {$1$};
		\draw (2.35,0) node {$1$};
		\draw[font=\scriptsize] (2.35, -0.55) node {$5$};
		
		\node[dot, fill=blue] at (2.75,2) {};
		\draw (2.75,1.5) node {$0$};
		\draw (2.75,1) node {$0$};
		\draw (2.75,0.5) node {$0$};
		\draw (2.75,0) node {$1$};
		\draw[font=\scriptsize] (2.75, -0.55) node {$6$};
		
		\node[dot, fill=red] at (3.15,2) {};
		\draw (3.15,1.5) node {$$};
		\draw (3.15,1) node {$$};
		\draw (3.15,0.5) node {$$};
		\draw (3.15,0) node {$$};
		\draw[font=\scriptsize] (3.15, -0.55) node {$$};
	
		\node[dot, fill=red] at (3.55,2) {};
		\draw (3.55,1.5) node {$0$};
		\draw (3.55,1) node {$1$};
		\draw (3.55,0.5) node {$0$};
		\draw (3.55,0) node {$1$};
		\draw[font=\scriptsize] (3.55, -0.55) node {$8$};
	\end{scope}

	\begin{scope}[shift={(10,0)}]
		\draw (0,1.5) node {$\pi(a):$};
		\draw (0,1) node {$\pi(b):$};
		\draw (0,0.5) node {$\pi(x):$};
		\draw (0,0) node {$\pi(y):$};
		\draw[font=\scriptsize] (0, -0.55) node {$\pi(i):$};
	
		\node[dot, fill=red] at (0.75,2) {};
		\draw (0.75,1.5) node {$1$};
		\draw (0.75,1) node {$0$};
		\draw (0.75,0.5) node {$0$};
		\draw (0.75,0) node {$1$};
		\draw[font=\scriptsize, orange] (0.75, -0.55) node {$2$};
			
		\node[dot, fill=blue] at (1.15,2) {};
		\draw (1.15,1.5) node {$0$};
		\draw (1.15,1) node {$0$};
		\draw (1.15,0.5) node {$0$};
		\draw (1.15,0) node {$0$};
		\draw[font=\scriptsize, orange] (1.15, -0.55) node {$1$};
		
		\node[dot, fill=blue] at (1.55,2) {};
		\draw (1.55,1.5) node {$0$};
		\draw (1.55,1) node {$1$};
		\draw (1.55,0.5) node {$1$};
		\draw (1.55,0) node {$1$};
		\draw[font=\scriptsize] (1.55, -0.55) node {$3$};
	
		\node[dot, fill=blue] at (1.95,2) {};
		\draw (1.95,1.5) node {$1$};
		\draw (1.95,1) node {$1$};
		\draw (1.95,0.5) node {$0$};
		\draw (1.95,0) node {$0$};
		\draw[font=\scriptsize, purple] (1.95, -0.55) node {$7$};
		
		\node[dot, fill=red] at (2.35,2) {};
		\draw (2.35,1.5) node {$1$};
		\draw (2.35,1) node {$1$};
		\draw (2.35,0.5) node {$1$};
		\draw (2.35,0) node {$1$};
		\draw[font=\scriptsize] (2.35, -0.55) node {$5$};
		
		\node[dot, fill=blue] at (2.75,2) {};
		\draw (2.75,1.5) node {$0$};
		\draw (2.75,1) node {$0$};
		\draw (2.75,0.5) node {$0$};
		\draw (2.75,0) node {$1$};
		\draw[font=\scriptsize] (2.75, -0.55) node {$6$};
		
		\node[dot, fill=red] at (3.15,2) {};
		\draw (3.15,1.5) node {$1$};
		\draw (3.15,1) node {$0$};
		\draw (3.15,0.5) node {$1$};
		\draw (3.15,0) node {$0$};
		\draw[font=\scriptsize, purple] (3.15, -0.55) node {$4$};
	
		\node[dot, fill=red] at (3.55,2) {};
		\draw (3.55,1.5) node {$0$};
		\draw (3.55,1) node {$1$};
		\draw (3.55,0.5) node {$0$};
		\draw (3.55,0) node {$1$};
		\draw[font=\scriptsize] (3.55, -0.55) node {$8$};
	\end{scope}

	\end{tikzpicture}

\par\end{centering}
\caption{The permutation $\pi$. \label{fig:permutation}}
\end{figure*}

Given a specific entry $\mathrm{P}_{AB|XY}(ab|xy)$ we would like to know how many entries with the same inputs $x,y$ have to have the same value as the given entry. Formally, we would like to have a lower bound on 
\begin{align*}
	\mathcal{N}(a,b,x,y) =  \Big| &\{ (\tilde{a},\tilde{b}) \in \{0,1\}^n \times \{0,1\}^n \quad \text{s.t.} \\
	&\mathrm{P}_{AB|XY}(\tilde{a}\tilde{b}|xy) = \mathrm{P}_{AB|XY}(ab|xy) \} \Big| \; .
\end{align*}
How small can $\mathcal{N}(a,b,x,y)$ be? Or in other words, in how many ways can we change $a$ and $b$ while getting an entry $\mathrm{P}_{AB|XY}(\tilde{a}\tilde{b}|xy)$ with the same value?  We now prove 
\[
	\mathcal{N}(a,b,x,y) \geq 2^{n} {n \choose {N_{CHSH}} }\;.
\]
One way of changing $(a,b,x,y)$ to $(\tilde{a},\tilde{b},x,y)$ without changing the value of the entry is to change $(a,b,x,y)$ to $(\tilde{a},\tilde{b},x,y)$ such that both will have the same sequence of colored balls. For example, in Figure \ref{fig:balls} we can change $(a_1,b_1,x_1,y_1)=(0,0,0,0)$ to $(\tilde{a}_1,\tilde{b}_1,x_1,y_1)=(1,1,0,0)$ since they have the same inputs $(x_1,y_1)=(0,0)$ and both will be denoted by a blue ball (therefore according to the symmetry this change will not affect the overall value of the entry). In how many such different ways can we change $a$ and $b$? For every index $i\in[n]$ and every input bits $x_i,y_i$ there are two $a_i,b_i$ for which the CHSH conditions holds (i.e. blue ball) and two for which it does not (red ball). Therefore there are exactly $2^n$ different pairs of strings $(\tilde{a},\tilde{b})$ such that $(a,b,x,y)$ and $(\tilde{a},\tilde{b},x,y)$ have the same sequence of colored balls and therefore $\mathrm{P}_{AB|XY}(\tilde{a}\tilde{b}|xy) = \mathrm{P}_{AB|XY}(ab|xy)$.

Changing $a$  and $b$ in different ways than the way given above will necessarily change the colors sequence. However, we can still prove using the permutation invariance of $\mathrm{P}_{AB|XY}$ that for some specific changes the value of the entry will still stay the same. The specific changes that we consider are determined by permutations of the colored balls. 

In order to understand how every different permutation of the balls $\pi$ is realised as a permutation on $x,y,a,b$ consider the example drawn in Figure \ref{fig:permutation}. On the left side we see a permutation of the balls from Figure \ref{fig:balls}. We start by filling up the columns for which there is no change in the color of the ball with the original columns. Then we pair the permuted balls such that each blue ball is replaced with a red ball, and permute the columns according to this paring. The permutation in the figure, for example, is just the permutation of indices $(1,2)$ and $(4,7)$. In general, every permutation of the balls can be described by such pairing and between every two different permutation we will have at least one index which will be permuted in one of them and not in the other. 

For every permutation $\pi$ as described above we have 
\begin{equation}\label{eq:perm_result}
	\mathrm{P}_{AB|XY}(ab|xy) = \mathrm{P}_{AB|XY}\left( \pi(a)\pi(b)|\pi(x)\pi(y) \right) 
\end{equation}
since the state is permutation invariant. We will now show that due to the CHSH symmetry we also have
\begin{equation}\label{eq:sym_result}
	\mathrm{P}_{AB|XY}\left( \pi(a)\pi(b)|\pi(x)\pi(y) \right)  = \mathrm{P}_{AB|XY}\left(\tilde{a}_{\pi}\tilde{b}_{\pi}|xy \right) 
\end{equation}
where $\tilde{a}_{\pi}=a$ and $\tilde{b}_{\pi}$ is derived from $b$ by negation of all the bits which are being permuted in $\pi$. 

To see that Equation (\ref{eq:sym_result}) is correct recall that the permutation $\pi$ permuted two columns $i,j$ only if for one of them the CHSH condition holds and for the other not.  Therefore, if for example we had $a_i \oplus b_i = x_i \cdot y_i$ (and the index $i$ was permuted by $\pi$) then $\left(\pi(a)\right)_i \oplus \left(\pi(b)\right)_i \neq \left(\pi(x)\right)_i \cdot \left(\pi(y)\right)_i$. By definition $(\tilde{a}_{\pi})_i = a_i$ and $(\tilde{b}_{\pi})_i = \overline{b_i}$ and therefore we also know that $a_i \oplus (\tilde{b}_{\pi})_i \neq x_i \cdot y_i$. Combining this with the CHSH symmetry and proceeding in the same way for all the indices that $\pi$ permutes, we get Equation (\ref{eq:sym_result}).

Combining Equations (\ref{eq:perm_result}) and (\ref{eq:sym_result}) we get
\[
	\mathrm{P}_{AB|XY}(ab|xy) = \mathrm{P}_{AB|XY}\left(a \tilde{b}_{\pi}|xy \right) \;.
\]
Any different permutation $\pi$ will result in a different $\tilde{b}_{\pi}$ and therefore for any different permutation $\pi$ we have a different entry $\mathrm{P}_{AB|XY}(\tilde{a},\tilde{b},x,y)$ with the same value as the original entry $\mathrm{P}_{AB|XY}(a,b,x,y)$. Since there are ${n \choose N_{CHSH}}$ different permutations of the balls we have ${n \choose N_{CHSH}}$  different ways of changing $(a,b,x,y)$ to $(\tilde{a},\tilde{b},x,y)$. 

We can now answer our original question and bound $\mathcal{N}(a,b,x,y)$. We can combine both of the ways given above to change $a$ and $b$ without changing the value of the entry according to $\mathrm{P}_{AB|XY}$ (with or without changing the colors sequence). This implies that in total there are at least $2^{n} \times {n \choose {N_{CHSH}}}$ different ways of changing $a$  and $b$ and we can conclude that
\begin{equation} \label{eq:N_bound}
	\mathcal{N}(a,b,x,y) \geq 2^{n}  {n \choose {N_{CHSH}}} \;.
\end{equation}

Since for all $x,y$ $\sum_{a,b} \mathrm{P}_{AB|XY}(ab|xy) = 1$, we get from Equation \eqref{eq:N_bound} the following bound on the value of $\mathrm{P}_{AB|XY}(ab|xy)$:
\[
	\mathrm{P}_{AB|XY}(ab|xy) \leq 2^{-n} {n \choose {N_{CHSH}}}^{-1} \;. \qedhere
\]
\end{proof}

We can now prove Corollary \ref{cor:chsh-post-selection} directly. 
\begin{proof}[A direct proof of Corollary \ref{cor:chsh-post-selection}]
	By combining Lemma \ref{lem:chsh-de-finetti-bound} and Lemma \ref{lem:upper-bound-permutation} we get Corollary \ref{cor:chsh-post-selection}.
\end{proof}

The above proof for states which have the CHSH symmetry can be also applied to states which have the symmetry induced by the more general chained Bell inequalities \cite{braunstein1990wringing,Barrett2006Maximally} in a similar way. Since the number of measurements of the basic states $Q_{A_1B_1|X_1Y_1}$ does not play a role in the structure of our de Finetti state (see Lemma \ref{lem:chsh-de-finetti-bound}) the same bounds exactly will hold for states with the chained Bell inequality symmetry.


\section{Proof of the general de Finetti reduction \label{sec:general-proof}}

In this section we prove our most general de Finetti reduction given in Theorem \ref{thm:post-selection}.

The proof proceeds along the same lines as the direct proof of Corollary \ref{cor:chsh-post-selection}  in Appendix \ref{sec:chsh-proof}. We start by explaining the types of symmetries $\mathcal{S}$ that we deal with and how to construct the appropriate de Finetti state $\tau^{\mathcal{S}}_{A|X}$. We then give a lower bound on the entries of the de Finetti state, analogously to Lemma \ref{lem:chsh-de-finetti-bound}, and an upper bound on the entries of a permutation invariant state $\mathrm{P}_{A|X}$ with the symmetry $\mathcal{S}$, analogously to Lemma \ref{lem:upper-bound-permutation}. Using these two bounds we get Theorem \ref{thm:post-selection}.

\subsection{Symmetries and de Finetti states}

A symmetry $\mathcal{S}$ is a set of conditions. We say that a state $\mathrm{P}_{A|X}$ has a symmetry $\mathcal{S}$ if it fulfils all of these conditions. 

For any symmetry $\mathcal{S}$ that we consider we define a different de Finetti state $\tau^{\mathcal{S}}_{A|X}$ of the form $\int Q_{A_1|X_1}^{\otimes n} \mathrm{d}Q_{A_1|X_1}$. When defining such a de Finetti state for a specific type of symmetry $\mathcal{S}$ we are free to choose  the measure $\mathrm{d}Q_{A_1|X_1}$ as we like. The key idea is to choose the structure of the states $Q_{A_1|X_1}$ on which we integrate in such a way that it ``encodes'' the symmetry $\mathcal{S}$ that we consider.

For example, assume we consider a family of states $\mathrm{P}_{A|X}$, with $a,x \in \{0,1\}^n$, which has the following type of symmetry $\mathcal{S}$:
\[
	\forall i\in[n] \;\; \forall a_{\overline{i}},x \quad 
	\mathrm{P}_{A|X}(a_{\overline{i}}0,|x) = \mathrm{P}_{A|X}(a_{\overline{i}}1|x)
\]
(that is, given $a_{\overline{i}}=a_1,\dotsc,a_{i-1},a_{i+1},\dotsc,a_n$ and $x$ the probability that the i'th bit $a_i$ will be 0 or 1 is the same).  We then want $Q_{A_1|X_1}$ to have the following property:
\[
	\forall x_1\in\{0,1\} \quad Q_{A_1|X_1}(0|x_1) = Q_{A_1|X_1}(1|x_1)
\]
and we say that $Q_{A_1|X_1}$ encodes the symmetry $\mathcal{S}$. 

For the more general treatment it will be easier to start by defining the allowed structure of the state  $Q_{A_1|X_1}$  and from it deduce the different types of symmetries and states that we consider. 

\subsubsection*{Allowed $Q_{A_1|X_1}$ states}

Consider a 1-party state $Q_{A_1|X_1}$ where $x_1\in\{0,1,\dotsc,m-1\}$ and $a_1\in\{0,1,\dotsc l-1 \}$. We can think about $Q_{A_1|X_1}$ as $m$ vectors of size $l$. We call each of the $l$ long vectors an \emph{input vector}, since it describes the probability distribution of the outputs, given a specific input (see Figure \ref{fig:input_vector}). Defining a state  $Q_{A_1|X_1}$ then reduces to defining its input vectors. 

\begin{figure}
\begin{centering}
\begin{tikzpicture}

	\draw[step=1] (0,0) grid (6,1);
	\draw[thick] (3,-0.25) -- (3,1.25);
	
	\draw[font=\small] (0.5,0.5) node {$Q(0|0)$};
	\draw[font=\small] (1.5,0.5) node {$Q(1|0)$};
	\draw[font=\small] (2.5,0.5) node {$Q(2|0)$};
	
	\draw[font=\small] (3.5,0.5) node {$Q(0|1)$};
	\draw[font=\small] (4.5,0.5) node {$Q(1|1)$};
	\draw[font=\small] (5.5,0.5) node {$Q(2|1)$};
	
	\draw[decorate,decoration={brace,mirror,amplitude=10pt}] (0,-0.3) -- (2.95,-0.3) node [midway,yshift=-0.6cm] {input vector};

\end{tikzpicture}

\par\end{centering}
\caption{A ststem $Q_{A_1|X_1}$ with $m=2$ inputs and $l=3$ outputs}
\label{fig:input_vector}
\end{figure}

Keeping in mind that we will need to integrate over $Q_{A_1|X_1}$ to get a de Finetti state, we fill in the entries $Q_{A_1|X_1}(a|x)$ of the input vectors with different parameters $\{ p_1, p_2, \dotsc p_{d} \}$, while making sure that the sum of the input vector is 1 for every value of the parameters $p_i$. The number of parameters $d$ that we use to define  $Q_{A_1|X_1}$ quantifies the number of \emph{degrees of freedom} that  $Q_{A_1|X_1}$ has, and it is bounded by $(l-1)m$. Figure \ref{fig:paramaters_example} shows two different ways of filling in an input vector of length 3 with parameters. 

\begin{figure}
\begin{centering}
\begin{tikzpicture}

	\begin{scope}[shift={(0,0)}]
		\draw[step=1] (0,0) grid (3,1);
		
		\draw[red] (0.5,0.5) node {$p_1$};
		\draw[blue] (1.5,0.5) node {$p_2$};
		\draw[font=\small] (2.25,0.75) node {$1-$};
		\draw[font=\small] (2.5,0.5) node {$p_1-$};
		\draw[font=\small] (2.75,0.25) node {$p_2$};
			
		\draw (1.5,1.5) node {$d = 2$};
	\end{scope}

	\begin{scope}[shift={(5,0)}]
		\draw[step=1] (0,0) grid (3,1);
		
		\draw[red] (0.5,0.5) node {$p_1$};
		\draw[font=\small] (1.35,0.65) node {$1-$};
		\draw[font=\small] (1.65,0.35) node {$2p_1$};
		\draw[red] (2.5,0.5) node {$p_1$};
				
		\draw (1.5,1.5) node {$d = 1$};
	\end{scope}

\end{tikzpicture}
\par\end{centering}
\caption{Different ways of filling an input vector of length 3 with $d$ paramaters}
\label{fig:paramaters_example}
\end{figure}

We can now define a specific set of allowed states $Q_{A_1|X_1}$. 
\begin{defn} \label{defn:allowed_Q}
	A state $Q_{A_1|X_1}$ is said to be allowed if given any two of its input vectors, they are either a permutation of one another or they have a completely different set of parameters. 
\end{defn}

In Figure \ref{fig:basic_state_example} we give 2 examples for allowed $Q_{A_1|X_1}$ states with 2 inputs and 3 outputs. 

\begin{figure}
\begin{centering}
\begin{tikzpicture}

	\begin{scope}
		\draw[step=1] (0,0) grid (6,1);
		\draw[thick] (3,-0.25) -- (3,1.25);
		
		\draw[red] (0.5,0.5) node {$p_1$};
		\draw[blue] (1.5,0.5) node {$p_2$};
		\draw[font=\small] (2.25,0.75) node {$1-$};
		\draw[font=\small] (2.5,0.5) node {$p_1-$};
		\draw[font=\small] (2.75,0.25) node {$p_2$};
		
		\draw[blue] (3.5,0.5) node {$p_2$};
		\draw[font=\small] (4.25,0.75) node {$1-$};
		\draw[font=\small] (4.5,0.5) node {$p_1-$};
		\draw[font=\small] (4.75,0.25) node {$p_2$};
		\draw[red] (5.5,0.5) node {$p_1$};	
	\end{scope}

	\begin{scope}[shift={(0,2)}]
		\draw[step=1] (0,0) grid (6,1);
		\draw[thick] (3,-0.25) -- (3,1.25);
		
		\draw[red] (0.5,0.5) node {$p_1$};
		\draw[red] (1.5,0.5) node {$p_1$};
		\draw[font=\small] (2.35,0.65) node {$1-$};
		\draw[font=\small] (2.65,0.35) node {$2p_1$};
		
		\draw[blue] (3.5,0.5) node {$p_2$};
		\draw[font=\small] (4.35,0.65) node {$1-$};
		\draw[font=\small] (4.65,0.35) node {$2p_2$};
		\draw[blue] (5.5,0.5) node {$p_2$};
	\end{scope}

\end{tikzpicture}

\par\end{centering}
\caption{Two examples for allowed $Q_{A_1|X_1}$ states. Above, in each input vector, we have a different independent parameter. Below, the two input vectors are permutations of one another.}
\label{fig:basic_state_example}
\end{figure}

\subsubsection*{The symmetry $\mathcal{S}$ behind $Q_{A_1|X_1}$}

When considering a specific state $Q_{A_1|X_1}$ it is easy to say which set of conditions it fulfils, i.e., which symmetry $\mathcal{S}$ it encodes. For example, the state on the top of Figure~\ref{fig:basic_state_example} encodes the following symmetry of a state $\mathrm{P}_{A|X}$ with $a\in\{0,1,2\}^n$ and $x\in \{0,1\}^n$,
\begin{align*}
	\forall i\in[n] \quad \forall x_{\overline{i}},a_{\overline{i}} \quad & \mathrm{P}_{A|X}(a_{\overline{i}}0|x_{\overline{i}}0) = \mathrm{P}_{A|X}(a_{\overline{i}}1|x_{\overline{i}}0) \\
													 & \mathrm{P}_{A|X}(a_{\overline{i}}0|x_{\overline{i}}1) = \mathrm{P}_{A|X}(a_{\overline{i}}2|x_{\overline{i}}1)  \;.
\end{align*}

More generally, the symmetry $\mathcal{S}$ can be constructed from $Q_{A_1|X_1}$ as follows. 
\begin{defn} \label{defn:symmetry_relation}
	Given a state $Q_{A_1|X_1}$ as above the symmetry $\mathcal{S}$ is defined by the following symmetry conditions:
	
	For all $i\in[n]$,  for all $x,a$ and $x',a'$ where $a'=a_1\dots a_{i-1} a'_i a_{i+1} \dots a_n$ and $x'$ is defined in a similar way, if $Q_{A_1|X_1}(a_i|x_i)=Q_{A_1|X_1}(a'_i|x'_i)$ then $\mathrm{P}_{A|X}(a|x)=\mathrm{P}_{A|X}(a'|x')$.
\end{defn}
That is, if we change the pair $(a_i,x_i)$ to some  $(a'_i,x'_i)$ of the ``same type'' according to  $Q_{A_1|X_1}$, then the probability according to $\mathrm{P}_{A|X}$ does not change. 

In Definition \ref{defn:symmetry_relation} we started from the state $Q_{A_1|X_1}$ and derived the symmetry $\mathcal{S}$. However, given a set of conditions $\mathcal{S}$ one can also try to construct a state $Q_{A_1|X_1}$ which fulfils them. For every symmetry $\mathcal{S}$ for which a state $Q_{A_1|X_1}$ can be constructed such that the condition in Definition \ref{defn:symmetry_relation} holds our proof can be applied. In other words, the only thing needed for our theorem to hold is a pair $\left( \mathcal{S}, Q_{A_1|X_1}\right)$ with the desired relationship.

\subsubsection*{The de Finetti state --- integration over $Q_{A_1|X_1}$}

Given a specific structure of  $Q_{A_1|X_1}$ as previously described, we can now perform the integration over a tensor product of $n$ such states and get a de Finetti  state.  As mentioned before, we are free to choose the measure $\mathrm{d}Q_{A_1|X_1}$ with which we perform the integration. 

For simplicity, we only consider $Q_{A_1|X_1}$ states in which all the input vectors are permutations of one another (recall Definition \ref{defn:allowed_Q}). It will later become clear, that if we have more independent input vectors, then we can just multiply the different integrals by one another. In general, due to our proof technique, our entire proof can be applied independently for each set of permuted input vectors and then combined in the end to one proof by multiplying the results. 

In the rest of the proof, we use the following notation. Given the state $Q_{A_1|X_1}$ we denote by $t_i$, for $0 \leq i \leq d$,  the number of times the parameter $p_i$ appears in each input vector of  $Q_{A_1|X_1}$. In addition, we define $t_{d+1}$ to be the number of times the ``unfree'' entry appears in the input vector. Using this notation, we can set the range of the parameter $p_i$ to be $\left[ 0,c_i \equiv \frac{1}{t_i}\left(1 - \sum_{j<i}t_jp_j \right) \right]$. 
As an example, consider the input vector in Figure \ref{fig:input_vector_t_and_c}. $p_1$ appears two times and therefore $t_1=2$ and $c_1=\frac{1}{2}$. Indeed, in order for this input vector to be a valid probability distribution we must have $p_1\in \left[ 0,\frac{1}{2} \right]$. For $p_2$ we have $t_2=3$ and $c_2=\frac{1}{3}\left(1-2p_1 \right)$, and $t_3=1$. 

\begin{figure}
\begin{centering}
\begin{tikzpicture}

	\draw[step=1] (0,0) grid (6,1);
	
	\draw[red] (0.5,0.5) node {$p_1$};
	\draw[blue] (1.5,0.5) node {$p_2$};
	\draw[blue] (2.5,0.5) node {$p_2$};
	
	\draw[red] (3.5,0.5) node {$p_1$};
	\draw[blue] (4.5,0.5) node {$p_2$};
	\draw[font=\small] (5.25,0.8) node {$1-$};
	\draw[font=\small] (5.53,0.5) node {$2p_1-$};
	\draw[font=\small] (5.75,0.2) node {$3p_2$};
	
\end{tikzpicture}

\par\end{centering}
\caption{Input vector with $l=6$ and $d=2$. In this example, $t_1=2$, $t_2=3$ and $t_3=1$. We then have $c_1=\frac{1}{2}$ and $c_2=\frac{1}{3}\left(1-2p_1 \right)$.}
\label{fig:input_vector_t_and_c}
\end{figure}

Next we define the following ``coloring'' function:
\begin{equation} \label{eq:coloring}
		\mathcal{C}(a_j,x_j) =
		\left\{
			\begin{array}{ll}
				k  &  \mathrm{Q}_{A_1|X_1}(a_j|x_j)= p_k\\
				d+1 & \mbox{otherwise}
			\end{array}
		\right.
	\end{equation}
For every pair of strings $(a,x)$, where $a\in \{0,1,\dotsc,l-1\}^n$ and $x\in \{0,1,\dotsc,m-1\}^n$, we denote by $N_i$ the number of indices $j\in[n]$ for which $\mathcal{C}(a_j,x_j) = i$. Using this definition we have $N_{d+1}=n-\sum_{j=1}^d N_j \;$\footnote{Note that the $N_i$'s are functions of the strings $a$ and $x$.}.

Using the notation above, we can now define our measure to be\footnote{Remember that the $c_i$'s are functions of other parameters, therefore $c_1\dotsm c_d$ is not a constant and not even symmetric regarding the different parameters.}
\begin{equation*}
	\mathrm{d}Q_{A_1|X_1} \equiv \frac{\mathrm{d}p_1}{c_1} \frac{\mathrm{d}p_2}{c_2} \dotsm \frac{\mathrm{d}p_d}{c_d} 
\end{equation*}
and use it to define our de Finetti-type state. 

\begin{defn} \label{def:de-Finetti_state}
	For any symmetry $\mathcal{S}$ and the matching state $Q_{A_1|X_1}$ as above, the de Finetti state $\tau^{\mathcal{S}}_{A|X}$ is defined by
	\begin{align*}
		& \tau^{\mathcal{S}}_{A|X}(a|x) = \int Q_{A_1|X_1}^{\otimes n}(a|x) \mathrm{d}Q_{A_1|X_1}  \\
		& \equiv \int_0^{c_1} \frac{\mathrm{d}p_1}{c_1} \int_0^{c_2} \frac{\mathrm{d}p_2}{c_2}  \dotsi \int_0^{c_d} \frac{\mathrm{d}p_d}{c_d} \; p_1^{N_1} \cdot p_2^{N_2} \dotsm p_d^{N_d}  \\ 
		&\quad \times \left[ \frac{1}{t_{d+1}} \left( 1- \sum_{j=1}^d t_j p_j \right) \right]^{n-\sum_{j=1}^d N_j} \;.
	\end{align*}
\end{defn} 

\subsection{Lower bounding the de Finetti state}
The following lemma is the analogous of Lemma \ref{lem:chsh-de-finetti-bound} in Appendix \ref{sec:chsh-proof}. 

\begin{lem} \label{lem:de-Finetti_bound}
	The following lower bound on $\tau^{\mathcal{S}}_{A|X}(a|x)$ holds for all $a,x$
	\begin{equation*}
		\tau^{\mathcal{S}}_{A|X}(a|x) \geq \prod_{j=1}^{d+1} \left( \frac{1}{t_j} \right)^{N_j} {n \choose {N_1, \dotsi ,N_d,N_{d+1}}}^{-1} \frac{1}{(n+1)^{d}}
	\end{equation*}
	where ${n \choose {N_1, \dotsi ,N_d,N_{d+1}}}$ is a multinomial coefficient. 
\end{lem}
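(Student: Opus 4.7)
The plan is to compute the nested integral in Definition~\ref{def:de-Finetti_state} explicitly via two successive changes of variables that decouple it into a product of independent Beta integrals, and then bound the result from below by loosening the resulting rational prefactor.

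First I would rescale via $u_i = t_i p_i$, so that $\mathrm{d}p_i/c_i$ simplifies to $\mathrm{d}u_i/(1-\sum_{j<i}u_j)$ with $u_i\in[0,1-\sum_{j<i}u_j]$. This substitution pulls the entire prefactor $\prod_{j=1}^{d+1}(1/t_j)^{N_j}$ out of the integral cleanly and leaves an integrand depending only on the $u_i$'s and the exponents $N_j$. Next, apply the triangular change $v_i = u_i/S_{i-1}$ with $S_{i-1} \equiv 1-\sum_{j<i}u_j = \prod_{j<i}(1-v_j)$, so that each $v_i$ ranges over $[0,1]$ independently and the Jacobian $\mathrm{d}u_i = S_{i-1}\mathrm{d}v_i$ exactly cancels the denominator in the measure. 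Tracking the powers of $S_{i-1}$ coming from $u_j^{N_j} = v_j^{N_j}S_{j-1}^{N_j}$ and from $S_d^{N_{d+1}}$ collects into $\prod_{k=1}^d(1-v_k)^{M_k}$, where $M_k = \sum_{j=k+1}^{d+1}N_j$, so the remaining integral factorises as $\prod_{k=1}^d B(N_k+1,M_k+1)$ with $B$ the Beta function.

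Finally, using $B(N_k+1,M_k+1) = N_k!\,M_k!/(N_k+M_k+1)!$ and setting $T_k \equiv \sum_{j=k}^{d+1}N_j$, one has $M_k! = T_{k+1}!$ and $N_k+M_k+1 = T_k+1$, so the product telescopes (the $T_2!,\dotsc,T_d!$ cancel between successive factors) to $\prod_{j=1}^{d+1}N_j!\,/\,\bigl(n!\,\prod_{k=1}^d(T_k+1)\bigr)$. The leading factor is exactly ${n\choose N_1,\dotsc,N_{d+1}}^{-1}$, and each $T_k \leq n$ gives $\prod_k(T_k+1) \leq (n+1)^d$, producing the claimed bound after restoring the prefactor $\prod_{j=1}^{d+1}(1/t_j)^{N_j}$. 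The main obstacle is purely bookkeeping --- correctly combining the $(1-v_k)$ exponents contributed by the many $S_{i-1}$'s --- but the choice of measure $\mathrm{d}p_i/c_i$ is engineered precisely so that the Jacobian and the $c_i$ denominator conspire to eliminate all coupling between the $v_i$'s, after which the factorisation and telescoping are automatic.
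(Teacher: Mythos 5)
Your proof is correct and amounts to essentially the same computation as the paper's: both reduce the nested integral to a product of Beta integrals, arrive at the exact value $\prod_{j=1}^{d+1} t_j^{-N_j}\,\binom{n}{N_1,\dotsc,N_{d+1}}^{-1}\prod_{k=1}^{d}\bigl(T_k+1\bigr)^{-1}$ with $T_k=\sum_{j\geq k}N_j$, and then bound each $T_k+1$ by $n+1$. The only difference is organizational --- the paper integrates out one parameter at a time by induction on $d$, applying the single-variable Beta identity at each step, whereas you decouple all $d$ integrals simultaneously via the stick-breaking change of variables; this is a somewhat cleaner route to the same telescoping product, and all of your substitutions, exponent bookkeeping, and the final estimate check out.
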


Before we continue to the proof of Lemma \ref{lem:de-Finetti_bound}, note that although we have chosen a specific ordering of the parameters in the integration in Definition \ref{def:de-Finetti_state}, this ordering does not affect the bound in Lemma \ref{lem:de-Finetti_bound}. Moreover, this bound is optimal in the sense that there is always at least one pair of strings $(a,x)$ for which the equality is reached, and this pair is independent of the chosen order of the integration. 

\begin{proof}

In the proof we use the following formula:
\begin{equation} \label{eq:basic_formula}
	\begin{split}
		\forall c>0 \; \forall n,N\in \mathbb{N}, N\leq n\\
		\int_0^{c}  \frac{\mathrm{d}p}{c} \; p^{N} (c-p)^{n-N}  \ &= c^n \int_0^1 q^{N} (1-q)^{(n-N)}  \mathrm{d}q  \\
		&= c^n \mathrm{B}(n-N+1, N+1)  \\
		&= c^n {n \choose N}^{-1} \frac{1}{n+1} 
	\end{split}
\end{equation}
where B is the Beta function. We also need the following identities:
\begin{align}
	& t_i \cdot c_i  = 1 -  \sum_{j<i}t_jp_j \label{eq:relation_1}\\
	& 1 - \sum_{j<i}t_jp_j  = t_{i-1} ( c_{i-1} - p_{i-1}) \label{eq:relation_2}\\
	& {{n- \sum_{j=1}^i N_j} \choose {N_{i+1}}} \cdot {n \choose {N_1,\dotsc ,N_i, n-\sum_{j=1}^i N_j}} = \nonumber \\ 
	&\qquad{n \choose {N_1, \dotsc , N_{i+1},n-\sum_{j=1}^{i+1} N_j}} \label{eq:multinomial}
\end{align}
\begin{widetext}
We start by proving the following by induction:
\begin{equation} \label{eq:induction}
	\begin{split}
		\int_0^{c_1} \frac{\mathrm{d}p_1}{c_1} \dotsi \int_0^{c_{d}} \frac{\mathrm{d}p_{d}}{c_{d}} \; p_1^{N_1} \dotsm p_{d}^{N_{d}} \left[ t_{d} \left(c_{d}-p_{d} \right) \right]^{n-\sum_{j=1}^{d} N_j} \geq 
		 \prod_{j=1}^{d} \left( \frac{1}{t_j} \right)^{N_j} {n \choose {N_1, \dotsi ,n - \sum_{j=1}^{d} N_j}}^{-1} \frac{1}{(n+1)^{d}}
	\end{split}
\end{equation}

\emph{Base case, $d=1$:} 
\begin{align*}
	\int_0^{c_1}  \frac{\mathrm{d}p_1}{c_1} \; p^{N_1} \left[t_1(c_1-p_1)\right]^{n-N_1} = \left(\frac{1}{t_1}\right)^{N_1} {n \choose N_1}^{-1} \frac{1}{n+1}
\end{align*}
This follows from Equation \eqref{eq:basic_formula} while noting that for the first index we have $c_1=\frac{1}{t_1}$ by definition.

\emph{Induction hypothesis for d-1:}
\begin{equation} \label{eq:induction_hypothesis}
	\begin{split}
		\int_0^{c_1} \frac{\mathrm{d}p_1}{c_1} \dotsi \int_0^{c_{d-1}} \frac{\mathrm{d}p_{d-1}}{c_{d-1}} \; p_1^{N_1} \dotsm p_{d-1}^{N_{d-1}} \left[ t_{d-1} \left(c_{d-1}-p_{d-1} \right) \right]^{n-\sum_{j=1}^{d-1} N_j} \geq \\
		 \prod_{j=1}^{d-1} \left( \frac{1}{t_j} \right)^{N_j} {n \choose {N_1, \dotsi N_{d-1} ,n - \sum_{j=1}^{d-1} N_j}}^{-1} \frac{1}{(n+1)^{d-1}}
	\end{split}
\end{equation}

\emph{Inductive step:}

\begin{align}
	& \int_0^{c_1} \frac{\mathrm{d}p_1}{c_1} \dotsi \int_0^{c_{d}} \frac{\mathrm{d}p_{d}}{c_{d}} \; p_1^{N_1} \dotsm p_{d}^{N_{d}} \left[ t_{d} \left(c_{d}-p_{d} \right) \right]^{n-\sum_{j=1}^{d} N_j} = \nonumber \\
	& \int_0^{c_1} \frac{\mathrm{d}p_1}{c_1} \dotsi \int_0^{c_{d-1}} \frac{\mathrm{d}p_{d-1}}{c_{d-1}} \; p_1^{N_1} \dotsm p_{d-1}^{N_{d-1}} \int_0^{c_d} \frac{\mathrm{d}p_d}{c_d} \; p_d^{N_d} \left[ t_d \left(c_d-p_d \right) \right]^{n-\sum_{j=1}^{d-1} N_j -N_d} = \label{step_1}\\
	\begin{split}
		& \int_0^{c_1} \frac{\mathrm{d}p_1}{c_1} \dotsi \int_0^{c_{d-1}} \frac{\mathrm{d}p_{d-1}}{c_{d-1}} \; p_1^{N_1} \dotsm p_{d-1}^{N_{d-1}}  \\
		& \quad \qquad \qquad \times t_d^{n- N_d - \sum_{j=1}^{d-1} N_j} c_d^{n-\sum_{j=1}^{d-1} N_j } { n-\sum_{j=1}^{d-1} N_j  \choose N_d}^{-1} \frac{1}{n-\sum_{j=1}^{d-1} N_j +1} =  
	\end{split} \label{step_2} \\
	\begin{split}
		& \left( \frac{1}{t_{d}} \right)^{N_{d}} { n-\sum_{j=1}^{d-1} N_j  \choose N_d}^{-1} \frac{1}{n-\sum_{j=1}^{d-1} N_j + 1}   \\
		& \quad \qquad \qquad \times \int_0^{c_1} \frac{\mathrm{d}p_1}{c_1} \dotsi \int_0^{c_{d-1}} \frac{\mathrm{d}p_{d-1}}{c_{d-1}} \; p_1^{N_1} \dotsm p_{d-1}^{N_{d-1}} \left( 1- \sum_{j=1}^{d-1} t_j p_j \right)^{n-\sum_{j=1}^{d-1} N_j} = 
	\end{split}  \label{step_3} 
\end{align}

\begin{align}
	\begin{split}
		& \left( \frac{1}{t_{d}} \right)^{N_{d}} { n-\sum_{j=1}^{d-1} N_j  \choose N_d}^{-1} \frac{1}{n-\sum_{j=1}^{d-1} N_j + 1}   \\
		&  \qquad \quad \times \int_0^{c_1} \frac{\mathrm{d}p_1}{c_1} \dotsi \int_0^{c_{d-1}} \frac{\mathrm{d}p_{d-1}}{c_{d-1}} \; p_1^{N_1} \dotsm p_{d-1}^{N_{d-1}} \left[ t_{d-1} \left(c_{d-1}-p_{d-1} \right) \right]^{n-\sum_{j=1}^{d-1} N_j} \geq 
	\end{split} \label{step_4} \\
	\begin{split}
		& \left( \frac{1}{t_{d}} \right)^{N_{d}} { n-\sum_{j=1}^{d-1} N_j  \choose N_d}^{-1} \frac{1}{n-\sum_{j=1}^{d-1} N_j + 1}   \\
		& \qquad \qquad \qquad \qquad \times \prod_{j=1}^{d-1} \left( \frac{1}{t_j} \right)^{N_j} {n \choose {N_1, \dotsi N_{d-1} ,n - \sum_{j=1}^{d-1} N_j}}^{-1} \frac{1}{(n+1)^{d-1}} \geq 
	\end{split} \label{step_5}\\
	& \prod_{j=1}^{d} \left( \frac{1}{t_j} \right)^{N_j} {n \choose {N_1, \dotsi N_d}}^{-1} \frac{1}{(n+1)^d} 
\end{align}

where we used Equation \eqref{eq:basic_formula} to get from \eqref{step_1} to \eqref{step_2}, Equation \eqref{eq:relation_1} to get from \eqref{step_2} to \eqref{step_3}, Equation \eqref{eq:relation_2} to get from \eqref{step_3} to \eqref{step_4}, the induction hypothesis \eqref{eq:induction_hypothesis} to get from \eqref{step_4} to \eqref{step_5} and Equation \eqref{eq:multinomial} in the last line.

Next, we prove the lemma by using Equation \eqref{eq:induction}:
\begin{align}
	& \tau^{\mathcal{S}}_{A|X}(a|x) = \nonumber \\
	& \int_0^{c_1} \frac{\mathrm{d}p_1}{c_1} \dotsi \int_0^{c_d} \frac{\mathrm{d}p_d}{c_d} \; p_1^{N_1} \dotsm p_d^{N_d} \cdot \left[ \frac{1}{t_{d+1}} \left( 1- \sum_{j=1}^d t_j p_j \right) \right]^{n-\sum_{j=1}^d N_j} = \nonumber \\
	& \left( \frac{1}{t_{d+1}} \right)^{N_{d+1}} \int_0^{c_1} \frac{\mathrm{d}p_1}{c_1} \dotsi \int_0^{c_d} \frac{\mathrm{d}p_d}{c_d} \; \; p_1^{N_1} \dotsm p_d^{N_d} \left( 1- \sum_{j=1}^d t_j p_j \right)^{n-\sum_{j=1}^d N_j} = \label{step_6}\\
	& \left( \frac{1}{t_{d+1}} \right)^{N_{d+1}} \int_0^{c_1} \frac{\mathrm{d}p_1}{c_1} \dotsi \int_0^{c_{d}} \frac{\mathrm{d}p_{d}}{c_{d}} \; p_1^{N_1} \dotsm p_{d}^{N_{d}} \left[ t_{d} \left(c_{d}-p_{d} \right) \right]^{n-\sum_{j=1}^{d} N_j} \geq \label{step_7} \\
	& \left( \frac{1}{t_{d+1}} \right)^{N_{d+1}} \prod_{j=1}^{d} \left( \frac{1}{t_j} \right)^{N_j} {n \choose {N_1, \dotsi N_d}}^{-1} \frac{1}{(n+1)^d} = \label{step_8} \\
	& \qquad \qquad \qquad \qquad \qquad\qquad \qquad \qquad \quad \prod_{j=1}^{d+1} \left( \frac{1}{t_j} \right)^{N_j} {n \choose {N_1, \dotsi N_d}}^{-1} \frac{1}{(n+1)^d} \nonumber
\end{align}
where we used Equation \eqref{eq:relation_2} to get from \eqref{step_6} to \eqref{step_7} and  Equation \eqref{eq:induction} to get from \eqref{step_7} to \eqref{step_8}. \qedhere
\end{widetext}
\end{proof}

\subsection{Upper bounding a permutation invariant state $\mathrm{P}_{A|X}$ with symmetry $\mathcal{S}$}

The following lemma gives us an upper bound on any permutation invariant state $\mathrm{P}_{A|X}(a|x)$ with the symmetry $\mathcal{S}$. This lemma is the analogous of Lemma \ref{lem:upper-bound-permutation} in Appendix \ref{sec:chsh-proof}. 
\begin{lem}\label{lem:symmetry-bound-general}
	For every permutation invariant state $\mathrm{P}_{A|X}(a|x)$ with symmetry $\mathcal{S}$ we have
	\[
		\forall a,x \quad \mathrm{P}_{A|X}(a|x) \leq \prod_{j=1}^{d+1} \left( \frac{1}{t_j} \right)^{N_j} {n \choose {N_1, \dotsi ,N_d,N_{d+1}}}^{-1}  \;.
	\]
\end{lem}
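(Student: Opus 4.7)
The plan is to parallel the direct argument used in the CHSH case (Lemma~\ref{lem:upper-bound-permutation}): lower-bound the number
\[
  \mathcal{N}(a,x):=\bigl|\{\tilde a:\mathrm{P}_{A|X}(\tilde a|x)=\mathrm{P}_{A|X}(a|x)\}\bigr|
\]
and then invoke normalization, $\sum_{\tilde a}\mathrm{P}_{A|X}(\tilde a|x)=1$. I would first fix $(a,x)$ and, using the coloring function $\mathcal{C}$ of~\eqref{eq:coloring}, label each position $i\in[n]$ by a color in $\{1,\dots,d+1\}$; by construction $N_j$ counts the positions of color $j$. The goal then reduces to showing
\[
  \mathcal{N}(a,x)\;\ge\;\binom{n}{N_1,\dots,N_{d+1}}\prod_{j=1}^{d+1}t_j^{N_j},
\]
after which dividing $1\ge\mathcal{N}(a,x)\,\mathrm{P}_{A|X}(a|x)$ through by $\mathcal{N}(a,x)$ yields the claim.

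To produce this many equiprobable strings I would exhibit two moves mirroring those of the CHSH argument. Move~(a) keeps $x$ and the color pattern fixed: at each position $i$ replace $a_i$ by any $\tilde a_i$ satisfying $Q_{A_1|X_1}(\tilde a_i|x_i)=Q_{A_1|X_1}(a_i|x_i)$. Since the input vectors of $Q_{A_1|X_1}$ are permutations of one another, for each $x_i$ there are exactly $t_{\mathcal{C}(a_i,x_i)}$ such outputs, giving $\prod_j t_j^{N_j}$ admissible strings, each equiprobable with $(a,x)$ by Definition~\ref{defn:symmetry_relation} applied position by position. Move~(b) targets a different color pattern with the same multiset $(N_1,\dots,N_{d+1})$: choose a permutation $\pi$ of $[n]$ carrying the original color pattern to the target one, use permutation invariance to get $\mathrm{P}_{A|X}(\pi(a)|\pi(x))=\mathrm{P}_{A|X}(a|x)$, and then apply $\mathcal{S}$ at each position to transport the input from $\pi(x)_i$ back to $x_i$ while preserving the color at $i$---which is possible because the parameter $p_{c_i}$ appears $t_{c_i}$ times in every input vector of $Q_{A_1|X_1}$. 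A string $\tilde a$ with the same input $x$ is uniquely determined by its color pattern together with a choice of output at each position compatible with that color, so the two moves jointly produce $\binom{n}{N_1,\dots,N_{d+1}}\prod_j t_j^{N_j}$ distinct equiprobable strings, establishing the desired lower bound on $\mathcal{N}(a,x)$.

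The main obstacle I expect is to justify move~(b) cleanly, i.e., to show that \emph{every} target color pattern with the same multiset is realized by a composition of one permutation and $n$ applications of $\mathcal{S}$, and that the strings produced this way are all distinct, so that the count is a true lower bound on $\mathcal{N}(a,x)$ rather than being inflated by double-counting. The key enabling fact is, once more, the assumption made at the start of this section that all input vectors of $Q_{A_1|X_1}$ are permutations of each other, so $\mathcal{S}$ can freely exchange same-colored outputs across any two distinct inputs; once this is spelled out, the remaining argument is the multinomial count already written above, and the lemma follows. Finally, when $Q_{A_1|X_1}$ has several independent families of mutually-permuted input vectors, the same argument is applied to each family and the results multiplied, exactly as announced in the preceding discussion.
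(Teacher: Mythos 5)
Your proposal is correct and follows essentially the same route as the paper: the same two moves (same-color output substitutions giving the factor $\prod_j t_j^{N_j}$, and permutations of the color pattern combined with the symmetry $\mathcal{S}$ giving the multinomial factor), the same observation that distinct color patterns and distinct compatible output choices yield distinct strings $\tilde a$, and the same final normalization step $\sum_{\tilde a}\mathrm{P}_{A|X}(\tilde a|x)=1$. The distinctness concern you flag for move (b) is resolved exactly as you suggest, via the fact that every input vector of $Q_{A_1|X_1}$ contains each color $j$ exactly $t_j$ times, so the transported string exists and two different ball-permutations differ in color at some index.
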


The idea behind the proof is identical to the idea behind the proof of Lemma \ref{lem:upper-bound-permutation}. We bound the value of a specific entry $\mathrm{P}_{A|X}(a|x)$ by counting how many entries $\mathrm{P}_{A|X}(\tilde{a}|x)$ in the same input vector must have the same value as  $\mathrm{P}_{A|X}(a|x)$ due to the symmetry of $\mathrm{P}_{A|X}$. Since the sum of any input vector is 1 this will give us a bound on $\mathrm{P}_{A|X}(a|x)$. 

\begin{proof}
For our counting arguments we use here the same notation of the coloring function $\mathcal{C}$ given in Equation \eqref{eq:coloring} and the definition of $N_i$ thereafter. That is, for any $a\in \{0,1,\dotsc , l-1\}^n$ and $x\in \{0,1,\dotsc , m-1\}^n$, we denote by $N_i$ the number of indices $j\in[n]$ for which $\mathcal{C}(a_j,x_j) = i$. We can imagine this as placing a colored ball for each pair $(a_j,x_j)$ as in Figure \ref{fig:balls_general}. With this picture in mind, the symmetry $\mathcal{S}$ actually says that by changing two balls of the same color we do not change the value according to the probability distribution $\mathrm{P}_{A|X}$. 
Let 
\[
	\mathcal{N}(a,x) = \left| \{ \tilde{a}\in\{0,1,...,l-1\}^n | \mathrm{P}_{A|X}(\tilde{a}|x) =\mathrm{P} _{A|X}(a|x) \} \right| \;.
\] 
In how many ways can we change $a$ while not changing the value of the entry according to $\mathrm{P}_{A|X}$?  
We now prove 
\[
	\mathcal{N}(a,x) \geq \prod_{j=1}^{d+1} t_j^{N_j} {n \choose {N_1, \dotsi ,N_d,N_{d+1}}} \;.
\]
As in the proof of Lemma \ref{lem:upper-bound-permutation} in Appendix \ref{sec:chsh-proof}, we have two different ways of changing $a$ to $\tilde{a}$: with and without changing the color sequence of the balls.

\begin{figure}
\begin{centering}
	\begin{tikzpicture}[dot/.style={circle,inner sep =2pt}]
 	
	\draw (0,1) node {$a:$};
	\draw (0,0.5) node {$x:$};
	\draw[font=\scriptsize] (0, -0.05) node {$j:$};

	\node[dot, fill=red] at (0.75,1.5) {};
	\draw (0.75,1) node {$3$};
	\draw (0.75,0.5) node {$0$};
	\draw[font=\scriptsize] (0.75, -0.05) node {$1$};
		
	\node[dot, fill=blue] at (1.15,1.5) {};
	\draw (1.15,1) node {$1$};
	\draw (1.15,0.5) node {$0$};
	\draw[font=\scriptsize] (1.15, -0.05) node {$2$};
	
	\node[dot, fill=black] at (1.55,1.5) {};
	\draw (1.55,1) node {$0$};
	\draw (1.55,0.5) node {$1$};
	\draw[font=\scriptsize] (1.55, -0.05) node {$3$};

	\node[dot, fill=red] at (1.95,1.5) {};
	\draw (1.95,1) node {$2$};
	\draw (1.95,0.5) node {$1$};
	\draw[font=\scriptsize] (1.95, -0.05) node {$4$};
	
	\node[dot, fill=red] at (2.35,1.5) {};
	\draw (2.35,1) node {$1$};
	\draw (2.35,0.5) node {$1$};
	\draw[font=\scriptsize] (2.35, -0.05) node {$5$};
	
	\node[dot, fill=red] at (2.75,1.5) {};
	\draw (2.75,1) node {$0$};
	\draw (2.75,0.5) node {$0$};
	\draw[font=\scriptsize] (2.75, -0.05) node {$6$};
	
	\node[dot, fill=black] at (3.15,1.5) {};
	\draw (3.15,1) node {$2$};
	\draw (3.15,0.5) node {$0$};
	\draw[font=\scriptsize] (3.15, -0.05) node {$7$};

	\node[dot, fill=blue] at (3.55,1.5) {};
	\draw (3.55,1) node {$3$};
	\draw (3.55,0.5) node {$1$};
	\draw[font=\scriptsize] (3.55, -0.05) node {$8$};
	
	\begin{scope}[shift={(-1.75,2.5)}]
		\draw[step=1] (0,0) grid (8,1);
		\draw[thick] (4,-0.25) -- (4,1.25);
		
		\draw[red] (0.5,0.5) node {$p_1$};
		\draw[blue] (1.5,0.5) node {$p_2$};
		\draw[font=\small] (2.25,0.75) node {$1-$};
		\draw[font=\small] (2.55,0.5) node {$2p_1-$};
		\draw[font=\small] (2.75,0.25) node {$p_2$};
		\draw[red] (3.5,0.5) node {$p_1$};
		
		\draw[font=\small] (4.25,0.75) node {$1-$};
		\draw[font=\small] (4.55,0.5) node {$2p_1-$};
		\draw[font=\small] (4.75,0.25) node {$p_2$};
		\draw[red] (5.5,0.5) node {$p_1$};	
		\draw[red] (6.5,0.5) node {$p_1$};	
		\draw[blue] (7.5,0.5) node {$p_2$};	
	\end{scope}
	
	\end{tikzpicture}

\par\end{centering}
\caption{The coloring of the pairs $(a_j,x_j)$ according to the structure of $Q_{A|X}$ on the top. Here we have $n=8$, $N_1=4$, $N_2=2$ and $N_3=2$.} \label{fig:balls_general}
\end{figure}

Indeed, the first possible way to change $a$ without changing the value of the entry is to a change a pair $(a_j,x_j)$ to a pair $(\tilde{a_j},x_j)$ of the same color (note that we do not change $x_j$ since we want to stay in the same input vector of $\mathrm{P}_{A|X}$, i.e., not to change the input $x$). In the example of Figure \ref{fig:balls_general} we can change the first pair $(a_1,x_1) = (3,0)$ to $(\tilde{a_1},x_1)=(0,0)$ for example. How many different strings $\tilde{a}$ can we create this way? In each input vector of $Q_{A|X}$ we have $t_j$ entries of the $j$'th color\footnote{We mention the input vectors of $Q_{A|X}$ here just for simplicity. What we really mean is that we have $t_j$ symmetry conditions, but these were ``constructed'' from $Q_{A|X}$ in Definition \ref{defn:symmetry_relation}.} and we can choose a entry with this color for each one of the $N_j$ indices with this color. Therefore, there are exactly $\prod_{j=1}^{d+1} t_j^{N_j} $ different strings $\tilde{a}$ with the same color sequence as $a$ and hence, according to the symmetry $\mathcal{S}$, with the same value $\mathrm{P}_{A|X}(a,x)=\mathrm{P}_{A|X}(\tilde{a},x)$.

Changing $a$ in different ways than the way given above will necessarily change the colors sequence. However, we can still prove, using the permutation invariance of $\mathrm{P}_{A|X}$, that for some specific changes the value of the entry will stay the same. The specific changes that we consider are derived by permutations of the colored balls. 

In order to understand how every different permutations of the balls $\pi$ is realised as a permutation on $x$ and $a$ consider the example drawn in Figure \ref{fig:permutation_general}. On the left side we see a permutation of the balls from Figure \ref{fig:balls_general}. We start by filling up the columns for which there is no change in the color of the ball with the original columns. Then we fill in the blank columns in such a way that each of the original columns appears once. The permutation in the figure for example, is just the permutation of indices $(3,4)$ and $(6,7,8)$. In general, there might be several ways to choose the permutation on  $x$ and $a$, but they are all equivalent for our purpose and therefore we can just choose one. 

\begin{figure*}
\begin{centering}
	\begin{tikzpicture}[dot/.style={circle,inner sep =2pt}]
 	
\begin{scope}
	\draw (0,1) node {$\pi(a):$};
	\draw (0,0.5) node {$\pi(x):$};
	\draw[font=\scriptsize] (0, -0.05) node {$\pi(j):$};

	\node[dot, fill=red] at (0.75,1.5) {};
	\draw (0.75,1) node {$$};
	\draw (0.75,0.5) node {$$};
	\draw[font=\scriptsize] (0.75, -0.05) node {$$};
		
	\node[dot, fill=blue] at (1.15,1.5) {};
	\draw (1.15,1) node {$$};
	\draw (1.15,0.5) node {$$};
	\draw[font=\scriptsize] (1.15, -0.05) node {$$};
	
	\node[dot, fill=red] at (1.55,1.5) {};
	\draw (1.55,1) node {$$};
	\draw (1.55,0.5) node {$$};
	\draw[font=\scriptsize] (1.55, -0.05) node {$$};

	\node[dot, fill=black] at (1.95,1.5) {};
	\draw (1.95,1) node {$$};
	\draw (1.95,0.5) node {$$};
	\draw[font=\scriptsize] (1.95, -0.05) node {$$};
	
	\node[dot, fill=red] at (2.35,1.5) {};
	\draw (2.35,1) node {$$};
	\draw (2.35,0.5) node {$$};
	\draw[font=\scriptsize] (2.35, -0.05) node {$$};
	
	\node[dot, fill=blue] at (2.75,1.5) {};
	\draw (2.75,1) node {$$};
	\draw (2.75,0.5) node {$$};
	\draw[font=\scriptsize] (2.75, -0.05) node {$$};
	
	\node[dot, fill=red] at (3.15,1.5) {};
	\draw (3.15,1) node {$$};
	\draw (3.15,0.5) node {$$};
	\draw[font=\scriptsize] (3.15, -0.05) node {$$};

	\node[dot, fill=black] at (3.55,1.5) {};
	\draw (3.55,1) node {$$};
	\draw (3.55,0.5) node {$$};
	\draw[font=\scriptsize] (3.55, -0.05) node {$$};
\end{scope}

\begin{scope}[shift={(5,0)}]
	\draw (0,1) node {$\pi(a):$};
	\draw (0,0.5) node {$\pi(x):$};
	\draw[font=\scriptsize] (0, -0.05) node {$\pi(j):$};

	\node[dot, fill=red] at (0.75,1.5) {};
	\draw (0.75,1) node {$3$};
	\draw (0.75,0.5) node {$0$};
	\draw[font=\scriptsize] (0.75, -0.05) node {$1$};
		
	\node[dot, fill=blue] at (1.15,1.5) {};
	\draw (1.15,1) node {$1$};
	\draw (1.15,0.5) node {$0$};
	\draw[font=\scriptsize] (1.15, -0.05) node {$2$};
	
	\node[dot, fill=red] at (1.55,1.5) {};
	\draw (1.55,1) node {$$};
	\draw (1.55,0.5) node {$$};
	\draw[font=\scriptsize] (1.55, -0.05) node {$$};

	\node[dot, fill=black] at (1.95,1.5) {};
	\draw (1.95,1) node {$$};
	\draw (1.95,0.5) node {$$};
	\draw[font=\scriptsize] (1.95, -0.05) node {$$};
	
	\node[dot, fill=red] at (2.35,1.5) {};
	\draw (2.35,1) node {$1$};
	\draw (2.35,0.5) node {$1$};
	\draw[font=\scriptsize] (2.35, -0.05) node {$5$};
	
	\node[dot, fill=blue] at (2.75,1.5) {};
	\draw (2.75,1) node {$$};
	\draw (2.75,0.5) node {$$};
	\draw[font=\scriptsize] (2.75, -0.05) node {$$};
	
	\node[dot, fill=red] at (3.15,1.5) {};
	\draw (3.15,1) node {$$};
	\draw (3.15,0.5) node {$$};
	\draw[font=\scriptsize] (3.15, -0.05) node {$$};

	\node[dot, fill=black] at (3.55,1.5) {};
	\draw (3.55,1) node {$$};
	\draw (3.55,0.5) node {$$};
	\draw[font=\scriptsize] (3.55, -0.05) node {$$};
\end{scope}

\begin{scope}[shift={(10,0)}]
	\draw (0,1) node {$\pi(a):$};
	\draw (0,0.5) node {$\pi(x):$};
	\draw[font=\scriptsize] (0, -0.05) node {$\pi(j):$};

	\node[dot, fill=red] at (0.75,1.5) {};
	\draw (0.75,1) node {$3$};
	\draw (0.75,0.5) node {$0$};
	\draw[font=\scriptsize] (0.75, -0.05) node {$1$};
		
	\node[dot, fill=blue] at (1.15,1.5) {};
	\draw (1.15,1) node {$1$};
	\draw (1.15,0.5) node {$0$};
	\draw[font=\scriptsize] (1.15, -0.05) node {$2$};
	
	\node[dot, fill=red] at (1.55,1.5) {};
	\draw (1.55,1) node {$2$};
	\draw (1.55,0.5) node {$1$};
	\draw[font=\scriptsize, orange] (1.55, -0.05) node {$4$};

	\node[dot, fill=black] at (1.95,1.5) {};
	\draw (1.95,1) node {$0$};
	\draw (1.95,0.5) node {$1$};
	\draw[font=\scriptsize, orange] (1.95, -0.05) node {$3$};
	
	\node[dot, fill=red] at (2.35,1.5) {};
	\draw (2.35,1) node {$1$};
	\draw (2.35,0.5) node {$1$};
	\draw[font=\scriptsize] (2.35, -0.05) node {$5$};
	
	\node[dot, fill=blue] at (2.75,1.5) {};
	\draw (2.75,1) node {$3$};
	\draw (2.75,0.5) node {$1$};
	\draw[font=\scriptsize, purple] (2.75, -0.05) node {$8$};
	
	\node[dot, fill=red] at (3.15,1.5) {};
	\draw (3.15,1) node {$0$};
	\draw (3.15,0.5) node {$0$};
	\draw[font=\scriptsize, purple] (3.15, -0.05) node {$6$};

	\node[dot, fill=black] at (3.55,1.5) {};
	\draw (3.55,1) node {$2$};
	\draw (3.55,0.5) node {$0$};
	\draw[font=\scriptsize, purple] (3.55, -0.05) node {$7$};
\end{scope}

\end{tikzpicture}

\par\end{centering}
\caption{The permutation $\pi$.} \label{fig:permutation_general}
\end{figure*}

The important thing to note is that between every two different permutations of the balls we always have at least one index in which we have a different colored ball in the end. That is, we can write that for every $\pi,\pi'\neq\pi$, there exists $j\in[n]$ such that $\mathcal{C}(\pi(a)_j,\pi(x)_j) \neq \mathcal{C}(\pi'(a)_j,\pi'(x)_j)$. We use this to construct from every permutation $\pi$ a different string $\tilde{a}_\pi$ for which $\mathrm{P}_{A|X}(a|x)=\mathrm{P}_{A|X}(\tilde{a}_\pi|x)$, as follows. For any index $j\in[n]$ that $\pi$ permutes we change $a_j$ to (some) $\tilde{a_j}_\pi$ such that 
\begin{equation} \label{eq:color_change}
	\mathcal{C}(\tilde{a_j}_\pi,x_j)=\mathcal{C}(\pi(a)_j,\pi(x)_j) \;.
\end{equation}
This is always possible since the input vectors of $Q_{A|X}$ are permutations of one another, i.e., if $\mathcal{C}(\pi(a)_j,\pi(x)_j)=k$ then there must be some $a'_j$ for which $\mathcal{C}(a'_j,x_j)=k$ \footnote{Again, as in the previous footnote, what we really mean is that this holds according to the symmetry $\mathcal{S}.$}. 

We are now left to show that $\mathrm{P}_{A|X}(a,x)=\mathrm{P}_{A|X}(\tilde{a}_\pi,x)$. Since $\mathrm{P}_{A|X}$ is permutation invariant, we have 
\[
	\mathrm{P}_{A|X}(a|x) = \mathrm{P}_{A|X}\left( \pi(a)|\pi(x) \right) 
\]
and from the symmetry $\mathcal{S}$ (recall Definition \ref{defn:symmetry_relation}) and Equation \eqref{eq:color_change} we get
\[
	\mathrm{P}_{A|X}\left( \pi(a)|\pi(x) \right)  = \mathrm{P}_{A|X}\left( \tilde{a}_{\pi}|x \right) \;.
\]
Combining these two equations together we get $\mathrm{P}_{A|X}(a,x)=\mathrm{P}_{A|X}(\tilde{a}_\pi,x)$ as desired. 

Since for every two different permutations of the balls we always have at least one index in which we have a different colored ball in the end, we get different $\tilde{a}_{\pi}$'s from different permutations $\pi$. There are exactly ${n \choose {N_1, \dotsi ,N_d,N_{d+1}}}$ different permutations of the balls, and therefore the same number of different  $\tilde{a}_{\pi}$ when proceeding this way. 

We can now answer our original question and bound $\mathcal{N}(a,x)$. We can combine both of the ways given above to change $a$ without changing the value of the entry according to $\mathrm{P}_{A|X}$. This implies that in total, there are at least $\prod_{j=1}^{d+1} t_j^{N_j} \times {n \choose {N_1, \dotsi ,N_d,N_{d+1}}}$ different ways of changing $a$ and we can conclude that
\begin{equation} \label{eq:N_bound}
	\mathcal{N}(a,x) \geq \prod_{j=1}^{d+1} t_j^{N_j} {n \choose {N_1, \dotsi ,N_d,N_{d+1}}} \;.
\end{equation}

Since for all $x$ $\sum_a \mathrm{P}_{A|X}(a|x) = 1$, we get from Equation \eqref{eq:N_bound} the following bound on the entry value $\mathrm{P}_{A|X}(a|x)$:
\begin{equation*}
	\mathrm{P}_{A|X}(a|x) \leq \prod_{j=1}^{d+1} \left( \frac{1}{t_j} \right)^{N_j} {n \choose {N_1, \dotsi ,N_d,N_{d+1}}}^{-1}  \;. \qedhere
\end{equation*}

\end{proof}

Combining Lemma \ref{lem:de-Finetti_bound} and \ref{lem:symmetry-bound-general} we get Theorem~\ref{thm:post-selection}.

\subsection{Deriveing the corollaries from the general theorem}\label{sec:cor_derivation}

As mentioned before, for every symmetry $\mathcal{S}$ for which a state $Q_{A_1|X_1}$ can be construct such that the condition in Definition \ref{defn:symmetry_relation} holds our proof can be applied. In order to derive the corollaries we just need to describe the type of symmetry that we consider and the relevant $Q_{A_1|X_1}$ that we use to construct the de Finetti state. 

Consider for example Corollary \ref{cor:conditional}, where the states $\mathrm{P}_{A|X}$ have no special symmetry. We can therefore derive the corollary from Theorem \ref{thm:post-selection} by choosing $Q_{A_1|X_1}$ without any internal symmetry (see, e.g, Figure~\ref{fig:general-state-no-symmetry}). In this case we have $d=m(l-1)$ degrees of freedom, hence we get Corollary \ref{cor:conditional}. 
For deriving Corollary \ref{cor:chsh-post-selection} we use the state $Q_{A_1B_1|X_1Y_1}$ given in Figure~\ref{fig:CHSH_symmetry}. For this state we have $d=1$ and Corollary \ref{cor:chsh-post-selection} follows.

\begin{figure}
\begin{centering}
\begin{tikzpicture}

	\draw[step=1] (0,0) grid (6,1);
	\draw[thick] (3,-0.25) -- (3,1.25);
	
	\draw[red] (0.5,0.5) node {$p_1$};
	\draw[blue] (1.5,0.5) node {$p_2$};
	\draw[font=\small] (2.25,0.75) node {$1-$};
	\draw[font=\small] (2.55,0.5) node {$p_1-$};
	\draw[font=\small] (2.75,0.25) node {$p_2$};
	
	\draw[orange] (3.5,0.5) node {$p_3$};	
	\draw[olive] (4.5,0.5) node {$p_4$};	
	\draw[font=\small] (5.25,0.75) node {$1-$};
	\draw[font=\small] (5.55,0.5) node {$p_3-$};
	\draw[font=\small] (5.75,0.25) node {$p_4$};

\end{tikzpicture}

\par\end{centering}
\caption{A ststem $Q_{A_1|X_1}$ without any internal symmetry, with $m=2$ and $l=3$.}
\label{fig:general-state-no-symmetry}
\end{figure}

\section{Applications to cryptography} \label{sec:diamond-proofs}

As explained in the main text, the main motivation for de Finetti reductions is that they allow us to simplify the analysis of information theoretical processes. In this section we explain how to use the de Finetti reductions in applications. For this, we will use an alternative formulation of the de Finetti reductions given above. We therefore first explain in Section \ref{sec:post_selection} this alternative formulation and show how to derive it from the de Finetti reductions.

In Section \ref{sec:test_app} we consider the simplified application given in the main text -- bounding the failure probability of a permutation invariant test that interacts with the input state. We finish in Section \ref{sec:diamond_app} by considering the diamond norm, which is the relevant measure to bound in the context of cryptography. We define the diamond norm for conditional probability distributions and show how to bound it using the de Finetti reductions. 

\subsection{Post selecting a permutation invariant state from a de Finetti state} \label{sec:post_selection}

Before starting, we will need the following definition. 
\begin{defn} 
	An extension\footnote{In quantum physics, a purification is a special case of an extension.} of a state $\mathrm{P}_{A|X}$ is a state $\mathrm{P}_{AC|XZ}$ such that $\forall z\in Z, \; \mathrm{P}_{A|X}(a|x) = \sum_c \mathrm{P}_{AC|XZ}(ac|xz)$.
	We say that an extension $\mathrm{P}_{AC|XZ}$  is non-signalling if the second marginal, $ \mathrm{P}_{C|Z}$ is also properly defined, i.e., it does not depend on $x$. 
\end{defn}
For simplicity (and since it is the relevant scenario for cryptography) we consider here only non-signalling extensions $\mathrm{P}_{AC|XZ}$ of $\mathrm{P}_{A|X}$. 

We now give an alternative formulation of the de Finetti reductions. We explain and derive this alternative formulation only for Corollary \ref{cor:conditional}, but it can be applied analogously also to the other de Finetti reductions. 

\begin{lem}\label{lem:conditional}
	There exists a de Finetti state $\tau_{A|X}$ where $x \in \{ 0,1, ... ,m-1 \}^n$ and $a \in \{ 0,1, ... ,l-1 \} ^n$ and a non-signalling extension of it to a larger state $\tau_{AC|XZ}$ such that for every permutation invariant state $\mathrm{P}_{A|X}$ there exists a measurement $z$ and an outcome of this measurement $c_z$ for which 
	\[
		\forall a,x \quad \tau_{AC|XZ} (a,c_z|x,z) = \frac{1}{(n+1)^{m(l-1)}}\mathrm{P}_{A|X} (a|x) \;.
	\]
\end{lem}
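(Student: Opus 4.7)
The plan is to realise an entry-wise post-selection of $\mathrm{P}_{A|X}$ from $\tau_{A|X}$ by adding a ``flag'' outcome that fires with probability exactly $(n+1)^{-m(l-1)}$ and, conditioned on firing, reproduces $\mathrm{P}_{A|X}$. This is the analogue of the Christandl--K\"onig--Renner post-selection construction for quantum states~\cite{christandl2009postselection}, translated into the language of conditional distributions.

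Concretely, I would take $\tau_{A|X}$ to be the de Finetti state of Corollary~\ref{cor:conditional}, let the measurement alphabet $Z$ index the set of all permutation-invariant conditional probability distributions $\mathrm{P}^z_{A|X}$, and let the extra outcome alphabet be $C=\{0,1\}$. Define
\begin{align*}
\tau_{AC|XZ}(a,1|x,z) &= \frac{1}{(n+1)^{m(l-1)}}\,\mathrm{P}^z_{A|X}(a|x),\\
\tau_{AC|XZ}(a,0|x,z) &= \tau_{A|X}(a|x)-\frac{1}{(n+1)^{m(l-1)}}\,\mathrm{P}^z_{A|X}(a|x).
\end{align*}
The crucial point of the construction is that Corollary~\ref{cor:conditional} is exactly what is needed to make the second line non-negative for all $a,x$; absent that bound, $\tau_{AC|XZ}$ would not be a valid conditional distribution. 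This is the only step of the proof that uses a non-trivial input.

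It then remains to check the three bookkeeping properties. Normalisation of $\tau_{AC|XZ}(\cdot,\cdot|x,z)$ follows from $\sum_a\tau_{A|X}(a|x)=1$; that the $C$-marginal recovers $\tau_{A|X}$ is immediate; and non-signalling holds because $\sum_a\tau_{AC|XZ}(a,1|x,z)=(n+1)^{-m(l-1)}$ and $\sum_a\tau_{AC|XZ}(a,0|x,z)=1-(n+1)^{-m(l-1)}$, both independent of $x$, so $\tau_{C|Z}$ is well defined. The claimed equality is then just the first line, with $c_z=1$ for every $z$. The only formal nuance is that $Z$ is a continuous parameter space rather than a finite alphabet, but the non-signalling extension formalism does not actually require $Z$ to be discrete, so I do not expect any genuine obstacle here beyond invoking Corollary~\ref{cor:conditional} for non-negativity.
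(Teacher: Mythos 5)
Your proposal is correct and follows essentially the same route as the paper: the pointwise bound of Corollary~\ref{cor:conditional} is exactly the condition for $\bigl(\tfrac{1}{(n+1)^{m(l-1)}},\mathrm{P}_{A|X}\bigr)$ to be an element of a partition of $\tau_{A|X}$, which is then realised as an outcome of a measurement $z$ on a non-signalling extension. The only difference is that you construct the two-outcome extension explicitly and verify normalisation and non-signalling by hand, whereas the paper delegates this to the partition/extension lemmas quoted from earlier work (Lemma~\ref{lem:extension_condition} and Lemma~\ref{lem:extension}).
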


This lemma states that there exists a de Finetti state $\tau_{A|X}$ and a non-signalling extension of it $\tau_{AC|XZ}$ such that any permutation invariant state $\mathrm{P}_{A|X}$ can be post selected from it with probability $\geq \frac{1}{(n+1)^{m(l-1)}}$. When we say that  $\mathrm{P}_{A|X}$ can be post selected we mean that there exists an input $z$ to $\tau_{AC|XZ}$ and an output of this measurement $c_z$ such that with probability $\tau_{C|Z}(c_z|z)\geq\frac{1}{(n+1)^{m(l-1)}}$ the post-measurement state is $\mathrm{P}_{A|X}$ (see Figure \ref{fig:postselectionl}). Note that we consider a specific extension $\tau_{AC|XZ}$ of the state $\tau_{A|X}$, and by choosing different inputs $z$ we can post select different states $\mathrm{P}_{A|X}$.

\begin{figure}
	\begin{centering}
	\begin{tikzpicture}
			
		\node at (1,0.5) {$\tau_{AC|XZ}$};
		\node at (0.5,1.7) {x};
		\draw[->] (0.5,1.5) -- (0.5,1);
		\node[red] at (1.5,1.7) {z};
		\draw[->,red] (1.5,1.5) -- (1.5,1);
		\draw  (0,0) rectangle (2,1);
		\node at (0.5,-0.7) {a};
		\draw[->] (0.5,0) -- (0.5,-0.5);
		\node[red] at (1.5,-0.7) {$c_z$};
		\draw[->,red] (1.5,0) -- (1.5,-0.5);
		
		\node at (2.5,0.5) {=};
		
		\node at (3.5,0.5) {$\mathrm{P}_{A|X}$};
		\node at (3.5,1.7) {x};
		\draw[->] (3.5,1.5) -- (3.5,1);
		\draw (3,0) rectangle (4,1);
		\node at (3.5,-0.7) {a};
		\draw[->] (3.5,0) -- (3.5,-0.5);
		  	
	\end{tikzpicture}

	\end{centering}
\caption{Post selecting state $\mathrm{P}_{A|X}$ from an extension of $\tau_{A|X}$. If the outcome of the measurement $z$ is $c_z$ then the post measurement state is $\mathrm{P}_{A|X}$.}
\label{fig:postselectionl}
\end{figure}

It is easy to see how to derive Lemma \ref{lem:conditional} from Corollary~\ref{cor:conditional} by using the formalism introduced in \cite{hanggi2009quantum,hanggi2010device} of partitions of a conditional probability distribution. We repeat here the relevant statements. 
\begin{defn}
	A partition of a state $\mathrm{P}_{A|X}$ is a family of pairs $\left\{ \left( p_c,\mathrm{P}^c_{A|X} \right) \right\}_c$ where $p_c\geq 0$, $\sum_c p_c = 1$ and the states $\mathrm{P}^c_{A|X}$  are such that
	\[
		\mathrm{P}_{A|X} = \sum_c p_c \cdot \mathrm{P}^c_{A|X} \;.
	\]
\end{defn}
\begin{lem} [Lemma 9 in \cite{hanggi2009quantum}] \label{lem:extension_condition}
	Given a state $\mathrm{P}_{A|X}$, there exists a partition with element $\left( p_c,\mathrm{P}^c_{A|X} \right)$ if and only if 
	\[
		\forall a,x \quad p_c\cdot \mathrm{P}^c_{A|X}(a|x) \leq \mathrm{P}_{A|X} \;.
	\]
\end{lem}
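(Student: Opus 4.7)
The statement is an equivalence, so the plan is to prove each implication separately, with the backward direction being the substantive one.

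For the forward direction, I would assume that $\bigl\{(p_{c'},\mathrm{P}^{c'}_{A|X})\bigr\}_{c'}$ is a partition containing the pair $(p_c,\mathrm{P}^c_{A|X})$. The definition of a partition then gives $\mathrm{P}_{A|X}(a|x) = \sum_{c'} p_{c'}\,\mathrm{P}^{c'}_{A|X}(a|x)$ for every $a,x$. Since every term in this sum is non-negative (the weights $p_{c'}$ are non-negative by definition, and each $\mathrm{P}^{c'}_{A|X}$ is a conditional probability distribution hence takes values in $[0,1]$), dropping all terms except the one indexed by $c$ yields the desired inequality $p_c\,\mathrm{P}^c_{A|X}(a|x) \leq \mathrm{P}_{A|X}(a|x)$. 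This step is essentially a one-line observation.

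For the backward direction, the plan is to exhibit a partition with only two elements: the given pair $(p_c,\mathrm{P}^c_{A|X})$ together with a second element $(p_{c'},\mathrm{P}^{c'}_{A|X})$ which plays the role of a ``remainder''. Concretely, assuming $p_c<1$ I would set $p_{c'} := 1-p_c$ and
\[
    \mathrm{P}^{c'}_{A|X}(a|x) \;:=\; \frac{\mathrm{P}_{A|X}(a|x)-p_c\,\mathrm{P}^c_{A|X}(a|x)}{1-p_c}.
\]
The hypothesis $p_c\,\mathrm{P}^c_{A|X}(a|x)\leq \mathrm{P}_{A|X}(a|x)$ is exactly what is needed to ensure non-negativity of $\mathrm{P}^{c'}_{A|X}$; normalisation $\sum_a \mathrm{P}^{c'}_{A|X}(a|x)=1$ follows from $\sum_a \mathrm{P}_{A|X}(a|x)=\sum_a \mathrm{P}^c_{A|X}(a|x)=1$ for every fixed $x$; and $p_c+p_{c'}=1$ by construction. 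A direct substitution then verifies $p_c\,\mathrm{P}^c_{A|X}+p_{c'}\,\mathrm{P}^{c'}_{A|X}=\mathrm{P}_{A|X}$, so the pair $\{(p_c,\mathrm{P}^c_{A|X}),(p_{c'},\mathrm{P}^{c'}_{A|X})\}$ is a valid partition containing the given element.

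The only subtlety is the boundary case $p_c=1$, where the above definition of $\mathrm{P}^{c'}_{A|X}$ is undefined. In that case, however, the hypothesis forces $\mathrm{P}^c_{A|X}(a|x)\leq \mathrm{P}_{A|X}(a|x)$ for all $a,x$, and combined with the fact that both sides sum to $1$ over $a$ this already yields $\mathrm{P}^c_{A|X}=\mathrm{P}_{A|X}$, so the trivial one-element partition $\bigl\{(1,\mathrm{P}^c_{A|X})\bigr\}$ does the job. Thus the only ``obstacle'' is this minor edge case; the core argument is simply that the inequality in the hypothesis is precisely what ensures the complementary weight $\mathrm{P}_{A|X}-p_c\,\mathrm{P}^c_{A|X}$ can itself be rescaled to a valid conditional probability distribution.
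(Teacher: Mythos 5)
Your proof is correct. Note that the paper itself does not prove this lemma at all --- it is imported verbatim as Lemma~9 of the cited reference \cite{hanggi2009quantum} --- so there is no in-paper argument to compare against; your two-implication argument (drop the non-negative terms for ``only if'', rescale the remainder $\mathrm{P}_{A|X}-p_c\,\mathrm{P}^c_{A|X}$ into a second partition element for ``if'') is exactly the standard proof, and your handling of the boundary case $p_c=1$ is the right way to close the one genuine gap in the naive construction. One remark worth adding: in the original reference the lemma is stated for non-signalling boxes, i.e.\ the partition elements are themselves required to be non-signalling; your construction survives this strengthening without change, because the non-signalling constraints are linear equalities and are therefore inherited by the rescaled remainder $\bigl(\mathrm{P}_{A|X}-p_c\,\mathrm{P}^c_{A|X}\bigr)/(1-p_c)$ whenever $\mathrm{P}_{A|X}$ and $\mathrm{P}^c_{A|X}$ both satisfy them. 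As stated in this paper, with ``state'' meaning an arbitrary conditional probability distribution, your verification of non-negativity and normalisation is all that is needed.
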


\begin{lem}[Lemma 3.2 in \cite{hanggi2010device}] \label{lem:extension}
	Given a state $\mathrm{P}_{A|X}$ let $Z$ be the set of all partitions $\left\{ \left( p_{c_z},\mathrm{P}^{c_z}_{A|X} \right) \right\}_{c_z}$ of $\mathrm{P}_{A|X}$. There exists an extension state $\mathrm{P}_{AC|XZ}$  of $\mathrm{P}_{A|X}$ and an input $z$ to it such that  
	\[
		\forall a,x \quad \mathrm{P}_{AC|XZ}(a,c_z|x,z)=p_{c_z} \cdot \mathrm{P}^{c_z}_{A|X}(a|x) \;.
	\]
	Moreover, the state $\mathrm{P}_{AC|XZ}$ does not allow signalling between the $A/X$ and the $C/Z$ interfaces\footnote{In the usuall cryptographic setting this means a non-signalling condition between Alice and Eve.}.
\end{lem}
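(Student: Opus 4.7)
The plan is to prove Lemma~\ref{lem:extension} by constructing the extension $\mathrm{P}_{AC|XZ}$ explicitly and then verifying the two required properties (extension and non-signalling) directly from the definition of a partition. The key observation is that the identity $\mathrm{P}_{AC|XZ}(a,c_z|x,z) = p_{c_z}\,\mathrm{P}^{c_z}_{A|X}(a|x)$ demanded by the lemma already pins down the distribution for every input $z$ and every outcome $c_z$ in the partition labelled by $z$; the only freedom lies in choosing a common output alphabet for $C$ that works uniformly over $z\in Z$.

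First, I would fix the output alphabet of $C$ to be the disjoint union $\mathcal{C} = \bigsqcup_{z\in Z}\{c_z\}$ of all partition indices (or any set in which each partition's index set embeds). For a given $z\in Z$, corresponding to the partition $\{(p_{c_z},\mathrm{P}^{c_z}_{A|X})\}_{c_z}$, I would define
\[
\mathrm{P}_{AC|XZ}(a,c|x,z) \;=\;
\begin{cases}
p_{c}\,\mathrm{P}^{c}_{A|X}(a|x) & \text{if } c \in \{c_z\}, \\
0 & \text{otherwise.}
\end{cases}
\]
This is non-negative since $p_c\ge 0$ and $\mathrm{P}^{c}_{A|X}\ge 0$ by the definition of a partition, and it normalises correctly, since for every $x,z$,
\[
\sum_{a,c}\mathrm{P}_{AC|XZ}(a,c|x,z) \;=\; \sum_{c_z} p_{c_z}\sum_a \mathrm{P}^{c_z}_{A|X}(a|x) \;=\; \sum_{c_z} p_{c_z} \;=\; 1,
\]
using that each $\mathrm{P}^{c_z}_{A|X}$ is itself a conditional probability distribution and that the weights $p_{c_z}$ of a partition sum to $1$.

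Next I would verify the extension property. For any input $z$,
\[
\sum_{c}\mathrm{P}_{AC|XZ}(a,c|x,z) \;=\; \sum_{c_z} p_{c_z}\,\mathrm{P}^{c_z}_{A|X}(a|x) \;=\; \mathrm{P}_{A|X}(a|x),
\]
by the defining property of a partition. Notice that this marginal is independent of $z$, which gives the no-signalling condition from the $C/Z$ interface to the $A/X$ interface. For the converse direction (from $A/X$ to $C/Z$), I would compute
\[
\sum_{a}\mathrm{P}_{AC|XZ}(a,c_z|x,z) \;=\; p_{c_z}\sum_{a}\mathrm{P}^{c_z}_{A|X}(a|x) \;=\; p_{c_z},
\]
which depends only on the input $z$ (and the chosen outcome $c_z$), not on $x$. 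Hence $\mathrm{P}_{C|XZ} = \mathrm{P}_{C|Z}$ is well defined, establishing non-signalling in the other direction.

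There is essentially no combinatorial obstacle here; the proof is a direct unpacking of the definition of a partition. The only subtle point is administrative rather than mathematical: the output alphabet of $C$ must be chosen large enough to accommodate every possible partition simultaneously, which is why I pad by zeros on the outcomes $c\notin\{c_z\}$ for each $z$. Once this is in place, the two identities above — one using $\sum_{c_z} p_{c_z}\mathrm{P}^{c_z}_{A|X}=\mathrm{P}_{A|X}$ and the other using $\sum_a \mathrm{P}^{c_z}_{A|X}(a|x)=1$ — deliver the extension property and both non-signalling constraints, completing the proof of Lemma~\ref{lem:extension}.
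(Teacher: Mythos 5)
Your construction is correct. The paper itself does not prove Lemma~\ref{lem:extension}; it imports the statement verbatim from Lemma~3.2 of the cited reference \cite{hanggi2010device}, so there is no internal proof to compare against. Your argument supplies exactly the standard construction that underlies that reference: take $C$'s alphabet large enough to embed every partition's index set, set $\mathrm{P}_{AC|XZ}(a,c|x,z)=p_c\,\mathrm{P}^c_{A|X}(a|x)$ for $c$ in the index set of the partition labelled by $z$ and zero otherwise, and then read off normalisation from $\sum_{c_z}p_{c_z}=1$, the extension property from $\sum_{c_z}p_{c_z}\mathrm{P}^{c_z}_{A|X}=\mathrm{P}_{A|X}$, and both non-signalling directions from the $z$-independence of the $A$-marginal and the $x$-independence of $\sum_a\mathrm{P}_{AC|XZ}(a,c_z|x,z)=p_{c_z}$. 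All three verifications are exactly what the paper's definitions of extension and non-signalling extension require. The one point worth flagging explicitly is that $Z$ ranges over \emph{all} partitions of $\mathrm{P}_{A|X}$, which is an uncountable family, so the ``alphabet'' of $C$ and the input set $Z$ are not finite; this is a feature of the lemma as stated rather than a defect of your proof, but a careful write-up should either note that only finitely many partitions are ever invoked in the applications (as in the proof of Lemma~\ref{lem:conditional}, where a single partition element suffices) or work with a suitable measurable structure.
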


Using the lemmas above and Corollary \ref{cor:conditional} we can now prove Lemma \ref{lem:conditional}. 
\begin{proof}
	The above lemmas together with Corollary \ref{cor:conditional} imply that in our case for any permutation invariant state $\mathrm{P}_{A|X}$, $\left( \frac{1}{(n+1)^{m(l-1)}},\mathrm{P}_{A|X} \right) $ is an element of a partition of $\tau_{A|X}$. Moreover, there exists a state $\tau_{AC|XZ}$ and an input $z$ such that with probability $\frac{1}{(n+1)^{m(l-1)}}$ the post-measurement state is $\mathrm{P}_{A|X}$:
	\[
			\forall a,x \quad \tau_{AC|XZ}(a,c_z|x,z) = \frac{1}{(n+1)^{m(l-1)}}\mathrm{P}_{A|X} \;. \qedhere
	\]
\end{proof}

\subsection{Simplified appliaction} \label{sec:test_app}

Let $\mathcal{T}$ be a test which interacts with a state $\mathrm{P}_{A|X}$ and outputs ``success'' or ``fail'' with some probabilities.  We denote by $\mathrm{Pr}_{\text{fail}}(\mathrm{P}_{A|X})$ the probability that $\mathcal{T}$ outputs ``fail'' after interacting with $\mathrm{P}_{A|X}$. We consider permutation invariant tests, as in Definition \ref{def:permutation-invariant-test} and prove Lemma \ref{lem:test-bound} using the de Finetti reduction of Corollary \ref{cor:conditional}.

\begin{proof}
We follow here a similar proof given in \cite{renner2010simplifying} for the quantum post selection theorem \cite{christandl2009postselection}.

 First, since the test $\mathcal{T}$ is permutation invariant it is sufficient to consider only permutation invariant states. To see this recall that for any state  $\mathrm{P}_{A|X}$  and permutation $\pi$ we have $\mathrm{Pr}_{\text{fail}}(\mathrm{P}_{A|X}) = \mathrm{Pr}_{\text{fail}}(\mathrm{P}_{A|X}\circ\pi)$ according to Definition  \ref{def:permutation-invariant-test}. Therefore we also have by linearity 
\begin{align} \label{eq:only-permutations}
	\mathrm{Pr}_{\text{fail}}(\mathrm{P}_{A|X}) &= \frac{1}{n!} \sum_\pi \mathrm{Pr}_{\text{fail}}(\mathrm{P}_{A|X}\circ\pi) \nonumber \\
	&=\mathrm{Pr}_{\text{fail}} \left(  \frac{1}{n!} \sum_\pi \mathrm{P}_{A|X}\circ\pi \right) \;.
\end{align}
The state $\frac{1}{n!} \sum_\pi \mathrm{P}_{A|X}\circ\pi$ is permutation invariant and therefore without loss of generality we can consider only permutation invariant states. 

Next we define the following probabilities. Let $\mathrm{Pr}_{\text{fail}\land c_z}(\tau_{AC|XZ})$ be the probability that the second part of the state, $\tau_{C|Z}$ is measured with $z$ and the output is $c_z$ and that the first part of the state, $\tau_{A|X}$ fails the test $\mathcal{T}$ at the same time. That is, 
\[
	\mathrm{Pr}_{\text{fail}\land c_z}(\tau_{AC|XZ}) = \mathrm{Pr}_{\text{fail}}(\tau_{A|X}) \cdot \tau_{C|Z}(c_z|z) \;.
\]
In a similar way we define $\mathrm{Pr}_{\text{fail}|c_z}(\tau_{AC|XZ})$ to be the probability the $\tau_{A|X}$ fails the test $\mathcal{T}$ given that $c_z$ is the outcome measurement of $\tau_{C|Z}$. According to probability theory we have 
\[
	\mathrm{Pr}_{\text{fail}|c_z}(\tau_{AC|XZ}) = \frac{\mathrm{Pr}_{\text{fail}\land c_z}(\tau_{AC|XZ})}{\tau_{C|Z}(c_z|z)} \leq \frac{\mathrm{Pr}_{\text{fail}}(\tau_{A|X})}{\tau_{C|Z}(c_z|z)}
\]
since it is always true that $\mathrm{Pr}_{\text{fail}\land c_z}(\tau_{AC|XZ}) \leq \mathrm{Pr}_{\text{fail}}(\tau_{A|X})$.

Lemma \ref{lem:conditional} implies that $\tau_{C|Z}(c_z|z)\geq\frac{1}{(n+1)^{m(l-1)}}$ and that $\mathrm{Pr}_{\text{fail}|c_z}(\tau_{AC|XZ})=\mathrm{Pr}_{\text{fail}}(\mathrm{P}_{A|X})$ (given that the outcome measurement was $c_z$, the post measurement state is $\mathrm{P}_{A|X}$). All together we get $\mathrm{Pr}_{\text{fail}}(\mathrm{P}_{A|X}) \leq (n+1)^{m(l-1)} \mathrm{Pr}_{\text{fail}}(\tau_{A|X})$ as required. 
\end{proof}

As explained in the main text, the importance of the de Finetti reductions is already obvious from this simplified example. Lemma \ref{lem:test-bound} tells us that instead of proving an upper bound on the failure probability of the test $\mathcal{T}$ for states $\mathrm{P}_{A|X}$, it is sufficient to prove it for the de Finetti state $\tau_{A|X}$ and ``pay'' for it with the additional polynomial factor of $(n+1)^{m(l-1)}$.

\subsection{Bounding the diamond norm for conditional probability distributions} \label{sec:diamond_app}

While the notion of a test as discussed above allows for a simple treatment of cases where the failure can be directly defined as an event, it is unfortunately not directly applicable to security proofs for general cryptographic protocols, such as quantum key distribution. In order to prove security one usually needs to establish an upper bound on the distinguishing advantage between the applied protocol and an ideal protocol. Formally we describe the protocols by channels which act on the state and bound the distinguishing advantage between the two channels.

When considering quantum protocols this distinguishing advantage is given by the diamond norm \cite{kitaev1997quantum}. The distance between two channels $\mathcal{E}$ and $\mathcal{F}$ which act on quantum states $\rho_A$ is given by $\| \mathcal{E}-\mathcal{F}\|_{\diamond}=\underset{\rho_{AC}}{\mathrm{max}}\| \left(\mathcal{E}-\mathcal{F}\right)\otimes \mathbb{1} \; \rho_{AC} \|_1$ where $\rho_{AC}$ is a purification of $\rho_A$ and $\| \cdot \|_1$ is the trace distance. Informally the idea is that in order to distinguish two channels we are not only allowed to choose the input state to the channels, $\rho_A$, but also keep to ourselves a purifying state  $\rho_C$. 

Although the definition of the diamond norm includes a maximisation over all states $\rho_{AC}$, using the quantum post selection theorem, it was proven that when considering permutation invariant channels it is sufficient to calculate the distance for a specific quantum de Finetti state~\cite{christandl2009postselection}. Motivated by this we prove here a similar bound on a distance analogous to the diamond norm for channels which act on conditional probability distributions. 

We consider here channels of the form $\mathcal{E}:\{\mathrm{P}_{A|X}\}\rightarrow \{\mathrm{P}_K\}$ which interact with conditional probability distributions $\mathrm{P}_{A|X}$ and output a classical bit string $k\in\{0,1\}^t$ of some length $t\geq 0$ with some probability $\mathrm{P}_K(k)$. The probability distribution of the output depends on the channel $\mathcal{E}$ itself and is given by the following definition.
\begin{defn}
	The probability that a channel $\mathcal{E}$ will output a string $k\in\{0,1\}^{t}$ when interacting with $\mathrm{P}_{A|X}$ is 
	\[
		\mathrm{E}_K(k) = \sum_x \mathrm{Pr}_{\mathcal{E}}(x) \sum_{a |  \mathcal{E}(a,x)=k} \mathrm{P}_{A|X}(a|x)
	\]
	where $\mathrm{Pr}_{\mathcal{E}}(x)$ is the probability that the channel $\mathcal{E}$ will input $x$ to $\mathrm{P}_{A|X}$ and $\mathcal{E}(a,x)$ is the function according to which the output of the channel is determined. Analogously, 
	\[
		\mathrm{E}_{K|C}(k|c) = \sum_x \mathrm{Pr}_{\mathcal{E}}(x) \sum_{a | \mathcal{E}(a,x)=k} \mathrm{P}_{A|XC}(a|xc) \;.
	\] 
\end{defn}
The connection between the channel and the state is illustrated in Figure \ref{fig:channel}.

\begin{figure}[b]
	\begin{centering}
	\begin{tikzpicture}[scale=1.25]
	\begin{scope}[shift={(-0.5,-0.5)}]
		\node at (1,0.5) {$\mathrm{P}_{AC|XZ}$};
		\draw  (0,0) rectangle (2,1);
		
		\node at (-0.25,0.85) {$x$};
		\draw[->] (-0.5,0.75) -- (0,0.75);
		\node at (-0.25,0.35) {$a$};
		\draw[<-] (-0.5,0.25) -- (0,0.25);
	
		\node at (0.75,-0.7) {$z$};
		\draw[<-] (0.75,0) -- (0.75,-0.5);
		\node at (1.25,-0.7) {$c$};
		\draw[->] (1.25,0) -- (1.25,-0.5);
		
		\draw (-1,0) rectangle (-0.5,1);
		\node at (-0.75,0.5) {$\mathcal{E}$};
		\draw[->] (-1,0.25) -- (-1.5,0.25);
		\node at (-2.2,0.25) {$k=\mathcal{E}(a,x)$};
	\end{scope}
				  	
	\end{tikzpicture}
	\end{centering}
\caption{The channel $\mathcal{E}\otimes \mathbb{1}$ acts on an extension $\mathrm{P}_{AC|XZ}$ of  $\mathrm{P}_{A|X}$ and outputs a classical string $k\in\{0,1\}^t$ according to the probability $\mathrm{E}_K(k)$.}
\label{fig:channel}
\end{figure}

\begin{defn}\label{def:diamond-norm}
	The distance between two channels $\mathcal{E},\mathcal{F}:\{\mathrm{P}_{A|X}\}\rightarrow\{\mathrm{P}_K\}$ according to the diamond norm is 
	\[
		\| \mathcal{E}-\mathcal{F}\|_{\diamond}=\underset{\mathrm{P}_{AC|XZ}}{\mathrm{max}}\| \left(\mathcal{E}-\mathcal{F}\right)\otimes \mathbb{1}(\mathrm{P}_{AC|XZ} )\|_1 \;,
	\]
	where the maximisation is over all states $\mathrm{P}_{A|X}$ and all possible extensions of them and
	\[
	\begin{split}
		\mathcal{E}\otimes \mathbb{1}(\mathrm{P}_{AC|XZ}) &= \mathcal{E}\otimes \mathbb{1}(\mathrm{P}_{A|XC}\cdot \mathrm{P}_{C|Z}) \\
		&= \mathrm{E}_{K|C}\cdot \mathrm{P}_{C|Z} \;.
	\end{split}	
	\]
	$\mathcal{F}\otimes \mathbb{1}(\mathrm{P}_{AC|XZ})$ is defined in a similar way.
\end{defn}

More explicitly, the diamond norm can be written in the following way.
\begin{widetext}
\begin{equation}\label{eq:diamond-norm-explicit}
	\begin{split}
		\| \mathcal{E}-\mathcal{F}\|_{\diamond} & =\underset{\mathrm{P}_{AC|XZ}}{\mathrm{max}}\| \left(\mathcal{E}-\mathcal{F}\right)\otimes \mathbb{1}(\mathrm{P}_{AC|XZ} )\|_1\\
		&= \underset{\mathrm{P}_{AC|XZ}}{\mathrm{max}}\| \mathrm{E}_{K|C}\cdot \mathrm{P}_{C|Z} - \mathrm{F}_{K|C}\cdot \mathrm{P}_{C|Z}\|_1\\
		&=\underset{\mathrm{P}_{AC|XZ}}{\mathrm{max}} \frac{1}{2} \sum_k \underset{z}{\mathrm{max}} \sum_c \mathrm{P}_{C|Z}(c|z) \left| \mathrm{E}_{K|C}(k|c)-\mathrm{F}_{K|C}(k|c) \right| \\
		&= \underset{\mathrm{P}_{AC|XZ}}{\mathrm{max}} \frac{1}{2} \sum_k \underset{z}{\mathrm{max}} \sum_c \mathrm{P}_{C|Z}(c|z) \times \\
		& \qquad \qquad \times \left| \sum_x \mathrm{Pr}_{\mathcal{E}}(x) \sum_{a | \mathcal{E}(a,x)=k} \mathrm{P}_{A|XC}(a|xc)  -\sum_x \mathrm{Pr}_{\mathcal{F}}(x) \sum_{a | \mathcal{F}(a,x)=k} \mathrm{P}_{A|XC}(a|xc)  \right|
	\end{split}
\end{equation}
\end{widetext}
where the third equality is due to the explicit form of the trace distance previously given in \cite{masanes2009universally,hanggi2009quantum}. 

As in Definition \ref{def:permutation-invariant-test}, we say that a channel is permutation invariant if for all permutations $\pi$, $\mathcal{E}(\mathrm{P}_{A|X})=\mathcal{E}(\mathrm{P}_{A|X}\circ\pi)$. In a similar manner we can also consider channels which are $\mathcal{S}$ invariant. 
\begin{defn}\label{def:s-invariant}
	We say that a mapping $\mu$ of $(a,x)$ to $(a',x')$ respects the symmetry $\mathcal{S}$ if for every state $\mathrm{P}_{A|X}$ \emph{with this symmetry} $\mathrm{P}_{A|X}=\mathrm{P}_{A|X}\circ\mu$. 
	
	A channel  $\mathcal{E}$ is $\mathcal{S}$ invariant if for \emph{every} $\mathrm{P}_{A|X}$ and every mapping $\mu$ which respects $\mathcal{S}$ we have $\mathcal{E}(\mathrm{P}_{A|X})=\mathcal{E}(\mathrm{P}_{A|X}\circ\mu)$.
\end{defn}
For example, when considering the CHSH symmetry in Definition \ref{def:chsh-symmetry} one such possible mapping $\mu$ may map $(a_i,b_i,x_i,y_i)=(0,0,0,0)$ to $(a_i,b_i,x_i,y_i)=(1,0,1,1)$ for every $i\in[n]$ (see Figure \ref{fig:CHSH_symmetry}). 

Using these concepts and the de Finetti reduction given in Theorem \ref{thm:post-selection} we can prove the following bound on the diamond norm. 
\begin{thm}\label{thm:diamond-norm-thm}
	For any two permutation invariant and $\mathcal{S}$ invariant channels $\mathcal{E},\mathcal{F}:\{\mathrm{P}_{A|X}\}\rightarrow\{\mathrm{P}_K\}$
	\[
		\| \mathcal{E}-\mathcal{F}\|_{\diamond} \leq (n+1)^d \underset{\tau^{\mathcal{S}}_{AC|XZ}}{\mathrm{max}}\| \left(\mathcal{E}-\mathcal{F}\right)\otimes \mathbb{1}(\tau^{\mathcal{S}}_{AC|XZ} )\|_1
	\]
	where $d$ is the number of degrees of freedom of $\mathcal{S}$ and $\tau^{\mathcal{S}}_{AC|XZ}$ is a non-signalling extension of the de Finetti state $\tau^{\mathcal{S}}_{A|X}$. 
\end{thm}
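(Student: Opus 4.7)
The plan is to follow the strategy used for the quantum post-selection theorem in \cite{renner2010simplifying}, adapted to the conditional-probability-distribution framework. The argument has three ingredients: (i) symmetrize the input so that only permutation- and $\mathcal{S}$-invariant marginals need be considered; (ii) use Theorem \ref{thm:post-selection} together with the partition/extension correspondence of Lemmas \ref{lem:extension_condition} and \ref{lem:extension} to build a single non-signalling extension of $\tau^{\mathcal{S}}_{A|X}$ that simultaneously post-selects the original input and its adversarial extension; (iii) read off the factor $(n+1)^d$ from the explicit diamond-norm formula \eqref{eq:diamond-norm-explicit}.

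For step (i), given any candidate $\mathrm{P}_{AC|XZ}$ in the left-hand maximum, I would enlarge the adversarial register to sample a uniformly random permutation $\pi$ and a uniformly random $\mathcal{S}$-respecting mapping $\mu$, and then apply $\pi\mu$ on both the $A$ and $X$ sides. By the channel invariance in Definition \ref{def:s-invariant} and the linearity argument underlying \eqref{eq:only-permutations}, the distinguishing probability is unchanged, so I may assume that the $A/X$ marginal $\bar{\mathrm{P}}_{A|X}$ is permutation invariant and has symmetry $\mathcal{S}$, with (possibly enlarged) non-signalling extension $\bar{\mathrm{P}}_{AC|XZ}$.

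For step (ii), Theorem \ref{thm:post-selection} provides the pointwise bound $(n+1)^{-d}\,\bar{\mathrm{P}}_{A|X}(a|x)\leq\tau^{\mathcal{S}}_{A|X}(a|x)$. For each fixed $z$ of the original adversary, I would form the partition of $\tau^{\mathcal{S}}_{A|X}$ whose elements are
\[
\bigl((n+1)^{-d}\,\bar{\mathrm{P}}_{C|Z}(c|z),\;\bar{\mathrm{P}}_{A|XC}(\,\cdot\,|\,\cdot\,,c,z)\bigr)\quad\text{for each }c,
\]
together with a single remainder element $\bigl(1-(n+1)^{-d},\,\mathrm{R}_{A|X}\bigr)$; normalization follows from $\sum_c\bar{\mathrm{P}}_{C|Z}(c|z)=1$, and admissibility from Lemma \ref{lem:extension_condition} together with the pointwise inequality above. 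Applying Lemma \ref{lem:extension} to this family of partitions indexed by $z$ yields one non-signalling extension $\tau^{\mathcal{S}}_{A\tilde{C}|X\tilde{Z}}$ of $\tau^{\mathcal{S}}_{A|X}$ and an input $\tilde{z}$ carrying $z$ such that, for outcomes $\tilde{c}$ coming from the $\bar{\mathrm{P}}_{AC|XZ}$ branch,
\[
\tau^{\mathcal{S}}_{A\tilde{C}|X\tilde{Z}}(a,c\,|\,x,\tilde{z})=(n+1)^{-d}\,\bar{\mathrm{P}}_{AC|XZ}(a,c\,|\,x,z).
\]

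For step (iii) I plug $\tau^{\mathcal{S}}_{A\tilde{C}|X\tilde{Z}}$ into the right-hand side of the theorem and use the explicit form \eqref{eq:diamond-norm-explicit}. Fixing the outer maximum over $\tilde{z}$ at the value corresponding to $z$ and restricting the inner sum over $\tilde{c}$ to outcomes in the $\bar{\mathrm{P}}$-branch (dropping the remainder contribution, which is non-negative), the $\tau^{\mathcal{S}}_{\tilde{C}|\tilde{Z}}$-weights collapse to $(n+1)^{-d}\bar{\mathrm{P}}_{C|Z}(c|z)$ and the conditional states to $\bar{\mathrm{P}}_{A|XC}(\,\cdot\,|\,\cdot\,,c,z)$, giving
\[
\|(\mathcal{E}-\mathcal{F})\otimes\mathbb{1}(\tau^{\mathcal{S}}_{A\tilde{C}|X\tilde{Z}})\|_1\;\geq\;(n+1)^{-d}\,\|(\mathcal{E}-\mathcal{F})\otimes\mathbb{1}(\bar{\mathrm{P}}_{AC|XZ})\|_1.
\]
Rearranging and combining with step (i) yields the theorem, since $\tau^{\mathcal{S}}_{A\tilde{C}|X\tilde{Z}}$ is itself a non-signalling extension of $\tau^{\mathcal{S}}_{A|X}$ and therefore a valid candidate in the right-hand maximum. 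The main obstacle I anticipate is the bookkeeping in step (ii): verifying that the $z$-indexed family of partitions is consistent enough that Lemma \ref{lem:extension} delivers one global non-signalling extension (rather than a separate extension per $z$), and choosing the $\tilde{C}/\tilde{Z}$ alphabets so that both the partition label and the original $C$-output are encoded in the same symbol $\tilde{c}$. Once this combinatorial bookkeeping is handled, the remainder of the argument is a direct transcription of the quantum post-selection proof of \cite{renner2010simplifying} into the non-signalling setting.
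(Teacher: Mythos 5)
Your proposal is correct and follows essentially the same route as the paper's own proof: symmetrize the input using the channel invariances, use Theorem \ref{thm:post-selection} with Lemmas \ref{lem:extension_condition} and \ref{lem:extension} to build, for each $z$, the partition of $\tau^{\mathcal{S}}_{A|X}$ with elements $\bigl((n+1)^{-d}\,\mathrm{P}_{C|Z}(c|z),\,\mathrm{P}^{c_z}_{A|X}\bigr)$ plus a single remainder $\bigl(1-(n+1)^{-d},\,\mathrm{R}_{A|X}\bigr)$, and then drop the non-negative remainder term in the explicit trace-distance formula. The bookkeeping you flag as the main obstacle is exactly what the paper's Lemma \ref{lem:extension} is designed to deliver (one global non-signalling extension indexed by the full set of partitions, with the extra outcome $c'$ adjoined to the alphabet $C'=C\cup\{c'\}$), so no further work is needed there.
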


In order to prove Theorem \ref{thm:diamond-norm-thm} we first prove the following lemma. 
\begin{lem}\label{lem:trace-distance-lem}
	For every two permutation invariant and $\mathcal{S}$ invariant (as in Definition \ref{def:s-invariant}) channels $\mathcal{E},\mathcal{F}:\{\mathrm{P}_{A|X}\}\rightarrow\{\mathrm{P}_K\}$ where $\mathrm{P}_K$ is a probability distribution over $k\in\{0,1\}^t$ for some $t>0$, $\forall \mathrm{P}_{AC|XZ}$,
	\[
		 \| \left(\mathcal{E}-\mathcal{F}\right)\otimes \mathbb{1}(\mathrm{P}_{AC|XZ} )\|_1 \leq (n+1)^d \| \left(\mathcal{E}-\mathcal{F}\right)\otimes \mathbb{1}(\tau^{\mathrm{P}_{AC|XZ}}_{AC|XZ} )\|_1
	\]
	where $\tau^{\mathrm{P}_{AC|XZ}}_{AC|XZ}$ is a non-signalling extension of $\tau^{\mathcal{S}}_{A|X}$ which depends on the specific state $\mathrm{P}_{AC|XZ}$.
\end{lem}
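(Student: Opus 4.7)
The plan is to follow the template of the proof of Lemma~\ref{lem:test-bound}: symmetrize the input, apply the post-selection variant of Theorem~\ref{thm:post-selection} fiberwise in $c$, and then convert the resulting pointwise bounds into a trace-distance bound via Lemma~\ref{lem:extension}.

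First, I would symmetrize the state fiberwise in $c$. Because both $\mathcal{E}$ and $\mathcal{F}$ are permutation invariant and $\mathcal{S}$ invariant, replacing $\mathrm{P}_{A|XC=c}$ by its symmetrization $\bar{\mathrm{P}}_{A|XC=c}$ over permutations $\pi$ and mappings $\mu$ respecting $\mathcal{S}$ does not change $\mathrm{E}_{K|C}(\cdot|c)$ or $\mathrm{F}_{K|C}(\cdot|c)$, by the same convexity/linearity argument as in Equation~\eqref{eq:only-permutations}. The symmetrized $\bar{\mathrm{P}}_{A|XC=c}$ is permutation invariant with symmetry $\mathcal{S}$, so Theorem~\ref{thm:post-selection} applies and yields the pointwise bound
\[
\bar{\mathrm{P}}_{A|XC=c}(a|x) \;\leq\; (n+1)^{d}\,\tau^{\mathcal{S}}_{A|X}(a|x)
\]
for every $a,x,c$.

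Next I would package these pointwise bounds into a single non-signalling extension of $\tau^{\mathcal{S}}_{A|X}$. For each fixed $z$, the family $\{(\mathrm{P}_{C|Z}(c|z)/(n+1)^{d},\,\bar{\mathrm{P}}_{A|XC=c})\}_{c}$ together with one ``remainder'' element carrying the leftover weight $1-1/(n+1)^{d}$ and the residual conditional state forms a partition of $\tau^{\mathcal{S}}_{A|X}$: the dominance condition required by Lemma~\ref{lem:extension_condition} is exactly what the previous display gave us. Lemma~\ref{lem:extension} then assembles these partitions over $z$ into an extension $\tau^{\mathrm{P}_{AC|XZ}}_{AC|XZ}$ which is non-signalling between the $A/X$ and $C/Z$ interfaces; non-signalling from $A/X$ to $C/Z$ is automatic because $\tau_{C|Z}(c|z)=\mathrm{P}_{C|Z}(c|z)/(n+1)^{d}$ carries no $x$ dependence, while non-signalling from $C/Z$ to $A/X$ follows from the non-signalling of $\mathrm{P}_{AC|XZ}$, since $\sum_{c}\mathrm{P}_{C|Z}(c|z)\bar{\mathrm{P}}_{A|XC=c}(a|x)$ is independent of $z$.

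Finally, I expand $\|(\mathcal{E}-\mathcal{F})\otimes\mathbb{1}(\tau^{\mathrm{P}_{AC|XZ}}_{AC|XZ})\|_{1}$ using Equation~\eqref{eq:diamond-norm-explicit}. For each $k$, choose in the outer $\max_{z'}$ the same $z^{\star}(k)$ that attains the maximum on the left-hand side; dropping the non-negative remainder term and keeping only the ``original'' outcomes $c$ in the inner sum can only lower the expression. On those outcomes, $\mathcal{S}$- and permutation-invariance of $\mathcal{E},\mathcal{F}$ identify $\mathcal{E}(\bar{\mathrm{P}}_{A|XC=c})$ with $\mathrm{E}_{K|C}(\cdot|c)$ (and similarly for $\mathcal{F}$), and the factor $1/(n+1)^{d}$ in $\tau_{C|Z}$ pulls out, producing
\[
\|(\mathcal{E}-\mathcal{F})\otimes\mathbb{1}(\tau^{\mathrm{P}_{AC|XZ}}_{AC|XZ})\|_{1} \;\geq\; \frac{1}{(n+1)^{d}}\,\|(\mathcal{E}-\mathcal{F})\otimes\mathbb{1}(\mathrm{P}_{AC|XZ})\|_{1},
\]
which rearranges to the claim. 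The main obstacle I expect is the bookkeeping in step three: verifying that the partitions constructed for different $z$ genuinely assemble into one global non-signalling extension (rather than merely a family of partitions), and that the choice of ``remainder'' elements can be made jointly so that both non-signalling directions survive. Once that is handled, the remaining chain of inequalities is routine.
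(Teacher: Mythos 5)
Your proposal is correct and follows essentially the same route as the paper's proof: symmetrize so that the de Finetti reduction applies, use it together with Lemmas \ref{lem:extension_condition} and \ref{lem:extension} to build an extension $\tau^{\mathrm{P}_{AC|XZ}}_{AC'|XZ}$ of $\tau^{\mathcal{S}}_{A|X}$ with one extra ``remainder'' outcome of weight $1-(n+1)^{-d}$, and then drop that non-negative term in the trace-distance expansion to pull out the factor $(n+1)^{-d}$. The only cosmetic difference is that you apply Theorem \ref{thm:post-selection} fiberwise to each symmetrized conditional $\bar{\mathrm{P}}_{A|XC=c}$, whereas the paper applies it once to the marginal $\mathrm{P}_{A|X}$ and uses $\mathrm{P}_{A|X}=\sum_c p_c\,\mathrm{P}^{c}_{A|X}$ to dominate the whole family simultaneously --- which also settles the ``joint remainder'' worry you raise, since the single state $\mathrm{R}_{A|X}=\bigl(\tau^{\mathcal{S}}_{A|X}-(n+1)^{-d}\mathrm{P}_{A|X}\bigr)/\bigl(1-(n+1)^{-d}\bigr)$ works for every $z$ because $\sum_c p_{c_z}\mathrm{P}^{c_z}_{A|X}$ is independent of $z$ by non-signalling.
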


\begin{proof}
	First, as in the previous proof and Equation \eqref{eq:only-permutations}, since the channels are permutation invariant and $\mathcal{S}$ invariant it is sufficient to consider states $\mathrm{P}_{A|X}$ which are permutation invariant and have the symmetry $\mathcal{S}$. 
	
	Given a specific state $\mathrm{P}_{AC|XZ}$, according to Lemma \ref{lem:extension}, we can see this extension as a set of convex decompositions of $\mathrm{P}_{A|X}$. That is, every possible input $z$ indicates a specific decomposition $\{ ( p_{c_z}, \mathrm{P}^{c_z}_{A|X}) \}_{c_z}$ such that $p_{c_z}=\mathrm{P}_{C|Z}(c_z|z)$ and $\mathrm{P}^{c_z}_{A|X}(a|x)=\mathrm{P}_{AC|XZ}(a,c_z|x,z)$. Since this is a convex decomposition of $\mathrm{P}_{A|X}$ we also have 
	\begin{equation} \label{eq:decomposition}
		\forall z \quad \sum_c p_c \cdot \mathrm{P}^{c}_{A|X} = \mathrm{P}_{A|X} \;.
	\end{equation}
	We now use the set of decompositions of $\mathrm{P}_{A|X}$ to construct a set of decompositions of the de Finetti state $\tau^{\mathcal{S}}_{A|X}$. Combining Theorem \ref{thm:post-selection} with Lemma \ref{lem:extension_condition} and Lemma \ref{lem:extension} we know that there exists a non-signalling state $\mathrm{R}_{A|X}$ such that 
	\[
	\begin{split}
		\tau^{\mathcal{S}}_{A|X} &= \frac{1}{(n+1)^d}\mathrm{P}_{A|X} + \left(1- \frac{1}{(n+1)^d} \right)\mathrm{R}_{A|X} \\
		&=\frac{1}{(n+1)^d}\sum_c p_c \cdot \mathrm{P}^{c}_{A|X} + \left(1- \frac{1}{(n+1)^d} \right)\mathrm{R}_{A|X} \;.
	\end{split}
	\]
	where the second equality is due to Equation \eqref{eq:decomposition}. For every $z$ this defines a decomposition $\{ ( \frac{1}{(n+1)^d} \cdot p_{c_z}, \mathrm{P}^{c_z}_{A|X} ) \}_{c_z} \cup \{ (1- \frac{1}{(n+1)^d}, \mathrm{R}_{A|X}) \} $ of $\tau^{\mathcal{S}}_{A|X}$. That is, this defines an extension $\tau^{\mathrm{P}_{AC|XZ}}_{AC'|XZ}$ of  $\tau^{\mathcal{S}}_{A|X}$ where $C'=C\cup \{c'\}$. 
	
	This connection between the extensions $\mathrm{P}_{AC|XZ}$ and $\tau^{\mathrm{P}_{AC|XZ}}_{AC'|XZ}$ allow us to get the bound on the trace distance and prove the lemma:
	\begin{widetext}
	
	\[
	\begin{split}
		\| \left(\mathcal{E}-\mathcal{F}\right)\otimes \mathbb{1}(\tau^{\mathrm{P}_{AC|XZ}}_{AC'|XZ} )\|_1 &=\frac{1}{2} \sum_k \underset{z}{\mathrm{max}} \sum_{c\in C'} \tau^{\mathrm{P}_{AC|XZ}}_{C'|Z}(c|z) \times  \\
		&  \qquad \times \left| \sum_x \mathrm{Pr}_{\mathcal{E}}(x) \sum_{a | \mathcal{E}(a,x)=k} \tau^{\mathrm{P}_{AC|XZ}}_{A|XC'}(a|xc)  -\sum_x \mathrm{Pr}_{\mathcal{F}}(x) \sum_{a | \mathcal{F}(a,x)=k} \tau^{\mathrm{P}_{AC|XZ}}_{A|XC'}(a|xc)  \right|  \\
		&=\frac{1}{2} \sum_k \underset{z}{\mathrm{max}} \left[ \sum_{c\in C} \tau^{\mathrm{P}_{AC|XZ}}_{C'|Z}(c|z) \times \right.  \\
		&  \qquad \times \left| \sum_x \mathrm{Pr}_{\mathcal{E}}(x) \sum_{a | \mathcal{E}(a,x)=k} \tau^{\mathrm{P}_{AC|XZ}}_{A|XC'}(a|xc)  -\sum_x \mathrm{Pr}_{\mathcal{F}}(x) \sum_{a | \mathcal{F}(a,x)=k} \tau^{\mathrm{P}_{AC|XZ}}_{A|XC'}(a|xc)  \right|  + \\
		& \qquad + \left(1- \frac{1}{(n+1)^d} \right) \times \\
		&  \left. \qquad \times \left| \sum_x \mathrm{Pr}_{\mathcal{E}}(x) \sum_{a | \mathcal{E}(a,x)=k} \tau^{\mathrm{P}_{AC|XZ}}_{A|XC'}(a|xc')  -\sum_x \mathrm{Pr}_{\mathcal{F}}(x) \sum_{a | \mathcal{F}(a,x)=k} \tau^{\mathrm{P}_{AC|XZ}}_{A|XC'}(a|xc')  \right| \right]  \\
		&\geq \frac{1}{2} \sum_k \underset{z}{\mathrm{max}}  \sum_{c\in C} \tau^{\mathrm{P}_{AC|XZ}}_{C'|Z}(c|z) \times   \\
		&  \qquad \times \left| \sum_x \mathrm{Pr}_{\mathcal{E}}(x) \sum_{a | \mathcal{E}(a,x)=k} \tau^{\mathrm{P}_{AC|XZ}}_{A|XC'}(a|xc)  -\sum_x \mathrm{Pr}_{\mathcal{F}}(x) \sum_{a | \mathcal{F}(a,x)=k} \tau^{\mathrm{P}_{AC|XZ}}_{A|XC'}(a|xc)  \right| 
	\end{split}
	\]
	\begin{equation}\label{eq:trace-distance-formula}
	\begin{split}
		&=\frac{1}{2} \sum_k \underset{z}{\mathrm{max}}  \sum_{c\in C} \frac{1}{(n+1)^d}\cdot \mathrm{P}_{C|Z}(c|z) \times   \\
		&  \qquad \times \left| \sum_x \mathrm{Pr}_{\mathcal{E}}(x) \sum_{a | \mathcal{E}(a,x)=k} \mathrm{P}_{A|XC}(a|xc)  -\sum_x \mathrm{Pr}_{\mathcal{F}}(x) \sum_{a | \mathcal{F}(a,x)=k} \mathrm{P}_{A|XC}(a|xc)  \right|  \\
		&= \frac{1}{(n+1)^d} \| \left(\mathcal{E}-\mathcal{F}\right)\otimes \mathbb{1}(\mathrm{P}_{AC|XZ} )\|_1 \;.
	\end{split}
	\end{equation}
	where in order to get the second equality we divide the sum over $C'=C\cup \{c'\}$ to the sum over $C$ and then additional part of the partition $c'$. The next inequality is then correct since
	\[
		 \left(1- \frac{1}{(n+1)^d} \right) \left| \sum_x \mathrm{Pr}_{\mathcal{E}}(x) \sum_{a | \mathcal{E}(a,x)=k} \tau^{\mathrm{P}_{AC|XZ}}_{A|XC}(a|xc')  -\sum_x \mathrm{Pr}_{\mathcal{F}}(x) \sum_{a | \mathcal{F}(a,x)=k} \tau^{\mathrm{P}_{AC|XZ}}_{A|XC}(a|xc')  \right| \geq 0
	\]
	 and the two last equalities are due to the specific decomposition of $\tau^{\mathcal{S}}_{A|X}$ that we defined and the definition of the trace distance. The lemma then follows from Equation \eqref{eq:trace-distance-formula}. \qedhere
\end{widetext}
\end{proof}

Theorem \ref{thm:diamond-norm-thm} now easily follows from Lemma \ref{lem:trace-distance-lem}:
\begin{proof}[Proof of Theorem \ref{thm:diamond-norm-thm}]
	Using Lemma \ref{lem:trace-distance-lem}, 
	\[
	\begin{split}
		\| \mathcal{E} - \mathcal{F} \|_{\diamond} &= \underset{\mathrm{P}_{AC|XZ}}{\mathrm{max}}\| \left(\mathcal{E}-\mathcal{F}\right)\otimes \mathbb{1}(\mathrm{P}_{AC|XZ} )\|_1\\
		&\leq (n+1)^d \underset{\tau^{\mathrm{P}_{AC|XZ}}_{AC'|XZ}}{\mathrm{max}}\| \left(\mathcal{E}-\mathcal{F}\right)\otimes \mathbb{1}(\tau^{\mathrm{P}_{AC|XZ}}_{AC'|XZ})\|_1\\
		&\leq (n+1)^d \underset{\tau_{AC|XZ}}{\mathrm{max}}\| \left(\mathcal{E}-\mathcal{F}\right)\otimes \mathbb{1}(\tau_{AC|XZ})\|_1\\
	\end{split}
	\]
	where $\tau_{AC|XZ}$ is a non-signalling extension of $\tau^{\mathcal{S}}_{A|X}$. 
\end{proof}

Theorem \ref{thm:diamond-norm-thm} can be applied to simplify the analysis of device independent cryptography. In particular, when considering protocols which are based on the CHSH symmetry we can use the following Corollary.

\begin{cor}\label{cor:chsh-diamond-bound}
	For any two permutation invariant and CHSH invariant channels $\mathcal{E},\mathcal{F}:\{\mathrm{P}_{AB|XY}\}\rightarrow\{\mathrm{P}_K\}$
	\[
		\| \mathcal{E}-\mathcal{F}\|_{\diamond} \leq (n+1) \underset{\tau^{\mathcal{CHSH}}_{ABC|XYZ}}{\mathrm{max}}\| \left(\mathcal{E}-\mathcal{F}\right)\otimes \mathbb{1}(\tau^{\mathcal{CHSH}}_{ABC|XYZ} )\|_1
	\]
	where $\tau^{\mathcal{CHSH}}_{ABC|XYZ}$ is a non-signalling extension of the de Finetti state $\tau^{\mathcal{CHSH}}_{AB|XY}$. 
\end{cor}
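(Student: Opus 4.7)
The plan is to obtain Corollary~\ref{cor:chsh-diamond-bound} as a direct specialization of Theorem~\ref{thm:diamond-norm-thm} to the CHSH symmetry class. Since the hypotheses of the theorem already match the hypotheses of the corollary, namely that $\mathcal{E},\mathcal{F}$ are permutation invariant and $\mathcal{S}$-invariant channels (with $\mathcal{S}=\mathcal{CHSH}$ in Definition~\ref{def:chsh-symmetry}, and the permutation acting as in the footnote of Corollary~\ref{cor:chsh-post-selection}), the only ingredient I still need is the value of the parameter $d$ for this symmetry class.

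First I would invoke Theorem~\ref{thm:diamond-norm-thm} directly, which yields
\[
\| \mathcal{E}-\mathcal{F}\|_{\diamond} \leq (n+1)^d \underset{\tau^{\mathcal{S}}_{ABC|XYZ}}{\mathrm{max}}\| \left(\mathcal{E}-\mathcal{F}\right)\otimes \mathbb{1}(\tau^{\mathcal{S}}_{ABC|XYZ} )\|_1 ,
\]
where $\tau^{\mathcal{S}}_{ABC|XYZ}$ is a non-signalling extension of the de Finetti state $\tau^{\mathcal{S}}_{AB|XY}$ associated with the symmetry $\mathcal{S}$. Then I would note that for the CHSH symmetry the degrees-of-freedom count is $d=1$: this is precisely the observation used in the derivation of Corollary~\ref{cor:chsh-post-selection} from Theorem~\ref{thm:post-selection} in Section~\ref{sec:cor_derivation}, where the basic 1-party state $Q_{A_1B_1|X_1Y_1}$ of Figure~\ref{fig:CHSH_symmetry} is parameterized by the single scalar $p\in [0,\tfrac{1}{2}]$. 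The de Finetti state constructed from this $Q$ in Appendix~\ref{sec:chsh-proof} is exactly $\tau^{\mathcal{CHSH}}_{AB|XY}$, so the generic $\tau^{\mathcal{S}}_{AB|XY}$ appearing in Theorem~\ref{thm:diamond-norm-thm} specializes to $\tau^{\mathcal{CHSH}}_{AB|XY}$, and its non-signalling extension to $\tau^{\mathcal{CHSH}}_{ABC|XYZ}$.

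Substituting $d=1$ and this identification of the de Finetti state into the bound from Theorem~\ref{thm:diamond-norm-thm} produces exactly
\[
\| \mathcal{E}-\mathcal{F}\|_{\diamond} \leq (n+1) \underset{\tau^{\mathcal{CHSH}}_{ABC|XYZ}}{\mathrm{max}}\| \left(\mathcal{E}-\mathcal{F}\right)\otimes \mathbb{1}(\tau^{\mathcal{CHSH}}_{ABC|XYZ} )\|_1,
\]
which is the statement of the corollary. There is no real obstacle here beyond bookkeeping; if anything, the only subtle point is checking that the notion of \emph{permutation} used in Theorem~\ref{thm:diamond-norm-thm} (acting on 1-party subsystems) matches the bipartite permutation used in the CHSH setting, but this was already handled when Corollary~\ref{cor:chsh-post-selection} was derived from Theorem~\ref{thm:post-selection}, so the same reduction carries over verbatim to the diamond-norm version.
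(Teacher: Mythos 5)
Your proposal is correct and matches the paper's intended derivation: the corollary is obtained exactly as you describe, by specializing Theorem~\ref{thm:diamond-norm-thm} to the CHSH symmetry, using $d=1$ (the single parameter $p$ of the state $Q_{A_1B_1|X_1Y_1}$ in Figure~\ref{fig:CHSH_symmetry}, as in the derivation of Corollary~\ref{cor:chsh-post-selection}) and identifying $\tau^{\mathcal{S}}_{AB|XY}$ with $\tau^{\mathcal{CHSH}}_{AB|XY}$ and its non-signalling extension with $\tau^{\mathcal{CHSH}}_{ABC|XYZ}$. Your remark about the bipartite action of the permutations is the only point needing care, and you resolve it the same way the paper does.
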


Corollary \ref{cor:chsh-diamond-bound} implies that when proving security of cryptographic protocols based on the CHSH inequality it is sufficient to consider the case where Alice and Bob share the de Finetti state $\tau^{\mathcal{CHSH}}_{AB|XY}$. However, one still needs to take into account all possible non-signalling extensions of this bipartite state to a tripartite state $\tau^{\mathcal{CHSH}}_{ABC|XYZ}$ that includes the adversary, as can be seen from the maximisation over $\tau^{\mathcal{CHSH}}_{ABC|XYZ}$ . These type of proofs can be done, as for example in \cite{hanggi2009quantum}.

We further emphasise that this does not imply that Alice and Bob's state is a convex combination of i.i.d.\ states when including the adversary's knowledge, but only from Alice and Bob's point of view. This is in contrast to the stronger result achieved by the quantum post selection theorem \cite{christandl2009postselection}. However, due to the no-go theorems given in  \cite{hanggi2010impossibility,arnon2012limits} we know that such a stronger result is not possible in the more general scenario that we consider here.

Two additional remarks are in order. First, to use this corollary we must consider protocols which are invariant under the CHSH symmetry (and therefore the channel describing them will also be invariant under the relevant mappings). Fortunately, this invariance can be ensured by performing an additional step in the beginning of the protocol, called depolarisation \cite{Masanes2008depolarization}. The depolarisation procedure will not affect the correctness of the protocol and will make it invariant under the appropriate mappings $\mu$. Such depolarisation procedures can also be constructed for other types of protocols such as protocols which are based on the chained Bell inequalities.  

Second, note that since the state $\tau^{\mathcal{CHSH}}_{AB|XY}$ is not quantum but non-signalling, this result cannot be applied in a trivial manner to proofs where it is assumed that Alice and Bob's statistics is restricted by quantum theory.

\bibliography{refs.bib}

\begin{thebibliography}{28}%
\makeatletter
\providecommand \@ifxundefined [1]{%
 \@ifx{#1\undefined}
}%
\providecommand \@ifnum [1]{%
 \ifnum #1\expandafter \@firstoftwo
 \else \expandafter \@secondoftwo
 \fi
}%
\providecommand \@ifx [1]{%
 \ifx #1\expandafter \@firstoftwo
 \else \expandafter \@secondoftwo
 \fi
}%
\providecommand \natexlab [1]{#1}%
\providecommand \enquote  [1]{``#1''}%
\providecommand \bibnamefont  [1]{#1}%
\providecommand \bibfnamefont [1]{#1}%
\providecommand \citenamefont [1]{#1}%
\providecommand \href@noop [0]{\@secondoftwo}%
\providecommand \href [0]{\begingroup \@sanitize@url \@href}%
\providecommand \@href[1]{\@@startlink{#1}\@@href}%
\providecommand \@@href[1]{\endgroup#1\@@endlink}%
\providecommand \@sanitize@url [0]{\catcode `\\12\catcode `\$12\catcode
  `\&12\catcode `\#12\catcode `\^12\catcode `\_12\catcode `\%12\relax}%
\providecommand \@@startlink[1]{}%
\providecommand \@@endlink[0]{}%
\providecommand \url  [0]{\begingroup\@sanitize@url \@url }%
\providecommand \@url [1]{\endgroup\@href {#1}{\urlprefix }}%
\providecommand \urlprefix  [0]{URL }%
\providecommand \Eprint [0]{\href }%
\providecommand \doibase [0]{http://dx.doi.org/}%
\providecommand \selectlanguage [0]{\@gobble}%
\providecommand \bibinfo  [0]{\@secondoftwo}%
\providecommand \bibfield  [0]{\@secondoftwo}%
\providecommand \translation [1]{[#1]}%
\providecommand \BibitemOpen [0]{}%
\providecommand \bibitemStop [0]{}%
\providecommand \bibitemNoStop [0]{.\EOS\space}%
\providecommand \EOS [0]{\spacefactor3000\relax}%
\providecommand \BibitemShut  [1]{\csname bibitem#1\endcsname}%
\let\auto@bib@innerbib\@empty
\bibitem [{\citenamefont {Hudson}\ and\ \citenamefont
  {Moody}(1976)}]{hudson1976locally}%
  \BibitemOpen
  \bibfield  {author} {\bibinfo {author} {\bibfnamefont {R.}~\bibnamefont
  {Hudson}}\ and\ \bibinfo {author} {\bibfnamefont {G.}~\bibnamefont {Moody}},\
  }\href@noop {} {\bibfield  {journal} {\bibinfo  {journal} {Probability Theory
  and Related Fields}\ }\textbf {\bibinfo {volume} {33}},\ \bibinfo {pages}
  {343} (\bibinfo {year} {1976})}\BibitemShut {NoStop}%
\bibitem [{\citenamefont {Raggio}\ and\ \citenamefont
  {Werner}(1989)}]{raggio1989quantum}%
  \BibitemOpen
  \bibfield  {author} {\bibinfo {author} {\bibfnamefont {G.}~\bibnamefont
  {Raggio}}\ and\ \bibinfo {author} {\bibfnamefont {R.}~\bibnamefont
  {Werner}},\ }\href@noop {} {\bibfield  {journal} {\bibinfo  {journal}
  {Helvetica Physica Acta}\ }\textbf {\bibinfo {volume} {62}},\ \bibinfo
  {pages} {980} (\bibinfo {year} {1989})}\BibitemShut {NoStop}%
\bibitem [{\citenamefont {Caves}\ \emph {et~al.}(2002)\citenamefont {Caves},
  \citenamefont {Fuchs},\ and\ \citenamefont {Schack}}]{caves2002unknown}%
  \BibitemOpen
  \bibfield  {author} {\bibinfo {author} {\bibfnamefont {C.~M.}\ \bibnamefont
  {Caves}}, \bibinfo {author} {\bibfnamefont {C.~A.}\ \bibnamefont {Fuchs}}, \
  and\ \bibinfo {author} {\bibfnamefont {R.}~\bibnamefont {Schack}},\
  }\href@noop {} {\bibfield  {journal} {\bibinfo  {journal} {Journal of
  Mathematical Physics}\ }\textbf {\bibinfo {volume} {43}},\ \bibinfo {pages}
  {4537} (\bibinfo {year} {2002})}\BibitemShut {NoStop}%
\bibitem [{\citenamefont {Renner}(2007)}]{renner2007symmetry}%
  \BibitemOpen
  \bibfield  {author} {\bibinfo {author} {\bibfnamefont {R.}~\bibnamefont
  {Renner}},\ }\href@noop {} {\bibfield  {journal} {\bibinfo  {journal} {Nature
  Physics}\ }\textbf {\bibinfo {volume} {3}},\ \bibinfo {pages} {645} (\bibinfo
  {year} {2007})}\BibitemShut {NoStop}%
\bibitem [{\citenamefont {Christandl}\ \emph {et~al.}(2009)\citenamefont
  {Christandl}, \citenamefont {K{\"o}nig},\ and\ \citenamefont
  {Renner}}]{christandl2009postselection}%
  \BibitemOpen
  \bibfield  {author} {\bibinfo {author} {\bibfnamefont {M.}~\bibnamefont
  {Christandl}}, \bibinfo {author} {\bibfnamefont {R.}~\bibnamefont
  {K{\"o}nig}}, \ and\ \bibinfo {author} {\bibfnamefont {R.}~\bibnamefont
  {Renner}},\ }\href@noop {} {\bibfield  {journal} {\bibinfo  {journal}
  {Physical Review Letters}\ }\textbf {\bibinfo {volume} {102}},\ \bibinfo
  {pages} {020504} (\bibinfo {year} {2009})}\BibitemShut {NoStop}%
\bibitem [{\citenamefont {Christandl}\ and\ \citenamefont
  {Renner}(2012)}]{christandl2012reliable}%
  \BibitemOpen
  \bibfield  {author} {\bibinfo {author} {\bibfnamefont {M.}~\bibnamefont
  {Christandl}}\ and\ \bibinfo {author} {\bibfnamefont {R.}~\bibnamefont
  {Renner}},\ }\href@noop {} {\bibfield  {journal} {\bibinfo  {journal}
  {Physical Review Letters}\ }\textbf {\bibinfo {volume} {109}},\ \bibinfo
  {pages} {120403} (\bibinfo {year} {2012})}\BibitemShut {NoStop}%
\bibitem [{\citenamefont {Berta}\ \emph {et~al.}(2011)\citenamefont {Berta},
  \citenamefont {Christandl},\ and\ \citenamefont {Renner}}]{berta2011reverse}%
  \BibitemOpen
  \bibfield  {author} {\bibinfo {author} {\bibfnamefont {M.}~\bibnamefont
  {Berta}}, \bibinfo {author} {\bibfnamefont {M.}~\bibnamefont {Christandl}}, \
  and\ \bibinfo {author} {\bibfnamefont {R.}~\bibnamefont {Renner}},\
  }\href@noop {} {\bibfield  {journal} {\bibinfo  {journal} {Communications in
  Mathematical Physics}\ }\textbf {\bibinfo {volume} {306}},\ \bibinfo {pages}
  {579} (\bibinfo {year} {2011})}\BibitemShut {NoStop}%
\bibitem [{\citenamefont {Scarani}(2012)}]{scarani2012device}%
  \BibitemOpen
  \bibfield  {author} {\bibinfo {author} {\bibfnamefont {V.}~\bibnamefont
  {Scarani}},\ }\href@noop {} {\bibfield  {journal} {\bibinfo  {journal} {Acta
  Physica Slovaca}\ }\textbf {\bibinfo {volume} {62}},\ \bibinfo {pages} {347}
  (\bibinfo {year} {2012})}\BibitemShut {NoStop}%
\bibitem [{\citenamefont {Brunner}\ \emph {et~al.}(2014)\citenamefont
  {Brunner}, \citenamefont {Cavalcanti}, \citenamefont {Pironio}, \citenamefont
  {Scarani},\ and\ \citenamefont {Wehner}}]{brunner2013bell}%
  \BibitemOpen
  \bibfield  {author} {\bibinfo {author} {\bibfnamefont {N.}~\bibnamefont
  {Brunner}}, \bibinfo {author} {\bibfnamefont {D.}~\bibnamefont {Cavalcanti}},
  \bibinfo {author} {\bibfnamefont {S.}~\bibnamefont {Pironio}}, \bibinfo
  {author} {\bibfnamefont {V.}~\bibnamefont {Scarani}}, \ and\ \bibinfo
  {author} {\bibfnamefont {S.}~\bibnamefont {Wehner}},\ }\href {\doibase
  10.1103/RevModPhys.86.419} {\bibfield  {journal} {\bibinfo  {journal} {Rev.
  Mod. Phys.}\ }\textbf {\bibinfo {volume} {86}},\ \bibinfo {pages} {419}
  (\bibinfo {year} {2014})}\BibitemShut {NoStop}%
\bibitem [{\citenamefont {Mayers}\ and\ \citenamefont
  {Yao}(1998)}]{mayers1998quantum}%
  \BibitemOpen
  \bibfield  {author} {\bibinfo {author} {\bibfnamefont {D.}~\bibnamefont
  {Mayers}}\ and\ \bibinfo {author} {\bibfnamefont {A.}~\bibnamefont {Yao}},\
  }in\ \href@noop {} {\emph {\bibinfo {booktitle} {Proceedings 39th Annual
  Symposium on Foundations of Computer Science}}}\ (\bibinfo {organization}
  {IEEE},\ \bibinfo {year} {1998})\ pp.\ \bibinfo {pages}
  {503--509}\BibitemShut {NoStop}%
\bibitem [{\citenamefont {Pironio}\ \emph {et~al.}(2009)\citenamefont
  {Pironio}, \citenamefont {Acin}, \citenamefont {Brunner}, \citenamefont
  {Gisin}, \citenamefont {Massar},\ and\ \citenamefont
  {Scarani}}]{pironio2009device}%
  \BibitemOpen
  \bibfield  {author} {\bibinfo {author} {\bibfnamefont {S.}~\bibnamefont
  {Pironio}}, \bibinfo {author} {\bibfnamefont {A.}~\bibnamefont {Acin}},
  \bibinfo {author} {\bibfnamefont {N.}~\bibnamefont {Brunner}}, \bibinfo
  {author} {\bibfnamefont {N.}~\bibnamefont {Gisin}}, \bibinfo {author}
  {\bibfnamefont {S.}~\bibnamefont {Massar}}, \ and\ \bibinfo {author}
  {\bibfnamefont {V.}~\bibnamefont {Scarani}},\ }\href@noop {} {\bibfield
  {journal} {\bibinfo  {journal} {New Journal of Physics}\ }\textbf {\bibinfo
  {volume} {11}},\ \bibinfo {pages} {045021} (\bibinfo {year}
  {2009})}\BibitemShut {NoStop}%
\bibitem [{\citenamefont {Barrett}\ \emph {et~al.}(2005)\citenamefont
  {Barrett}, \citenamefont {Hardy},\ and\ \citenamefont
  {Kent}}]{barrett2005no}%
  \BibitemOpen
  \bibfield  {author} {\bibinfo {author} {\bibfnamefont {J.}~\bibnamefont
  {Barrett}}, \bibinfo {author} {\bibfnamefont {L.}~\bibnamefont {Hardy}}, \
  and\ \bibinfo {author} {\bibfnamefont {A.}~\bibnamefont {Kent}},\ }\href@noop
  {} {\bibfield  {journal} {\bibinfo  {journal} {Physical Review Letters}\
  }\textbf {\bibinfo {volume} {95}},\ \bibinfo {pages} {10503} (\bibinfo {year}
  {2005})}\BibitemShut {NoStop}%
\bibitem [{\citenamefont {H{\"a}nggi}\ \emph {et~al.}(2010)\citenamefont
  {H{\"a}nggi}, \citenamefont {Renner},\ and\ \citenamefont
  {Wolf}}]{hanggi2009quantum}%
  \BibitemOpen
  \bibfield  {author} {\bibinfo {author} {\bibfnamefont {E.}~\bibnamefont
  {H{\"a}nggi}}, \bibinfo {author} {\bibfnamefont {R.}~\bibnamefont {Renner}},
  \ and\ \bibinfo {author} {\bibfnamefont {S.}~\bibnamefont {Wolf}},\ }in\
  \href@noop {} {\emph {\bibinfo {booktitle} {Advances in Cryptology--EUROCRYPT
  2010}}}\ (\bibinfo  {publisher} {Springer},\ \bibinfo {year} {2010})\ pp.\
  \bibinfo {pages} {216--234}\BibitemShut {NoStop}%
\bibitem [{\citenamefont {Popescu}\ and\ \citenamefont
  {Rohrlich}(1994)}]{PR-box}%
  \BibitemOpen
  \bibfield  {author} {\bibinfo {author} {\bibfnamefont {S.}~\bibnamefont
  {Popescu}}\ and\ \bibinfo {author} {\bibfnamefont {D.}~\bibnamefont
  {Rohrlich}},\ }\href@noop {} {\bibfield  {journal} {\bibinfo  {journal}
  {Foundations of Physics}\ }\textbf {\bibinfo {volume} {24}},\ \bibinfo
  {pages} {379} (\bibinfo {year} {1994})}\BibitemShut {NoStop}%
\bibitem [{\citenamefont {Barrett}\ and\ \citenamefont
  {Leifer}(2009)}]{barrett2009finetti}%
  \BibitemOpen
  \bibfield  {author} {\bibinfo {author} {\bibfnamefont {J.}~\bibnamefont
  {Barrett}}\ and\ \bibinfo {author} {\bibfnamefont {M.}~\bibnamefont
  {Leifer}},\ }\href@noop {} {\bibfield  {journal} {\bibinfo  {journal} {New
  Journal of Physics}\ }\textbf {\bibinfo {volume} {11}},\ \bibinfo {pages}
  {033024} (\bibinfo {year} {2009})}\BibitemShut {NoStop}%
\bibitem [{\citenamefont {Christandl}\ and\ \citenamefont
  {Toner}(2009)}]{christandl2009finite}%
  \BibitemOpen
  \bibfield  {author} {\bibinfo {author} {\bibfnamefont {M.}~\bibnamefont
  {Christandl}}\ and\ \bibinfo {author} {\bibfnamefont {B.}~\bibnamefont
  {Toner}},\ }\href@noop {} {\bibfield  {journal} {\bibinfo  {journal} {Journal
  of Mathematical Physics}\ }\textbf {\bibinfo {volume} {50}},\ \bibinfo
  {pages} {042104} (\bibinfo {year} {2009})}\BibitemShut {NoStop}%
\bibitem [{\citenamefont {Brandao}\ and\ \citenamefont
  {Harrow}(2013)}]{brandao2012quantum}%
  \BibitemOpen
  \bibfield  {author} {\bibinfo {author} {\bibfnamefont {F.~G.}\ \bibnamefont
  {Brandao}}\ and\ \bibinfo {author} {\bibfnamefont {A.~W.}\ \bibnamefont
  {Harrow}},\ }\bibfield  {booktitle} {\emph {\bibinfo {booktitle} {Proceedings
  of the forty-fifth annual ACM symposium on Theory of computing}},\
  }\href@noop {} {\ ,\ \bibinfo {pages} {861} (\bibinfo {year}
  {2013})}\BibitemShut {NoStop}%
\bibitem [{\citenamefont {Clauser}\ \emph {et~al.}(1969)\citenamefont
  {Clauser}, \citenamefont {Horne}, \citenamefont {Shimony},\ and\
  \citenamefont {Holt}}]{CHSH}%
  \BibitemOpen
  \bibfield  {author} {\bibinfo {author} {\bibfnamefont {J.~F.}\ \bibnamefont
  {Clauser}}, \bibinfo {author} {\bibfnamefont {M.~A.}\ \bibnamefont {Horne}},
  \bibinfo {author} {\bibfnamefont {A.}~\bibnamefont {Shimony}}, \ and\
  \bibinfo {author} {\bibfnamefont {R.~A.}\ \bibnamefont {Holt}},\ }\href@noop
  {} {\bibfield  {journal} {\bibinfo  {journal} {Physical Review Letters}\
  }\textbf {\bibinfo {volume} {23}},\ \bibinfo {pages} {880} (\bibinfo {year}
  {1969})}\BibitemShut {NoStop}%
\bibitem [{\citenamefont {Braunstein}\ and\ \citenamefont
  {Caves}(1990)}]{braunstein1990wringing}%
  \BibitemOpen
  \bibfield  {author} {\bibinfo {author} {\bibfnamefont {S.}~\bibnamefont
  {Braunstein}}\ and\ \bibinfo {author} {\bibfnamefont {C.}~\bibnamefont
  {Caves}},\ }\href@noop {} {\bibfield  {journal} {\bibinfo  {journal} {Annals
  of Physics}\ }\textbf {\bibinfo {volume} {202}},\ \bibinfo {pages} {22}
  (\bibinfo {year} {1990})}\BibitemShut {NoStop}%
\bibitem [{\citenamefont {Barrett}\ \emph {et~al.}(2006)\citenamefont
  {Barrett}, \citenamefont {Kent},\ and\ \citenamefont
  {Pironio}}]{Barrett2006Maximally}%
  \BibitemOpen
  \bibfield  {author} {\bibinfo {author} {\bibfnamefont {J.}~\bibnamefont
  {Barrett}}, \bibinfo {author} {\bibfnamefont {A.}~\bibnamefont {Kent}}, \
  and\ \bibinfo {author} {\bibfnamefont {S.}~\bibnamefont {Pironio}},\ }\href
  {\doibase 10.1103/PhysRevLett.97.170409} {\bibfield  {journal} {\bibinfo
  {journal} {Physical Review Letters}\ }\textbf {\bibinfo {volume} {97}},\
  \bibinfo {pages} {170409} (\bibinfo {year} {2006})}\BibitemShut {NoStop}%
\bibitem [{\citenamefont {Arnon-Friedman}\ \emph {et~al.}(2014)\citenamefont
  {Arnon-Friedman}, \citenamefont {Renner},\ and\ \citenamefont
  {Vidick}}]{arnon2014nonsignalling}%
  \BibitemOpen
  \bibfield  {author} {\bibinfo {author} {\bibfnamefont {R.}~\bibnamefont
  {Arnon-Friedman}}, \bibinfo {author} {\bibfnamefont {R.}~\bibnamefont
  {Renner}}, \ and\ \bibinfo {author} {\bibfnamefont {T.}~\bibnamefont
  {Vidick}},\ }\href@noop {} {\bibfield  {journal} {\bibinfo  {journal} {arXiv
  preprint arXiv:1411.1582}\ } (\bibinfo {year} {2014})}\BibitemShut {NoStop}%
\bibitem [{\citenamefont {Masanes}(2009)}]{masanes2009universally}%
  \BibitemOpen
  \bibfield  {author} {\bibinfo {author} {\bibfnamefont {L.}~\bibnamefont
  {Masanes}},\ }\href@noop {} {\bibfield  {journal} {\bibinfo  {journal}
  {Physical Review Letters}\ }\textbf {\bibinfo {volume} {102}},\ \bibinfo
  {pages} {140501} (\bibinfo {year} {2009})}\BibitemShut {NoStop}%
\bibitem [{\citenamefont {Kitaev}(1997)}]{kitaev1997quantum}%
  \BibitemOpen
  \bibfield  {author} {\bibinfo {author} {\bibfnamefont {A.~Y.}\ \bibnamefont
  {Kitaev}},\ }\href@noop {} {\bibfield  {journal} {\bibinfo  {journal}
  {Russian Mathematical Surveys}\ }\textbf {\bibinfo {volume} {52}},\ \bibinfo
  {pages} {1191} (\bibinfo {year} {1997})}\BibitemShut {NoStop}%
\bibitem [{\citenamefont {H{\"a}nggi}(2010)}]{hanggi2010device}%
  \BibitemOpen
  \bibfield  {author} {\bibinfo {author} {\bibfnamefont {E.}~\bibnamefont
  {H{\"a}nggi}},\ }\href@noop {} {\bibfield  {journal} {\bibinfo  {journal}
  {arXiv preprint arXiv:1012.3878}\ } (\bibinfo {year} {2010})}\BibitemShut
  {NoStop}%
\bibitem [{\citenamefont {Renner}(2010)}]{renner2010simplifying}%
  \BibitemOpen
  \bibfield  {author} {\bibinfo {author} {\bibfnamefont {R.}~\bibnamefont
  {Renner}},\ }in\ \href@noop {} {\emph {\bibinfo {booktitle} {NATO Advanced
  Research Workshop Quantum Cryptography and Computing: Theory and
  Implementation}}},\ Vol.~\bibinfo {volume} {26}\ (\bibinfo  {publisher} {IOS
  Press},\ \bibinfo {year} {2010})\ pp.\ \bibinfo {pages} {66--75}\BibitemShut
  {NoStop}%
\bibitem [{\citenamefont {H{\"a}nggi}\ \emph {et~al.}(2013)\citenamefont
  {H{\"a}nggi}, \citenamefont {Renner},\ and\ \citenamefont
  {Wolf}}]{hanggi2010impossibility}%
  \BibitemOpen
  \bibfield  {author} {\bibinfo {author} {\bibfnamefont {E.}~\bibnamefont
  {H{\"a}nggi}}, \bibinfo {author} {\bibfnamefont {R.}~\bibnamefont {Renner}},
  \ and\ \bibinfo {author} {\bibfnamefont {S.}~\bibnamefont {Wolf}},\
  }\href@noop {} {\bibfield  {journal} {\bibinfo  {journal} {Theoretical
  Computer Science}\ }\textbf {\bibinfo {volume} {486}},\ \bibinfo {pages} {27}
  (\bibinfo {year} {2013})}\BibitemShut {NoStop}%
\bibitem [{\citenamefont {Arnon-Friedman}\ and\ \citenamefont
  {Ta-Shma}(2012)}]{arnon2012limits}%
  \BibitemOpen
  \bibfield  {author} {\bibinfo {author} {\bibfnamefont {R.}~\bibnamefont
  {Arnon-Friedman}}\ and\ \bibinfo {author} {\bibfnamefont {A.}~\bibnamefont
  {Ta-Shma}},\ }\href@noop {} {\bibfield  {journal} {\bibinfo  {journal}
  {Physical Review A}\ }\textbf {\bibinfo {volume} {86}},\ \bibinfo {pages}
  {062333} (\bibinfo {year} {2012})}\BibitemShut {NoStop}%
\bibitem [{\citenamefont {Masanes}\ \emph {et~al.}(2014)\citenamefont
  {Masanes}, \citenamefont {Renner}, \citenamefont {Christandl}, \citenamefont
  {Winter},\ and\ \citenamefont {Barrett}}]{Masanes2008depolarization}%
  \BibitemOpen
  \bibfield  {author} {\bibinfo {author} {\bibfnamefont {L.}~\bibnamefont
  {Masanes}}, \bibinfo {author} {\bibfnamefont {R.}~\bibnamefont {Renner}},
  \bibinfo {author} {\bibfnamefont {M.}~\bibnamefont {Christandl}}, \bibinfo
  {author} {\bibfnamefont {A.}~\bibnamefont {Winter}}, \ and\ \bibinfo {author}
  {\bibfnamefont {J.}~\bibnamefont {Barrett}},\ }\href@noop {} {\bibfield
  {journal} {\bibinfo  {journal} {Information Theory, IEEE Transactions on}\
  }\textbf {\bibinfo {volume} {60}},\ \bibinfo {pages} {4973} (\bibinfo {year}
  {2014})}\BibitemShut {NoStop}%
\end{thebibliography}%

\end{document}